\newcommand{\cc}{\mathrm{\Lambda}}
\newcommand{\CA}{\mathcal A}
\newcommand{\CC}{\mathcal C}
\newcommand{\CD}{\mathcal D}
\newcommand{\CE}{\mathcal E}
\newcommand{\CEt}{\CE_{\bm{\partial}_t}}
\newcommand{\CEn}{\CE_{\bm n}}
\newcommand{\CK}{\mathcal K}
\newcommand{\CL}{\mathcal L}
\newcommand{\CM}{\mathcal M}
\newcommand{\CP}{\mathcal P}
\newcommand{\CQ}{\mathcal Q}
\newcommand{\CR}{\mathcal R}
\newcommand{\CT}{\mathcal T}
\newcommand{\CV}{\mathcal V}
\newcommand{\CW}{\mathcal W}
\newcommand{\MD}{M_\CD}
\newcommand{\QD}{\CQ_\CD}
\newcommand{\PD}{\CP_\CD}
\newcommand{\SR}{\mathscr R}
\newcommand{\TD}{\CT_\CD}
\newcommand{\VD}{\CV_\CD}
\newcommand{\VDb}{\CV_\CD^b}
\newcommand{\WD}{\mathcal W_{\mathcal D}}
\newcommand{\bdot}{{\displaystyle \cdot}}
\newcommand{\albe}{{\alpha\beta}}
\newcommand{\mnu}{{\mu\nu}}
\newcommand{\CAt}{\tilde {\mathcal A}}
\newcommand{\epsilont}{\tilde \epsilon}
\newcommand{\omegat}{\tilde \omega}
\newcommand{\sigmat}{\tilde \sigma}
\newcommand{\SRt}{\tilde {\mathscr R}}
\newcommand{\Thetat}{\tilde \Theta}
\newcommand{\average}[1]{\left\langle {\strut #1} \right\rangle_{\CD}}  % \strut is the magic command allowing to have averages of the same size
\newcommand{\naverage}[1]{\left \langle {\strut #1} \right\rangle} %average without the D index
\newcommand{\baverage}[1]{\left\langle {\strut #1} \right\rangle^b_{\CD}}
\newcommand{\Eaverage}[1]{\left\langle {\strut #1} \right\rangle^E_{\CD}}
\newcommand{\Rbaverage}[1]{\left\langle {\strut \overset{\phantom{.}}{#1}} \right\rangle^b_\CD} %resized b-average
\newcommand{\baveragedot}[1]{\left\langle {\strut #1} \right\rangle_{\CD}^{b\ \bdot}}
\newcommand{\gaverage}[1]{\left\langle {\strut #1} \right\rangle_{\mathrm{\Sigma}}}
\newcommand{\ddt}[1]{\frac{{\mathrm d} #1}{{\mathrm d} t}}
\newcommand{\spatiald}{{}^{3 \!} \mathbf{d}} 
\newcommand{\initial}[1]{{#1_{\mathbf{i}}}}
\newcommand{\beq}{\begin{equation}}
\newcommand{\eeq}{\end{equation}}
\newcommand{\beqNo}{\begin{equation*}}
\newcommand{\eeqNo}{\end{equation*}}
\newcounter{thgroupcount}
\newenvironment{thgroup}{%
\setcounter{thgroupcount}{\thetheorem}
\setcounter{theorem}{0}
\refstepcounter{thgroupcount}
\renewcommand{\thetheorem}{\thethgroupcount.\alph{theorem}}
}
{\setcounter{theorem}{\thethgroupcount}}
\newcounter{propgroupcount}
\newcounter{corgroupcount}
\newenvironment{corgroup}{%
\setcounter{corgroupcount}{\thecorollary}
\setcounter{corollary}{0}
\refstepcounter{corgroupcount}
\renewcommand{\thecorollary}{\thecorgroupcount.\alph{corollary}}
}
{\setcounter{corollary}{\thecorgroupcount}}
\numberwithin{equation}{section}
\definecolor{myblue}{rgb}{0.2,0.3,0.7}
\definecolor{darkgreen}{rgb}{0,0.3,0}
\definecolor{mygreen}{rgb}{0,0.5,0}
\definecolor{grey}{rgb}{0.5,0.5,0.5}
\definecolor{orange}{rgb}{1,0.5,0}
\journalname{GRG}
\begin{document}

\title{On average properties of inhomogeneous fluids in general relativity III: general fluid cosmologies
\thanks{Work supported by ERC advanced Grant 740021--ARTHUS.} 
%Grants or other notes
%about the article that should go on the front page should be
%placed here. General acknowledgments should be placed at the end of the article.}
}
%\subtitle{Do you have a subtitle?\\ If so, write it here}

\titlerunning{On average properties of inhomogeneous fluids in general relativity III} 
% if too long for running head

\author{Thomas~Buchert$\;\boldsymbol{\cdot{}}\;$Pierre~Mourier$\;\boldsymbol{\cdot{}}\;$Xavier~Roy}

\authorrunning{Buchert \textit{et al.}} % if too long for running head

\institute{Thomas Buchert$^1$ \and Pierre Mourier$^{1,2,3}$ \and Xavier Roy$^{1,4}$ \at
$^1$ Univ Lyon, Ens de Lyon, Univ Lyon1, CNRS, Centre de Recherche Astrophysique de Lyon UMR5574, F--69007 Lyon, France \\ 
$^2$ Max Planck Institute for Gravitational Physics (Albert Einstein Institute), Callinstra\ss e 38, D--30167 Hannover, Germany \\
$^3$ Leibniz Universit\"at Hannover, D--30167 Hannover, Germany \\
$^{4}$ Cosmology and Gravity Group, Department of Mathematics and Applied Mathematics, University of Cape Town, 7701 Rondebosch, South Africa \\
\email{buchert@ens-lyon.fr $\cdot$ pierre.mourier@aei.mpg.de $\cdot$ x.roy@gmx.com}
% \emph{Present address:}   %  if needed
}

\date{Received: 10 December 2019 / Accepted: 8 February 2020}
% The correct dates will be entered by the editor

\maketitle

\begin{abstract}
We investigate effective equations governing the volume expansion of spatially averaged 
portions of inhomogeneous cosmologies in spacetimes filled with an arbitrary fluid. 
This work is a follow-up to previous studies focused on irrotational dust models (Paper~I) and irrotational perfect 
fluids (Paper~II) in flow-orthogonal foliations of spacetime. It complements them by considering arbitrary foliations, arbitrary lapse and shift, and by allowing for a tilted fluid flow with vorticity. As for the first studies, the 
propagation of the spatial averaging domain is chosen to follow the congruence of the fluid, which avoids unphysical 
dependencies in the averaged system that is obtained.
We present two different averaging schemes and corresponding systems of averaged evolution equations providing generalizations of Papers~I and II. The first one retains the averaging operator used in several
other generalizations found in the literature. We extensively discuss relations to these formalisms and pinpoint limitations, in particular regarding rest mass conservation on the averaging domain. The alternative averaging scheme that we subsequently introduce follows the spirit of Papers~I and II and focuses on the fluid flow and the associated $1+3$ threading congruence, used jointly with the $3+1$ foliation that builds the surfaces of averaging. This results in compact averaged equations with a minimal number of cosmological backreaction terms. We highlight that this system becomes especially transparent when applied to a natural class of foliations which have constant fluid proper time slices.
\keywords{Relativistic cosmology \and Spacetime foliations \and Lagrangian description \and Cosmological backreaction \and Dark Universe}
%
% \PACS{PACS code1 \and PACS code2 \and more}
% \subclass{MSC code1 \and MSC code2 \and more}
%\PACS{98.80.-k \and 95.36.+x \and 98.80.Jk \and 04.20.-q \and 04.20.Cv}
%
%98.80.-k 	Cosmology (see also section 04 General relativity and gravitation; 
%				for origin and evolution of galaxies, see 98.62.Ai; 
%				for elementary particle and nuclear processes, see 95.30.Cq; 
%				for dark matter, see 95.35.+d; for dark energy, see 95.36.+x; 
%				for superclusters and large-scale structure of the Universe, see 98.65.Dx)
%95.36.+x 	Dark energy (see also 98.80.-k Cosmology)
%98.80.Jk 	Mathematical and relativistic aspects of cosmology
%04.20.-q 	Classical general relativity (see also 02.40.-k Geometry, differential geometry, and topology)
%04.20.Cv 	Fundamental problems and general formalism
%
\end{abstract}
\clearpage
\setcounter{tocdepth}{3}
\normalsize\tableofcontents

\clearpage

% ############################
% ### ### SECTION
% ############################

\section{Introduction}
\label{sec:intro}

A viable cosmological model provides an effective evolution history of the inhomogeneous Universe. 
The procedure of spatially averaging the scalar characteristics of an inhomogeneous model universe yields a system 
of scale-dependent Friedmann-like equations with an effective energy-momentum tensor, featuring so-called backreaction terms 
(see \cite{buchert:av_dust,buchert:av_pf}, respectively referred to as Paper~I and Paper~II hereafter). 
These additional terms contribute to and may potentially replace the dark constituents of the Universe that have to be postulated
as fundamental sources in the standard model of cosmology \cite{buchert:jgrg,buchert:review}.
For recent reviews and references, we direct the attention of the reader to
\cite{ellis:focus,buchert:focus,clarkson:review,kolb:focus,rasanen:focus,buchertrasanen,buchert11}. 

Extensions of this averaging framework to general foliations of spacetime have been investigated 
\cite{futamase:aver1,futamase:aver2,larena:aver,brown:aver,marozzi:aver1,marozzi:aver2,rasanen:lightpropagation,dunsby:aver,smirnov:aver}  
within the $3+1$ formalism or within a four-dimensional approach with spatial averaging slices.
Some limitations and drawbacks can be identified in the formalisms adopted in these papers, and we are going to point them out in specially dedicated sections on the comparison with results in the literature.
The results of the four-dimensional formalism of \cite{marozzi:aver1,marozzi:aver2} turn out to be closest
to the covariant results in $3+1$ form that we develop in the present work (see \cite{asta1} for a modification and extension of the former approach highlighting the relation between both frameworks).

We describe in this paper two general averaging frameworks based on the $3+1$ formalism,
valid for any foliation, for arbitrary lapse and shift, and allowing for a tilted and vortical fluid flow.
We shall emphasize (i) the use of an averaging domain comoving with the congruence of the fluid, and (ii) the Lagrangian point of view,
that has been employed previously, without averaging, for fluids with vorticity \cite{asada:vorticity} and pressure \cite{asada:pressure}. We shall in particular highlight advantages of our second formalism, 
for which we shall additionally demonstrate the interest of (iii) employing the proper volume of the fluid as a volume measure.
The present general investigation is also useful to relax some restricting assumptions of Papers~I and II, to better understand 
the relation to Newtonian averaged cosmologies \cite{buchert:av_Newton}, and to extend the range of applicability of the
effective equations. It will in particular allow for their application to the analysis in terms of spatial averages of relativistic cosmological simulations which make use of non-fluid-orthogonal foliations (see the recent works \cite{macpherson_etal_19,daverio_etal_19}).

The averaged systems that we derive furnish background-free approaches to relativistic cosmologies. Spatial averages can alternatively be interpreted as a \textit{general background cosmology} with a `background' that is not fixed \textit{a priori} \cite{kolb:focus}, but that interacts with the formation of structures.
Fluctuations can then be investigated with respect to the physical average, 
abandoning standard perturbative frameworks where fluctuations
are referred to a fixed reference background, and thus eliminating the need to consider gauge transformations.
A corresponding perturbation scheme that makes structures evolve
on such a physical background has been investigated for dust cosmologies in \cite{roy:generalbackground}.

This paper is organized as follows.
Section~\ref{sec:foliation&fluid} gives a comprehensive outline of the $3+1$ framework and the general fluid content we consider. We here also introduce
the Lagrangian description, the relevance of which we shall emphasize in what follows. We introduce in section~\ref{sec:extrinsic} an averaging framework similar to the commonly used frameworks in relativistic cosmological modeling
(named here \textit{fluid-extrinsic approach}), however with emphasis on a comoving evolution of the averaging domain. We derive the corresponding averaged evolution equations for the domain and comment on the resulting backreaction terms.
We close this section with a detailed discussion of existing results in the literature.
Section~\ref{sec:intrinsic} develops a new perspective on the averaging problem by exposing in detail a 
\textit{fluid-intrinsic approach}. The latter employs a $1+3$ threading of spacetime and focusses on the volume form of the fluid rather than that of the hypersurfaces. This allows for a compact formulation of the effective equations governing regional averages of fluid properties, and it agrees in spirit with what has been presented in Papers~I and II (see \cite{foliationsletter} for an overview and a discussion of foliation dependence of cosmological backreaction).
Section~\ref{sec:discussion_intr} concludes with a discussion of a few cases of interest in order to illustrate
the fluid-intrinsic approach, to compare with Papers~I and II and with the Newtonian framework, and to prepare applications. Perspectives for future studies are also discussed in this section.

% ############################
% ### ### SECTION
% ############################

\section{Foliation of spacetime and decomposition of the fluid}
\label{sec:foliation&fluid}

We consider in this work a model universe sourced by a single general fluid.
This section sets the definitions and notations for a general $3+1$ spatial foliation of spacetime on the one hand, and for the decomposition 
of the fluid flow and of its energy-momentum tensor with respect to this foliation on the other hand
(see, e.g., \cite{adm,mtw,smarr:foliation,alcub:foliation,gourg:foliation} for more details). The \textit{comoving} and \textit{Lagrangian} pictures are then introduced as natural possible coordinate descriptions adapted to the fluid flow.

% ###################### %
% ##### Subsection ##### %

\subsection{Description of the geometry} 
\label{subsec:geom}

Our spacetime model is a globally hyperbolic four-dimensional manifold ${\mathcal M}^4$, endowed with the pseudo-Riemannian 
metric tensor $\bf g$ and described by a local system of coordinates $x^\mu = (t, x^i)$,%
\footnote{%
	Greek letters are assigned to spacetime indices, they run in $\{ 0, 1, 2, 3 \}$, and Latin letters refer to 
	space indices, running in $\{ 1, 2, 3 \}$. The signature of the metric is taken as $(- + + +)$, and units 
	are such that $c = 1$. The coordinate system $x^\mu$ is associated to the coordinate vector basis 
	$\{ \bm{\partial}_{x^\mu} \} := \{ \bm{\partial}_t, \bm{\partial}_{x^i} \}$ and its dual exact basis 
	$\{ {\mathbf d} x^\mu \} := \{ {\mathbf d} t, {\mathbf d} x^i \}$. Unless otherwise specified, components of tensorial 
	objects should be understood as expressed in these bases, with arguments $(t, x^i)$.%
	\label{fn:footnote_a}
}
with $\mathbf{g} = g_{\mu \nu} \, \mathbf{d}x^\mu \otimes \mathbf{d}x^\nu$.
The fluid flow is described by a timelike congruence with a unit, future-oriented, timelike tangent $4-$vector field $\bm u$, characterizing its $4-$velocity. The exact physical interpretation of this $4-$velocity may depend on the specific matter model under consideration. It can for instance be defined as an energy frame for the fluid (\textit{i.e.}, an eigenvector of the stress-energy tensor), a barycentric velocity, or the unit vector associated to another conserved current; see \textit{e.g.} \cite[p.72]{ehlers73}, and the related discussion on rest mass density in section \ref{subsubsec:se_fluid} (footnote \ref{fn:restmass}) below. We keep our approach general to this respect so that the most appropriate construction can be chosen for any specific application.

We foliate the spacetime manifold into a family of spacelike hypersurfaces, and we denote by $\bm n$ their unit, timelike, future-oriented normal vector field, which is in general tilted with respect to the $4-$velocity $\bm u$. The foliation can be characterized by a regular scalar function $S$ 
strictly increasing along each flow line, and defined such that each spatial hypersurface is a level set of $S$. 
For simplicity, we choose the time coordinate $t$ as being a strictly increasing function of $S$ (implying the 
reciprocal relation $S = S(t)$), and use it to label the hypersurfaces. The spatial coordinates $x^i$, on the 
other hand, are kept arbitrary.

In such a spacetime coordinate basis, the components of $\bm n$ are written: 
\beq
	n^\mu = \frac{1}{N} \left( 1, - N^i \right) \, ,
	\label{eq:n_vec}
\eeq
and the components of its non-exact dual form $\underline{\bm n}$ read: 
\beq
	n_\mu = - N \, ( 1, 0 ) \, . 
	\label{eq:n_form}
\eeq
The positive lapse function $N$ determines how far consecutive slices are from each other in the slice-orthogonal time direction at each point, while the shift vector 
$\bm N$ generates a spatial diffeomorphism that relates points between successive slices. Following the usual conventions of the $3+1$ formalism, we associate the shift to the coordinate functions defining the propagation of the local spatial coordinates between slices. By definition we have:
\beq
	\bm{\partial}_t = N \bm n + \bm N \, .
\eeq
We shall keep the lapse and shift unspecified for the derivation of the averaged system, thereby preserving the 
four degrees of freedom of the foliation. We shall, however, introduce in section~\ref{subsec:lagrange} convenient foliation and coordinate choices 
that may be adopted for the description of the system (these amount to setting the shift, or both the lapse and the shift).

Spacetime tensors are projected onto the hypersurfaces of the foliation by means of the operator 
$\mathbf{h} = h_{\alpha \beta} \, \mathbf{d} x^\alpha \otimes \mathbf{d} x^\beta$, 
\beq
	h_\mnu := g_\mnu + n_\mu n_\nu \, , \quad 
	h_{\alpha \mu} n^\alpha = 0 \, , \quad 
	h^\mu_{\phantom{\mu} \alpha} h^\alpha_{\phantom{\alpha} \nu} = h^\mu_{\phantom{\mu} \nu} \, , \quad 
	h^\albe h_\albe = 3 \, , 
	\label{eq:proj_h} 
\eeq
whose restriction to the spatial slices defines the spatial Riemannian metric $h_{ij}$, with inverse $h^{i j}$. 
Given this operator and the normal vector $\bm n$, the four-dimensional line element can be decomposed into 
\beq
	{\mathrm d}s^2 
		= g_\albe \, {\mathrm d}x^\alpha {\mathrm d}x^\beta
		= - \left( N^2 - N^k N_k \right) {\mathrm d}t^2 + 2 N_i \, {\mathrm d}x^i \, {\mathrm d}t \,
		  + h_{ij} \, {\mathrm d}x^i {\mathrm d}x^j \, . 
	\label{eq:line_elem}
\eeq
The lapse $N$ also measures, through its spatial variations, the acceleration $\bm a^{(n)}$ of the frames associated 
with $\bm n$:
\beq \label{eq:eulerian_acc}
	a^{(n)}_\mu := n^\alpha \nabla_\alpha n_\mu = \frac{N_{||\mu}}{N} \; ,
\eeq
where $\nabla_\alpha$ denotes the four-covariant derivative, and ${}_{||}$ the three-covariant derivative associated 
with the spatial metric $h_{ij}$.

% ###################### %
% ##### Subsection ##### %

\subsection{Description of the fluid}
\label{subsec:fluid}

\begin{figure}[htb]
	\center{\includegraphics[scale=1.3]{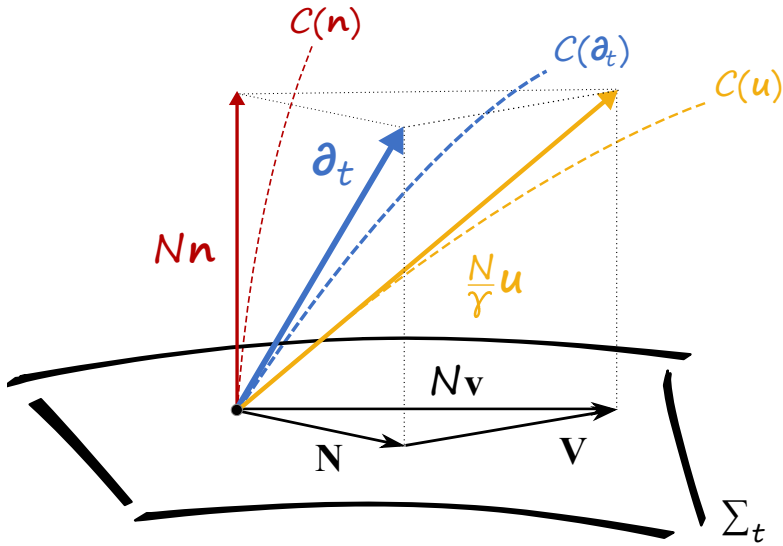}}
	\caption{%
		Representation of the different vector fields at hand, on a spatial hypersurface $\mathrm{\Sigma}_t$.
		$\bm n$ is the vector normal to the hypersurface and it transports the normal frames (defining a congruence 
		$ \CC(\bm n)$); 
		$\bm{\partial}_t$ is the time-vector of the coordinate basis, tangent to the integral curves 
		$\CC(\bm{\partial}_t)$ (with $x^i = const.$); and $\bm u$ is the $4-$velocity of the fluid, tangent to the 
		congruence $\CC(\bm u)$.
		The deviations between $\bm n$ and $\bm{\partial}_t$, on the one hand, and between $\bm{\partial}_t$ 
		and $\bm u$, on the other hand, are identified respectively by $\bm N$ and $\bm V$. The tilt between 
		$\bm u$ and $\bm n$ is given by $\bm v = (\bm N + \bm V) / N$. 
		For a coordinate system comoving with the fluid, we have 
		$\bm V = \bm 0$ and $\bm{\partial}_t = (N / \gamma) \, \bm u$.  Even though the coordinate velocity vanishes 
		in this situation, the fluid can still experience a spatial motion within the hypersurface, given by $\bm v$, and the shift  would be set to $\bm N = N \bm v$.
		Alternatively, in the case of a fluid flow orthogonal to the hypersurfaces, we would have $\bm u = \bm n$, and hence $\bm V = - \bm N$ 
		for any shift.
	}
	\label{fig:schem_vect_3D}
\end{figure}
%

% ------------------- %
% ## Subsubsection ## %

\subsubsection{Decomposition of the $4$-velocity}
\label{subsubsec:u_fluid}

We represent in figure \ref{fig:schem_vect_3D} the different vector fields that we introduce now.
The fluid $4$-velocity vector $\bm u$ can be decomposed in all generality into
\begin{gather}
	{\bm u} = \gamma \left( \bm n + \bm v \right) \, , 
	\label{eq:four_vel} \\ 
	\mathrm{with} \quad 
		n_{\alpha} v^{\alpha}= 0 \, , \qquad 
		\gamma = - n_{\alpha} u^{\alpha} = \frac{1}{\sqrt{1 - v^{\alpha} v_{\alpha}}} \, , 
	\label{eq:lorentz}
\end{gather}
where $\bm v$ (hereafter \textit{Eulerian velocity}) is the spatial velocity of the fluid relative to the 
\textit{normal frames}, which are defined as being locally at rest within the hypersurfaces and transported along 
the normal $\bm n$. The vector $\bm v$ identifies the direction and magnitude of the above-mentioned tilt 
between the normal and the fluid frames. 
The magnitude is equivalently measured by the Lorentz factor $\gamma$ or by the tilt angle $\phi$, defined as 
$\phi := \mathrm{arcosh}(\gamma)$ \cite{ellis:tilted,marozzi:aver2}. 
For a vanishing tilt, $\bm u = \bm n$, we have $\bm v = \bm{0}$, $\gamma = 1$, and $\phi = 0$.

Introducing the spatial \textit{coordinate velocity} of the fluid,
\beq
	\bm V = \frac{{\mathrm d} \bm x}{{\mathrm d}t} \, , \quad 
		\mathrm{with} \quad n_{\alpha} V^{\alpha} = 0 \, , 
	\label{eq:coord_vel}
\eeq
where $\bm x$ is the spatial position of a fluid element in the coordinate system $(t, x^i)$ and 
${\mathrm d}/{\mathrm d}t$ is the derivative with respect to $t$ along the fluid flow lines, we can write 
the Eulerian velocity as (see, e.g., \cite{smarr:foliation,alcub:foliation,gourg:foliation}):
\beq
	\bm v = \frac{1}{N} \left( \bm N + \bm V \right) \, .
	\label{eq:relat_vel}
\eeq
Equation \eqref{eq:four_vel} can then be reformulated in the general form: 
\begin{gather}
	\bm u = \frac{\gamma}{N} \left (N \bm n + \bm N + \bm V \right) \, , 
	\label{eq:four_vel_II} \\
	\mathrm{with} \quad 
		\frac{\gamma}{N} = \frac{1}{\sqrt{N^2 - (N^{\alpha} + V^{\alpha})(N_{\alpha} + V_{\alpha})}} \, . 
	\nonumber 
\end{gather}
In contrast to the Eulerian velocity $\bm v$ which is covariantly defined, the coordinate velocity $\bm V$ depends 
on the way the spatial coordinates propagate between neighboring hypersurfaces; hence it depends on the shift. 
For instance, for a coordinate system comoving with the fluid, which corresponds to a specific shift, we have 
$\bm V = \bm 0$, while for a vanishing tilt, we have $\bm V = - \bm N$, whatever shift is chosen. 

Note that a foliation orthogonal to the fluid, where $\bm n := \bm u$ and $\bm v = \bm 0$ (as considered in Papers~I 
and II), is only possible for a fluid flow with no vorticity. Even for irrotational fluids, introducing a tilt allows us 
to keep the freedom in the construction of the spatial hypersurfaces. 

The components of $\bm u$ and its dual $1-$form $\underline{\bm u}$ are obtained by noticing that any spatial 
vector $\bm \chi$ can be extended to a four-dimensional vector by writing:
\beq
	\chi^\mu = (\chi^0, \chi^i) \, , \quad 
	\mathrm{with} \quad \chi^0 = 0 \, .
	\label{eq:ext_vec}
\eeq
The components of the dual $1-$form of $\bm \chi$ are then deduced from the property 
$n^\alpha \chi_\alpha = 0$ along with expression \eqref{eq:n_vec}:
\beq
	\chi_\mu = (\chi_0, \chi_i) \, , \quad 
	\mathrm{with} \quad \chi_0 = N^k \chi_k \, . 
	\label{eq:ext_form}
\eeq
Applying \eqref{eq:ext_vec} and \eqref{eq:ext_form} to the shift vector and the coordinate velocity, we obtain from 
\eqref{eq:four_vel_II} the component expressions for $\bm u$ and $\underline{\bm u}$: 
\begin{gather}
	u^\mu = \frac{\gamma}{N} \left( 1, V^i \right) \, , 
	\;\; \quad
	u_\mu = \frac{\gamma}{N} \left( - N^2 + N^k \left( N_k + V_k \right), \, N_i + V_i \right) \, . 
	\label{eq:comp_u} 
\end{gather}

% ------------------- %
% ## Subsubsection ## %

\subsubsection{Kinematic variables and acceleration}
\label{subsubsec:kin_fluid}

Let us introduce the operator $\mathbf{b} = b_\albe \, \mathbf{d}x^\alpha \otimes \mathbf{d}x^\beta$ that projects 
tensors onto the local rest frames of the fluid (orthogonal to $\bm u$):
\beq
	b_\mnu := g_\mnu + u_\mu u_\nu 
		\, , \;\;\quad
	b_{\alpha \mu} u^\alpha = 0 
		\, , \;\;\quad 
	b^\mu_{\phantom{\mu} \alpha} b^\alpha_{\phantom{\alpha} \nu} = b^\mu_{\phantom{\mu} \nu} 
		\, , \;\;\quad 
	b^{\alpha \beta} b_{\alpha \beta} = 3 
		\, . 
	\label{eq:proj_f}
\eeq
The projectors $\mathbf{b}$ and $\mathbf{h}$ usually differ because of the tilt between $\bm u$ and $\bm n$. From relations \eqref{eq:proj_f}, we can decompose the $4-$covariant derivative of the $1-$form 
$\underline{\bm u}$ into the $4-$acceleration and the kinematic parts of the fluid \cite{ehlers} as follows:
\begin{gather}
	\nabla_\mu u_\nu 
	= - u_\mu \, a_\nu + \frac{1}{3} \Theta b_\mnu + \sigma_\mnu + \omega_\mnu \, , 
	\label{eq:cov_u} \\ 
	\mathrm{with} \quad 
	a_\mu := u^\alpha \nabla_\alpha u_\mu \, , \quad \; 
	\Theta := \nabla_\alpha u^\alpha \, , \nonumber \\
	\mathrm{and} \quad 
	\sigma_\mnu := b^\alpha_{\phantom{\alpha} \mu} b^\beta_{\phantom{\beta} \nu}  \nabla_{( \alpha} u_{\beta )} 
	- \frac{1}{3} \Theta b_\mnu \, , \quad
	\omega_\mnu := b^\alpha_{\phantom{\alpha} \mu} b^\beta_{\phantom{\beta} \nu} \nabla_{[ \alpha} u_{\beta ]} \, ,
	\label{eq:kin_fluid}
\end{gather}
where the round and square brackets respectively imply symmetrization and anti-symmetrization over the indices enclosed. 
$\bm a$ is the acceleration of the fluid, $\Theta$ its expansion rate,  $\bm \sigma$ its shear tensor, and $\bm \omega$ is 
its vorticity tensor.\footnote{%
	The shear, vorticity and acceleration of the fluid, as seen in the normal frames, can be derived from 
	the projections onto the three-surfaces of the proper shear $\bm \sigma$, proper vorticity $\bm \omega$ and 
	proper acceleration $\bm a$, respectively. For instance, the second would read 
	$h^\alpha_{\phantom{\alpha} \mu} h^\beta_{\phantom{\beta} \nu} \, \omega_\albe %
	= h^\alpha_{\phantom{\alpha} \mu} h^\beta_{\phantom{\beta} \nu} \, %
	b^\delta_{\phantom{\delta} \alpha} b^\xi_{\phantom{\xi} \beta} \nabla_{[ \delta} u_{\xi ]}$,  
	which differs from 
	$h^\alpha_{\phantom{\alpha} \mu} h^\beta_{\phantom{\beta} \nu} \nabla_{[ \alpha} u_{\beta ]}$ when 
	$\bm a$ is not null.%
}
Recall that the rest frames of the fluid are not hyper\-surface-forming if $\bm \omega$ does not vanish. From $\bm \sigma$ and $\bm \omega$, respectively, one can additionally define the squared \textit{rate of shear} $\sigma^2 \equiv \sigma^{\mu \nu} \sigma_{\mu \nu} / 2$ and the squared \textit{rate of vorticity} $\omega^2 \equiv \omega^{\mu \nu} \omega_{\mu \nu} / 2$, which are both positive-definite.

% ------------------- %
% ## Subsubsection ## %

\subsubsection{Stress-energy tensor and conservation laws}
\label{subsubsec:se_fluid}

The stress-energy tensor of the fluid can be decomposed with respect to the fluid rest frames as follows: 
\begin{gather}
	T_\mnu = \epsilon \, u_\mu u_\nu + 2 \, q_{(\mu} u_{\nu)} + p \, b_\mnu + \pi_\mnu \, , \label{eq:se_fluid_u} \\ 
	\mathrm{with} \; 
		\epsilon := u^\alpha u^\beta T_\albe 
			\, , \;
		q_\mu := - b^\alpha_{\phantom{\alpha} \mu} u^\beta T_\albe 
			\, , \;
		p \, b_\mnu + \pi_\mnu := b^\alpha_{\phantom{\alpha} \mu} b^\beta_{\phantom{\beta} \nu} T_\albe \, , \; b^{\mu \nu} \pi_{\mu \nu} = 0
			\, .
	\nonumber 
\end{gather}
$\epsilon$ denotes the energy density of the fluid in its rest frames, $\bm q$ the spatial heat vector, 
$p$ the isotropic pressure, and $\pi_\mnu$ the spatial and traceless anisotropic stress. 
Alternatively, it can be decomposed with respect to the normal frames as 
\begin{gather}
	T_\mnu = E \, n_\mu n_\nu + 2 \, n_{( \mu} J_{\nu )} + S_\mnu \, , \label{eq:se_fluid_n} \\ 
	\mathrm{with} \quad 
		E := n^\alpha n^\beta T_\albe 
			\, , \qquad 
		J_\mu := - h^\alpha_{\phantom{\alpha} \mu} n^\beta T_\albe 
			\, , \qquad 
		S_\mnu := h^\alpha_{\phantom{\alpha} \mu} h^\beta_{\phantom{\beta} \nu} T_\albe 
				\, ,
	\nonumber 
\end{gather}
where $E$ is the energy density of the fluid, $J_\mu$ its momentum density, and $S_\mnu$ its stress density,
all as measured in the normal frames. The isotropic part of $S_{\mu \nu}$ is given by (one third of) the trace $S := g^\albe S_\albe$. 
This second decomposition will be used in section \ref{sec:extrinsic} for the derivation of the averaged equations in the fluid-extrinsic approach.
Using expression \eqref{eq:four_vel}, we can relate the scalar quantities of both decompositions as
\begin{gather}
	E = \gamma^2 \epsilon + ( \gamma^2 - 1 ) \, p 
		+ 2 \, \gamma v^\alpha q_\alpha + v^\alpha v^\beta \pi_\albe \;\, , \label{eq:rel_E} \\ 
	S = ( \gamma^2 - 1 ) \, \epsilon + ( \gamma^2 + 2 ) \, p 
		+ 2 \, \gamma v^\alpha q_\alpha + v^\alpha v^\beta \pi_\albe \;\, . \label{eq:rel_S}
\end{gather}
From the property $\nabla_\beta T^\albe = 0$ along with relations \eqref{eq:se_fluid_u} and \eqref{eq:kin_fluid}, 
we derive the energy conservation law: 
\beq
	u_\alpha \nabla_\beta T^\albe = 0 
		\quad \Leftrightarrow \quad 
	\dot\epsilon + \Theta \left( \epsilon + p \right) 
		= - a_\alpha q^\alpha - \nabla_\alpha q^\alpha - \pi^\albe \sigma_\albe \, , 
	\label{eq:en_cl}
\eeq
and the momentum conservation law:
\begin{gather}
	b_{\mu \alpha} \nabla_\beta T^\albe = 0 \nonumber \\ 
		\Leftrightarrow \quad 
	a_\mu = - \frac{1}{\epsilon + p} 
		\left( b^\alpha_{\phantom{\alpha} \mu} \nabla_\alpha p 
		+ b_{\mu \alpha} \dot{q}^\alpha 
		+ \frac{4}{3} \Theta q_\mu + q^\alpha (\sigma_{\alpha \mu} + \omega_{\alpha \mu}) 
		+ b_{\mu \alpha} \nabla_\beta \pi^\albe \right) \, ,
	\label{eq:mom_cl}
\end{gather}
where the overdot is defined below in section~\ref{subsec:derivatives}. These relations can be complemented by the conservation of the \textit{rest mass density} $\varrho$ of the fluid in its rest frame:\footnote{%
For massive particles, the rest mass density $\varrho$ is defined naturally and the associated definition of $\bm u$ as the barycentric velocity ensures the conservation of the mass current $\varrho \, \bm u$. For massless particles such as a pure photon gas, a conserved \textit{number density} current can be constructed instead in general, defining a $4-$velocity $\bm u$ as its direction and $\varrho$, now a particle number density, as its norm; similar definitions from other conserved currents (such as the baryon current density) are possible, e.g., when nuclear reactions occur and the actual rest mass is not locally conserved \cite[sec. 3.1]{ellis71}, \cite[p.72]{ehlers73}. Any of these definitions may be used in applications as appropriate for the specific matter model at hand, although we will, for simplicity, refer to $\varrho$ as the `rest mass density' and its volume integral as the `rest mass' in what follows. Note that for a general fluid model, a definition of $\bm u$ based on the energy frame may not always be compatible with such constructions based on conserved currents.
\label{fn:restmass}
}%
\beq
	\nabla_\alpha (\varrho u^\alpha) = 0
		\, , \quad \text{or equivalently,} \quad 
	\dot\varrho + \Theta \varrho = 0 \, .
	\label{eq:cons_restmass_density}
\eeq

% ###################### %
% ##### Subsection ##### %

\subsection{Time derivatives and their relations}
\label{subsec:derivatives}

The existence of two different times (the coordinate time $t$ and the fluid proper time $\tau$) and of three timelike 
congruences (see figure~\ref{fig:schem_vect_3D}) leads to several possible definitions of time derivatives. Those of main interest for the present work are:
\begin{itemize}
	\item[$\bullet$] the \textit{covariant derivative} along the fluid flow lines, denoted by an overdot;
	for any tensor field $\bm{F}$, we have $\dot{\bm{F}} := u^\alpha \nabla_\alpha \bm{F}$;
	\item[$\bullet$] the comoving derivative along the fluid flow lines and according to the proper time $\tau$, or \textit{Lagrangian derivative},
	denoted by ${\mathrm d}/{\mathrm d}\tau$;
	\item[$\bullet$] the comoving derivative along the fluid flow lines and according to the coordinate time $t$, denoted by  $\mathrm{d}/\mathrm{d}t$; 
	\item[$\bullet$] the partial coordinate time derivative along the vector $\bm{\partial}_t$ (see footnote~\ref{fn:footnote_a}), 
	\textit{i.e.} along the integral curves of constant $x^i$, denoted by $\partial_t\big|_{x^i}$.
\end{itemize}
The last three derivatives are related by:
\begin{gather}
	\frac{{\mathrm d} F^{\mu \nu ...}_{\ \ \alpha \beta ...}}{{\mathrm d}t} 
		= \frac{\partial F^{\mu \nu ...}_{\ \ \alpha \beta ...}}{\partial t} \bigg|_{X^i} 
		\! = \frac{\partial F^{\mu \nu ...}_{\ \ \alpha \beta ...}}{\partial t} \bigg|_{x^i} \! + V^i \frac{\partial F^{\mu \nu ...}_{\ \ \alpha \beta ...}}{\partial x^i} \, , 
	\label{eq:der_rel_1} 
	\\[3pt]
	\frac{{\mathrm d} F^{\mu \nu ...}_{\ \ \alpha \beta ...}}{{\mathrm d}\tau} 
		= \frac{\gamma}{N} \frac{{\mathrm d} F^{\mu \nu ...}_{\ \ \alpha \beta ...}}{{\mathrm d}t} \, , 
	\label{eq:der_rel_2}
\end{gather}
for any tensor field %
$\bm F = %
	F^{\mu \nu \ldots}_{\;\; \alpha \beta \ldots} %
	\bm{\partial}_\mu \otimes \bm{\partial}_\nu \otimes \ldots \otimes %
	\boldsymbol{\mathrm{d}} x^\alpha \otimes \boldsymbol{\mathrm{d}} x^\beta \otimes \ldots$
as decomposed in the coordinate system $(t,x^i)$ with arbitrary spatial coordinates $x^i$ (see footnote \ref{fn:footnote_a}). $X^i$, on the other hand, specifically denotes a set of spatial coordinates that are comoving with the fluid flow, that is, these coordinates represent labels for the fluid elements.
For a scalar field $\psi$, the first two derivatives are identical: 
$\dot\psi = u^\alpha \partial_\alpha \psi = {\mathrm d}\psi / {\mathrm d}\tau$.

\begin{proof}
\small{%
	Let us consider the components $F^{\mu \nu \ldots}_{\;\; \alpha \beta \ldots}$ of a tensor field $\bm F$ in 
	the coordinate basis associated with $(t, x^i)$ (see footnote \ref{fn:footnote_a}). For notational ease, 
	we drop in what follows the indices and write $F := F^{\mu \nu \ldots}_{\;\; \alpha \beta \ldots}$. The total 
	coordinate-time derivative of $F$ along any timelike curve $\CC$ can be decomposed in terms of the coordinate 
	partial derivatives as 
	\beq
		\frac{{\mathrm d} F}{{\mathrm d}t} \bigg|_{\CC} 
		= \frac{\partial F}{\partial t} \bigg|_{x^i}
		+ \frac{\partial F}{\partial x^i} \, \frac{{\mathrm d}x^i}{{\mathrm d}t} \bigg|_{\CC} \, .
	\eeq
	Considering the variation along the congruence $\CC(\bm u)$ of the fluid, and therefore making use of 
	definition \eqref{eq:coord_vel}, we obtain:
	\beq
		\ddt{F} 
			:= \frac{{\mathrm d} F}{{\mathrm d}t} \bigg|_{\CC(\bm u)} 
			= \frac{\partial F}{\partial t} \bigg|_{x^i} + \frac{\partial F}{\partial x^i} V^i \, . 
		\label{eq:der_rel_proof}
	\eeq
	Moreover, for the Lagrangian coordinates $X^i$, by definition constant 
	along the fluid flow lines, we have ${(\mathrm d}X^i/{\mathrm d}t) \, |_{\CC(\bm u)} = 0$, and hence 
	$\mathrm{d}F/\mathrm{d}t = \partial_t |_{X^i} F$, which concludes the proof of \eqref{eq:der_rel_1}.
	
	The total derivative of $F$ with respect to the proper time $\tau$ of the fluid along the congruence 
	$\CC(\bm u)$ satisfies 
	\beq
		\frac{\mathrm{d}F}{\mathrm{d}\tau} 
			:= \frac{{\mathrm d} F}{{\mathrm d}\tau} \bigg|_{\CC(\bm u)} 
			= \frac{\mathrm{d}t}{\mathrm{d}\tau} \bigg|_{\CC(\bm u)} \ddt{F} \bigg|_{\CC(\bm u)} \, .
	\eeq
	From the definition of $\bm u$ and its component expression \eqref{eq:comp_u}, we have 
	$(\mathrm{d}t / \mathrm{d}\tau) \, |_{\CC(\bm u)}=u^0 = \gamma/N$, and thus 
	\beq
		\frac{\mathrm{d}F}{\mathrm{d}\tau} = \frac{\gamma}{N} \ddt{F} \, ,
	\eeq
	which proves \eqref{eq:der_rel_2}. Reformulating the right-hand side by means of \eqref{eq:der_rel_proof},
	and using again the component expression of $\bm u$ finally yields:
	\beq
		\frac{{\mathrm d} F}{{\mathrm d}\tau} 
		= u^0 \frac{\partial F}{\partial t} \bigg|_{x^i} + u^i \frac{\partial F}{\partial x^i} 
			= u^\alpha \, \partial_{x^\alpha} F \, ,
	\eeq
	hence ${\mathrm d}/{\mathrm d}\tau = u^\alpha \partial_\alpha$. This operator coincides with the overdot, 
	$\dot{} = u^\alpha \nabla_\alpha$, when applied to a scalar variable. 
	$\square$
}%
\end{proof}

% ###################### %
% ##### Subsection ##### %

\subsection{Comoving and Lagrangian descriptions}
\label{subsec:lagrange}

% ------------------- %
% ## Subsubsection ## %

\subsubsection{Comoving description}
\label{subsubsec:comov_desc}

For any given foliation, the shift vector can be chosen in such a way that the spatial components \eqref{eq:comp_u} 
of $\bm u$ vanish: by setting $\bm N = N \bm v$, given relation \eqref{eq:relat_vel}, we have $\bm V = \bm 0$. 
This choice corresponds to spatial coordinates propagating along the fluid flow lines, i.e.\ to \textit{comoving} 
(or \textit{Lagrangian}) spatial coordinates. We will refer to the use of these spatial coordinates as a  
\textit{comoving description} of the fluid, and denote them by $X^i$. Note that a comoving description is a 
``weak'' form of a \textit{Lagrangian description} (as introduced below) in that no constraints are set on the time coordinate $t$. 

In the coordinates $(t, X^i)$ of the comoving description, the components \eqref{eq:comp_u} of the fluid velocity read:
\beq
	u^\mu = \frac{\gamma}{N} \, (1, 0) \, , 
		\qquad 
	u_\mu = \left( - \frac{N}{\gamma}, \gamma v_i \right) \, , 
	\label{eq:fluid_wLag}
\eeq
while the line element \eqref{eq:line_elem} reduces to
\beq
	{\mathrm d}s^2 = 
		- \frac{N^2}{\gamma^2} \, {\mathrm d}t^2 + 2 N v_i  \, {\mathrm d}t \,{\mathrm d}X^i  \,
		+ h_{ij} \, {\mathrm d}X^i {\mathrm d}X^j \, .
	\label{eq:line_elem_wLag}
\eeq
The components of the acceleration and kinematic quantities simplify as follows. 
From the anti-symmetric part of \eqref{eq:cov_u} we can write in any coordinate system:
\beq
	\omega_{\mu \nu} 
		= u_{[\mu} a_{\nu]} + \nabla_{[\mu} u_{\nu]}
		= u_{[\mu} a_{\nu]} + \partial_{[\mu} u_{\nu]} \; .
	\label{eq:vorticity_components}
\eeq
In comoving coordinates, the $(0, i)$ components of this expression vanish, given that $\omega_{\alpha i} u^\alpha = 0$. Combining this property with $a_0 = 0$, from $a_\alpha u^\alpha = 0$, we can thus write the spatial components of the acceleration as 
\beq
	a_i = \frac{\gamma}{N} \left(\ddt{}u_i + \frac{\gamma}{N} \partial_i \left(\frac{N}{\gamma}\right) \right) \; ,
	\label{eq:acc_component_comoving}
\eeq
where we also used $u_0 = - N/\gamma$ and $\partial_t = \partial_t\big|_{X^i} = \mathrm{d}/\mathrm{d}t$. 
Inserting \eqref{eq:acc_component_comoving} back into the $(i,j)$ components of \eqref{eq:vorticity_components} 
yields the non-vanishing components of the vorticity:
\beq
	 \omega_{i j} = 
		 \frac{\gamma}{N} u_{[i} \ddt{}u_{j]}
		 + \frac{N}{\gamma} \partial_{[i} \left(\frac{\gamma}{N} u_{j]} \right) \; .
\eeq
The expansion tensor can be related to the Lie derivative $\CL_{\bm u} \mathbf b$ of the projector $\mathbf b$ 
along the fluid flow in any coordinates according to 
\beq
	\left(\CL_{\bm u} \mathbf b \right)_{\mu \nu} 
		= u^\alpha \nabla_\alpha b_{\mu \nu} + b_{\alpha \nu} \nabla_\mu u^\alpha + b_{\mu \alpha} \nabla_\nu u^\alpha 
		= 2 \, u_{(\mu} a_{\nu)} + 2 \, \nabla_{(\mu} u_{\nu)} 
		= 2 \, \Theta_{\mu \nu} \; ,
	\label{eq:expansion_components}
\eeq
where we have used the symmetric part of \eqref{eq:cov_u} for the last equality. The covariant derivatives of 
the second expression can be equivalently replaced by partial derivatives. This provides the non-vanishing comoving-coordinates components of the expansion tensor as %
$\Theta_{ij} % 
	= \left(\CL_{\bm u} \mathbf b \right)_{ij} / \, 2 %
	= u^0 \partial_0 b_{ij} / \, 2$, and hence 
\beq
	\Theta_{ij} = \frac{1}{2} \frac{\gamma}{N} \ddt{}b_{i j} \, .
\eeq
The trace and traceless parts are deduced from the above. For convenience, we express them in terms of a 
representative length $\ell$ in the fluid rest frames, defined by $\dot\ell / \ell := \Theta / 3$ \cite{ehlers}: 
\beq
	\Theta 
		= \frac{1}{2} \frac{\gamma}{N} \, b^{ij} \ddt{} b_{ij} {} 
		= \frac{3}{\ell} \frac{\gamma}{N} \ddt{\ell}  \, ; \qquad 
	\sigma_{i j} 
		= \frac{1}{2} \frac{\gamma}{N} \, \ell^2 \ddt{} (\ell^{-2} b_{i j}) \, .
	\label{eq:exp_tensor_comoving}
\eeq
%
% ------------------- %
% ## Subsubsection ## %

\subsubsection{Lagrangian description}
\label{subsubsec:lag_desc}

An appropriate choice of foliation can be introduced that allows for the labelling of the hypersurfaces by a proper time $\tau$ of the fluid \cite{ellis:vorticity,ellis:relatcosmo}. Such a construction identifies a class of foliations which we call \textit{fluid proper time foliations}. It is realized by the level sets of the fluid proper time $\tau$, as defined
from its comoving coordinate-time evolution rate $\mathrm{d}\tau / \mathrm{d}t = N / \gamma$ (see section \ref{subsec:derivatives}) and an 
initial spacelike hypersurface $\Gamma$ (parametrized by an equation $t = t_\Gamma(X^i)$) on which it takes a given constant value 
$\tau_{\mathbf i}$,\footnote{%
	The proper time is not uniquely defined \textit{a priori}, but it is fully determined by the choice of an initial Cauchy surface to build one of its level 
	sets \cite[p.74]{ellis:relatcosmo}. Another proper time function $\tau'$, taking the constant value 
	$\tau'_{\mathbf i}$ on another initial hypersurface $\Gamma'$, would differ from $\tau$ by a function $\varphi$ constant along the fluid flow lines, 
	$\tau'(t,X^i) = \tau(t,X^i) + \varphi(X^i)$. This relation follows by writing
	\begin{equation*}
		\tau' := \tau'_{\mathbf i} + 
			\int_{t_{\Gamma'}(X^i)}^{t} \frac{N(\hat t, X^i )}{\gamma(\hat t, X^i )} 
			\; {\mathrm d}\hat t \, , 
	\end{equation*}
	with $\Gamma'$ parametrized by $t = t_{\Gamma'}(X^i)$, yielding
	\begin{equation*}
		\varphi(X^i) = \tau'_{\mathbf i} - \tau_{\mathbf i} + \int_{t_{\Gamma'}(X^i)}^{t_\Gamma(X^i)} \frac{N(\hat t, X^i )}{\gamma(\hat t, X^i )} \; 
			{\mathrm d}\hat t \, .
	\end{equation*}
	The expressions defining $\tau$ and $\tau'$ are here given in terms of comoving coordinates. They could alternatively be written covariantly, 
	by setting the value of $\tau - \tau_{\mathbf i}$ (resp.\ $\tau' - \tau'_{\mathbf i}$) at a given spacetime event as the total length of the unique 
	fluid flow line joining this event to the hypersurface $\Gamma$ (resp.\ $\Gamma'$). The properties of both proper times and their relation through 
	$\varphi$ of course hold in this description.%
}
\beq
	\label{eq:def_tau}
	\tau(t,X^i) := 
		\tau_{\mathbf i} + 
		\int_{t_\Gamma(X^i)}^{t} \frac{N(\hat t, X^i )}{\gamma(\hat t, X^i )} \; {\mathrm d}\hat t \; .
\eeq
The hypersurface labelled by a given value $\tau$ can equivalently be defined as the image at time $\tau-\tau_{\mathbf{i}}$ of $\Gamma$ by the flow operator defined from the unitary vector field $\bm u$.

The choice of one fluid proper time foliation sets the normal vector $\bm n$ and relates the lapse $N$ to the Lorentz factor, $N \propto \gamma$ with a purely time-dependent proportionality factor.
The fluid proper time can then be used as the time parameter $t$ labelling these hypersurfaces, $t := \tau$, which implies $N=\gamma$. Note that for such foliations, the hypersurfaces 
cannot be fluid-orthogonal, namely a tilt must be present, except in the case of irrotational geodesic flows (\textit{e.g.}, irrotational dust) \cite{ehlers}. In general, such a tilt may be expected to grow with time and become large and highly inhomogeneous on the slices. This may even imply in some cases that not all slices remain everywhere spacelike; hence, when using such a foliation, we shall implicitly restrict our attention to the part of spacetime where the hypersurfaces do remain spatial, if necessary. Within this class of foliation and lapse choices, 
the additional requirement of using comoving spatial coordinates defines a \textit{comoving and synchronous} picture which we call the
\textit{Lagrangian description} of the fluid (see Asada and Kasai~\cite{asada:vorticity} and Asada~\cite{asada:pressure}, inspired by Friedrich~\cite{friedrich}).

In the coordinates $(\tau, X^i)$ of the Lagrangian description, the components \eqref{eq:comp_u} of the 
$4-$velocity and its dual read:
\beq
	u^\mu = (1, 0) \, , 
	\qquad 
	u_\mu = (-1, \gamma v_i) \, , 
	\label{eq:fluid_sLag} 
\eeq
while the line element \eqref{eq:line_elem} takes the form:
\beq
	{\mathrm d}s^2 = 
		- {\mathrm d}t^2 + 2 \gamma v_i \, {\mathrm d}X^i {\mathrm d}t \, 
		+ h_{ij} {\mathrm d}X^i {\mathrm d}X^j \, . 
	\label{eq:line_elem_sLag}
\eeq
The Lagrangian condition $u^\mu = \delta^{\mu} {}_{0}$, as introduced in \cite{friedrich}, is therefore 
equivalent to setting simultaneously $\bm N = N \bm v$ and $N = \gamma$. It implies $g_{00} = -1$ or, equivalently, $N^2 - N^k N_k = 1$. 
In this description, as a special case of a comoving description (with the additional requirement of $N=\gamma$), the nonvanishing components of the fluid acceleration reduce to 
\beq
	a_i = \frac{\mathrm d}{{\mathrm d}\tau} u_i \; , 
\eeq
and those of the kinematic variables become: 
\begin{gather}
	\Theta_{i j} = \frac{1}{2} \frac{\mathrm d}{{\mathrm d}\tau} b_{i j} \; ; \quad \;
	\Theta = \frac{1}{2}\, b^{k l} \frac{\mathrm d}{{\mathrm d}\tau} b_{k l} {} = \frac{3}{\ell} \frac{\mathrm{d}\ell}{\mathrm{d}\tau} \; ; \qquad 
	\sigma_{i j} = \frac{1}{2} \, \ell^2 \frac{\mathrm{d}}{\mathrm{d}\tau} (\ell^{-2} b_{i j}) \; ; \nonumber 
	\\
	\omega_{i j} = 
		u_{[i} \frac{\mathrm d}{{\mathrm d}\tau} u_{j]} + \partial_{[i} u_{j]} \; . \qquad
\end{gather}
In the following derivations of the extrinsic and intrinsic averaging schemes, we will keep the lapse and 
shift unspecified, thereby considering a general description and preserving the four degrees of freedom of 
the foliation. The formulation of the averaged system in the Lagrangian description will be discussed 
later on (see section \ref{subsec:lagrangian_form}) as a particularly insightful special case within the intrinsic scheme. 

% ############################
% ### ### SECTION
% ############################

\section{Rest mass--preserving scalar averaging: fluid-extrinsic approach}
\label{sec:extrinsic}

In this section we recall the $3+1$ formulation of the Einstein equations with respect to the hypersurfaces of 
normal $\bm n$, and we present a formalism for spatial averaging over a compact domain that lies within the spatial hypersurfaces and that follows the fluid flow, based on the hypersurface volume measure. We then derive 
the corresponding commutation rule and averaged equations for the scalar parts of the Einstein equations, and we discuss some properties of the resulting backreaction terms and their relation to boundary terms. 
At the end of the section, we compare our approach and its results to previous proposals of generalization  of the framework of Papers~I and II that can be found in the literature, discussing in detail the differences and pinpointing
limitations.

% ###################### %
% ##### Subsection ##### %

\subsection{Dynamical equations}
\label{subsec:3+1}

The $3+1$ decomposition of the Einstein equations with respect to an arbitrary spatial foliation as described above, 
with the cosmological constant $\cc$ included, comprises the following evolution equations \cite{adm,mtw,smarr:foliation,alcub:foliation,gourg:foliation}: 
\begin{align}
	\partial_t\big|_{x^i} \, h_{ij} = 
		& - 2 N \CK_{ij} + N_{i || j} + N_{j || i} \, , 
		\label{eq:evol_h} \\
	\partial_t\big|_{x^i} \, \CK^i_{\phantom{i} j} = 
		& \; N \, \Big( \CR^i_{\phantom{i} j} + \CK \CK^i_{\phantom{i} j} 
			+ 4 \pi G \, \big[ \left( S - E \right) \delta^i_{\phantom{i} j} - 2 \, S^i_{\phantom{i} j} \big]
				- \cc \,\delta^i_{\phantom{i} j} \Big) \nonumber \\ 
		& - N^{|| i}_{\ \; || j} + N^k \CK^i_{\phantom{i} j || k} + \CK^i_{\phantom{i} k}  N^k_{\ || j}
			- \CK^k_{\phantom{k} j}  N^i_{\ || k} \, , 
		\label{eq:evol_K}
\end{align}
together with the momentum and energy constraints: 
\begin{align}
	 \CK^k_{\phantom{k} i || k} -  \CK_{|| i} 
		& = 8 \pi G \, J_i \, , 
		\label{eq:moment_const} \\ 
	\CR + \CK^2 - \CK^i_{\phantom{i} j} \CK^j_{\phantom{j} i} 
		& = 16 \pi G \, E + 2 \cc \; . 
		\label{eq:hamilt_const}
\end{align}
$\CR_{ij}$ and $\CK_{ij} \! := \! - h^\alpha_{\phantom{\alpha} i} \, h^\beta_{\phantom{\beta} j} \nabla_\alpha n_\beta$ 
are the components of the $3-$Ricci tensor and the extrinsic curvature of the hypersurfaces, respectively.
$\CR := h^{ij} \CR_{ij}$ and $\CK := h^{ij} \CK_{ij}$ are their respective traces.

In Appendix \ref{app:3+1_lag} we give the evolution equations for $h_{ij}$ and $\CK^i_{\phantom{i} j}$ along the 
congruence of the fluid, using the derivative ${\mathrm d}/{\mathrm d}t$ instead of $\partial_t\big|_{x^i}$, 
and we specify their expressions in the comoving and Lagrangian descriptions. 

% ###################### %
% ##### Subsection ##### %

\subsection{Fluid-extrinsic scalar averaging}
\label{subsec:av_proc}

% ------------------- %
% ## Subsubsection ## %

\subsubsection{Comoving-to-reference map}
\label{subsubsec:map}

We introduce a set of Lagrangian (or comoving) spatial coordinates $\bm X = \{ X^i \}$. 
These are chosen so as to coincide with the arbitrary reference spatial coordinates $\bm x = \{ x^i \}$ for all fluid elements on one constant-$t$ hypersurface (corresponding to a time coordinate value $t = t_{\mathbf{i}}$) which will be used to set initial conditions. The comoving coordinates of each fluid element subsequently remain constant 
along its flow line, as opposed in general to its reference coordinates $\bm x$. This arises from the different directions between the 
threading congruence of the fluid $(t, X^i = const.)$, given by $\bm u$, and the arbitrary coordinate congruence $(t, x^i = const.)$, given by $\bm \partial_t$. 	
The two sets of spatial coordinates $\bm x$ and $\bm X$ are related by a family of diffeomorphisms parametrized by the coordinate time $t$, 
\begin{align}
	& {\bf \Phi}_t \; : \;
		\CD_{\bm X} \;  \to  \; \CD_{\bm x} = {\bf \Phi}_t(\CD_{\bm X}) \, , 
		\nonumber \\ 
	 & \qquad \quad \bm X \;  \mapsto  \; \bm x = {\bf \Phi}_t(\bm X) := \bm f(t,\bm X) \, , 
		\label{eq:map} \\ 
	 & \mathrm{with} \quad 
		\bm f(t_{\mathbf i},\bm X) = \bm X  \, , \;\;
		\text{and with a Jacobian} \;\; J (t, \bm X) := \det{\frac{\partial \bm f(t,\bm X)}{\partial \bm X}} \, ,
		\label{eq:prop_map}
\end{align}
where $\CD$ refers to a compact domain lying within the hypersurfaces and being transported along the congruence of the fluid flow 
(hereafter \textit{comoving domain}). This specific transport ensures that the domain encloses the same collection 
of fluid elements at all times (an important feature to which we shall come back in the discussion). We denote the set of spatial coordinate values corresponding to this 
collection at a given time $t$ by $\CD_{\bm x}(t)$, or $\CD_{\bm x}$ for short, in the reference coordinates, and by $\CD_{\bm X}$ (by definition time-independent) in the comoving coordinates. 
The maps ${\bf \Phi}_t$ define on each constant-$t$ hypersurface a coordinate transformation between $\bm x$ and $\bm X$. Note that we assume throughout the regularity of the fluid flow implied by the existence
of congruences and invertible maps (diffeomorphisms), which excludes the description of caustics
that may occur for particular matter models.

From \eqref{eq:map} we reformulate the coordinate velocity \eqref{eq:coord_vel} as 
\beq
	\bm V = \frac{{\mathrm d} \bm x}{{\mathrm d}t} = \frac{{\mathrm d}}{{\mathrm d}t}  \bm f(t,\bm X)
		= \partial_t\big|_{X^i} \, \bm f(t,\bm X) \, , 
	\label{eq:coord_vel_II}
\eeq
where along the direction of the derivative $\partial_t\big|_{X^i}$, given by the fluid flow lines, the comoving spatial 
coordinates $X^i$ are kept fixed. Using \eqref{eq:prop_map} together with \eqref{eq:coord_vel_II} we have the identity: 
\beq
	\partial_t\big|_{X^i} \, J = J \, \partial_i V^i \, ,
	\label{eq:der_jacob}
\eeq
where the indices in $\partial_i$ and in the components $V^i$ refer to the coordinates $x^i$.
We remark that the Newtonian tools developed for the Lagrangian description of structure formation in cosmology can be 
applied to this diffeomorphism without difficulty (see \cite{buchert:lagrange}, \cite{ehlersbuchert} and references therein).

% ------------------- %
% ## Subsubsection ## %

\subsubsection{Volume of a domain and its time-evolution}
\label{subsubsec:vol_lagrange}

The Riemannian volume of the spatial domain $\CD$ within the hypersurfaces (referred to in the following as the \textit{hypersurface volume} or \textit{extrinsic volume} of the domain) is given by
\beq
	\CV_\CD (t) := \int_{\CD_{\bm x}} n^\mu {\mathrm d} \sigma_\mu
		= \int_{\CD_{\bm x}} \sqrt{h (t, x^i)} \, {\mathrm d}^3 x \, , 
	\label{eq:vol_D}
\eeq
where $h$ is the determinant of the spatial metric, $h := \det(h_{ij})$, and ${\mathrm d}\sigma_\mu$ is the oriented spatial volume element on the slices,
${\mathrm d}\sigma_\mu := - n_\mu \sqrt{h} \, {\mathrm d}^3 x$. 
We seek the coordinate-time variation of \eqref{eq:vol_D} along the fluid flow lines, namely we search 
for the expression of 
\beq
	\frac{\mathrm d}{{\mathrm d}t} \int_{\CD_{\bm x}} \sqrt{h (t, x^i)} \, {\mathrm d}^3 x \, . 
	\label{eq:vol_D_var}
\eeq
The operators ${\mathrm d}/{\mathrm d}t$ and $\int_{\CD_{\bm x}} \! \cdot \; {\mathrm d}^3x$ do not commute 
in general since the endpoints of the integral, determined by the spatial region $\CD_{\bm x}$, depend themselves 
on time. The fluid is moving with respect to the coordinate system $(t, x^i)$, and the domain of integration is 
attached to the fluid\footnote{%
	For the same reason, the operators ${\mathrm d}/{\mathrm d}\tau$ and $\partial_t |_{x^i}$ do not commute 
	either with $\int_{\CD_{\bm x}} \! \cdot \; {\mathrm d}^3x$.%
} (see figure~\ref{fig:var_domain}). We need to reformulate the integrand to get rid of this time-dependence. 

\begin{figure}[htb]
	\center{\includegraphics[scale=1.3]{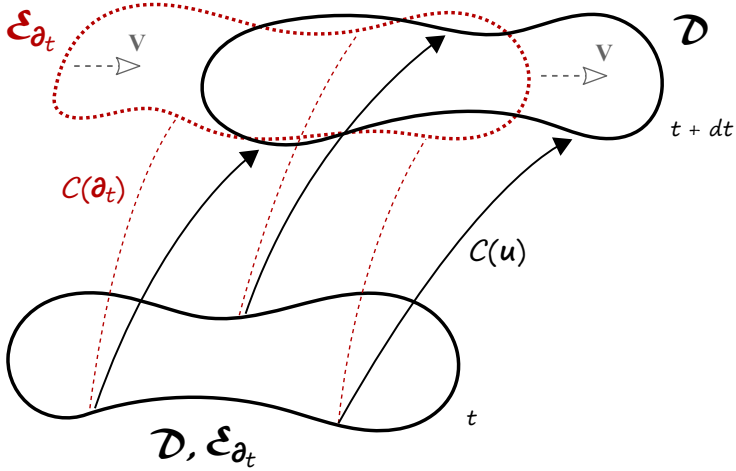}}
	\caption{%
		Representation of the motion of a compact domain $\CD$ between neighboring hypersurfaces. 
		$\CD$ is transported along the congruence of the fluid $\CC(\bm u)$, with $X^i = const.$, and contains 
		by construction the same collection of fluid elements throughout its evolution. 
		We introduce in this figure another compact domain, $\CEt$, carried along the congruence 
		$\CC(\bm{\partial}_t)$, 
		with $x^i = const.$, that coincides with $\CD$ at time $t$. $\CEt$ encloses the same collection of fluid 
		elements as $\CD$ at that time. At $t+{\mathrm d}t$, the two domains do not coincide anymore as the fluid 
		undergoes a spatial motion of velocity $\bm V$ in the coordinate system $(t, x^i)$  (hence 
		${\mathrm d}/{\mathrm d}t$ and $\int_{\CD_{\bm x}} \! \cdot \; {\mathrm d}^3x$ do not commute).
		This motion induces a flux of fluid particles across the boundary of $\CEt$.
		In the comoving and Lagrangian descriptions, the congruences $\CC(\bm{\partial}_t)$ and $\CC(\bm u)$ are 
		identical and this flux does not occur. A similar distinction would have to be made between $\CD$ and a domain transported along the flow of the hypersurfaces normal vector $\bm n$, with a flux of fluid particles accross the boundaries of the latter, except in the absence of tilt.}
	\label{fig:var_domain}
\end{figure}

To this aim, we consider the family of maps ${\bf \Phi}_t = \bm f(t,\cdot)$ introduced above to change the coordinates 
from $x^i$ to $X^i$. We have:
\beq
	x^i 
		= f^i(t,\bm X) \, , \qquad 
	{\mathrm d}^3 x 
		= \det \left( \frac{\partial \bm f(t,\bm X)}{\partial \bm X} \right) \, {\mathrm d}^3X 
		= J (t,\bm X) \, {\mathrm d}^3 X \, , 
	\label{eq:coord_change}
\eeq
while the region of integration transforms as $\CD_{\bm x} \to \CD_{\bm X} = {\bf \Phi}_t^{-1} (\CD_{\bm x})$. 
Inserting \eqref{eq:coord_change} into \eqref{eq:vol_D}, we get: 
\beq
	\CV_\CD (t) = \int_{\CD_{\bm X}} \sqrt{h (t, f^i(t,\bm X))} \, J(t,\bm X) \, {\mathrm d}^3X \, . 
	\label{eq:vol_D_lag}
\eeq
The invariance of the volume element $\sqrt{h(t,x^i)} \, {\mathrm d}^3x$ (here integrated over the same collection of 
fluid elements) with respect to changes of spatial coordinates appears here by noticing that 
$\sqrt{h (t, f^i(t,\bm X))} \, J(t,\bm X)$ above corresponds to the square root of the determinant of the components in the coordinate 
system $(t,\bm X)$ of the spatial metric $\bf h$. Obviously, the fluid is at rest in this coordinate system, allowing 
for the commutation of ${\mathrm d}/{\mathrm d}t = \partial_t\big|_{X^i}$ and $\int_{\CD_{\bm X}} \! \cdot \; {\mathrm d}^3X$.\footnote{%
	In contrast to the operator ${\mathrm d}/{\mathrm d}t$, the operator ${\mathrm d}/{\mathrm d}\tau$ does 
	not commute in general with $\int_{\CD_{\bm X}} \! \cdot \; {\mathrm d}^3X$, since 
	${\mathrm d}/{\mathrm d}\tau = (\gamma / N) \; {\mathrm d}/{\mathrm d}t$ depends on the spatial coordinates. 
} We can now write: 
\beq
	\frac{\mathrm d}{{\mathrm d}t} \CV_\CD =
		\int_{\CD_{\bm X}} \frac{\mathrm d}{{\mathrm d}t} \left(\sqrt{h(t,f^i(t,\bm X))} \, J \right) \, {\mathrm d}^3X \, , 
	\label{eq:vol_D_var_II}
\eeq
and, transforming the coordinates back to $x^i$ with the help of ${\bf \Phi}_t^{-1}$, we obtain: 
\beq
	\frac{\mathrm d}{{\mathrm d}t} \CV_\CD 
		 = \int_{\CD_{\bm x}} \frac{\mathrm d}{{\mathrm d}t} \left( J \sqrt{h} \right) J^{-1} \, {\mathrm d}^3 x 
		 = \int_{\CD_{\bm x}} \left( \frac{\mathrm d}{{\mathrm d}t} \sqrt{h}
			+ \sqrt{h} \, J^{-1} \frac{\mathrm d}{{\mathrm d}t} J \right) {\mathrm d}^3 x \, . 
	\label{eq:vol_D_var_III}
\eeq
Using the relations \eqref{eq:der_rel_1} and \eqref{eq:der_jacob}, this implies: 
\begin{align}
	\frac{\mathrm d}{{\mathrm d}t} \CV_\CD 
		& = \int_{\CD_{\bm x}} \left( \partial_t\big|_{x^i} \sqrt{h} 
			+ V^k \partial_k \sqrt{h} + \partial_k V^k \sqrt{h} \right) {\mathrm d}^3 x \nonumber \\ 
		& = \int_{\CD_{\bm x}} \left( \frac{1}{2} h^{ij} \partial_t\big|_{x^i} h_{ij} 
			+ \frac{1}{2} h^{ij} V^k \partial_k h_{ij} + \partial_k V^k \right) \sqrt{h} \, {\mathrm d}^3 x \, . 
	 \label{eq:vol_D_var_IV}
\end{align}
From the trace of the evolution equation \eqref{eq:evol_h} and noticing that 
\beq
	\frac{1}{2} h^{ij} \partial_k h_{ij} \,V^k  + \partial_k V^k =  V^k_{\phantom{k}|| k} \, , 
	\label{eq:vol_D_var_V}
\eeq
we finally end up with the expression of the coordinate-time variation of the hypersurface volume
(see Appendix \ref{app:3+1_lag} for an alternative derivation using instead the $3+1$ evolution equations along the congruence of the fluid): 
\begin{eqnarray}
	\frac{\mathrm d}{{\mathrm d}t} \CV_\CD 
		& = & 
			\int_{\CD_{\bm x}} \left( -N \CK + N^i_{\ || i} + V^i_{\ || i} \right)
			\sqrt{h} \, {\mathrm d}^3 x  
		\nonumber \\
		& = & 
			\int_{\CD_{\bm x}} \left( -N \CK + \big( N v^i \big)_{|| i} \right) \sqrt{h} \, {\mathrm d}^3 x \, ,
		\label{eq:vol_D_final}  
\end{eqnarray}
where we used relation \eqref{eq:relat_vel} for the last equality.

% ------------------- %
% ## Subsubsection ## %

\subsubsection{Averaging and commutation rule}
\label{subsubsec:com_rule}

We define the \textit{hypersurface-volume (or extrinsic)  average} of any scalar $\psi$ on a compact comoving domain $\CD$ lying within the arbitrary spatial hypersurfaces as 
\beq
	\average{\psi} (t) :=  \frac{1}{\VD} \int_\CD \psi \, n^\mu {\mathrm d}\sigma_\mu  = \frac{1}{\CV_\CD} \int_{\CD_{\bm x}} \psi (t, x^i) \, \sqrt{h (t, x^i)} \, {\mathrm d}^3x \, . 
	\label{eq:spat_aver}
\eeq
Applying this definition on \eqref{eq:vol_D_final}, we can write the rate of change of $\CV_\CD$ as 
\beq
	\frac{1}{\CV_\CD} \frac{\mathrm d}{{\mathrm d}t} \CV_\CD = \average{ -N \CK + \big( N v^i \big)_{|| i} } \, ,
	\label{eq:vol_rate}
\eeq
and express the coordinate-time derivative of the averaged scalar $\psi$ in the form:
\beq
	\frac{\mathrm d}{{\mathrm d}t} \average{\psi} = 
		- \average{ -N \CK + \big( N v^i \big)_{|| i} } \average{\psi} 
		+ \frac{1}{\CV_\CD} \frac{\mathrm d}{{\mathrm d}t}
			\int_{\CD_{\bm x}} \psi (t, x^i) \, \sqrt{h (t, x^i)} \, {\mathrm d}^3x \, .
	\label{eq:com_rule_I}
\eeq
The second term on the right-hand side is evaluated by following the same procedure as above: 
we perform a coordinate change by means of the maps ${\bf \Phi}_t$,
\begin{align}
	\frac{\mathrm d}{{\mathrm d}t} \int_{\CD_{\bm x}} \!\psi (t, x^i) \sqrt{h (t, x^i)} \, {\mathrm d}^3 x 
		& = \frac{\mathrm d}{{\mathrm d}t}
		     \int_{\CD_{\bm X}} \!\psi(t, f^i(t,\!\bm X)) \sqrt{h(t, f^i(t,\!\bm X))} \, J(t,\bm X) \, {\mathrm d}^3 X \nonumber \\ 
		& = \!\int_{\CD_{\bm X}}\! \frac{\mathrm d}{{\mathrm d}t} \!
			\left( \psi(t, f^i(t,\!\bm X)) \sqrt{h(t, f^i(t,\!\bm X))} \, J(t,\bm X) \right) {\mathrm d}^3 X \, , 
		\nonumber 
\end{align}
and, transforming back to the reference coordinates, expanding the integrand, and using once again the definition 
\eqref{eq:spat_aver}, we end up with 
\beq
	\frac{1}{\CV_\CD} \frac{\mathrm d}{{\mathrm d}t} \int_{\CD_{\bm x}} \psi \, \sqrt{h} \, {\mathrm d}^3 x 
		= \average{ \frac{\mathrm d}{{\mathrm d}t} \psi } + \average{ \left( -N \CK + \big( N v^i \big)_{|| i} \right) \psi } \, . 
	\label{eq:com_rule_III} 
\eeq
Plugging this equation into \eqref{eq:com_rule_I}, we finally obtain the commutation rule for extrinsic averages
over a spatial \textit{comoving domain}. We formulate this new result in the form of a \textit{Lemma}.

\begin{lemma}[Commutation rule for extrinsic volume averages] \label{lemma:com_rule} \\
\vspace{-7pt}	
	
	The commutation rule between extrinsic spatial averaging on a compact domain $\CD$, lying within a constant-$t$ hypersurface and comoving with the fluid, and comoving differentiation with respect to the coordinate time reads, for any 
	$3+1$ foliation of spacetime and for any scalar $\psi$:
	\beq
		\frac{\mathrm d}{{\mathrm d}t} \average{\psi} = 
			\average{ \frac{\mathrm d}{{\mathrm d}t} \psi } 
			 - \average{ -N \CK +  \big( N v^i \big)_{|| i} } \average{\psi} 
			 + \average{ \left( -N \CK + \big( N v^i \big)_{|| i} \right) \psi } \, . 
		\label{eq:com_rule_final}
	\eeq
\end{lemma}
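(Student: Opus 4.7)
The plan is to combine the two key computations that the excerpt has already performed: the rate of change of the hypersurface volume $\CV_\CD$ in equation (3.11), and the analogous time derivative applied to the weighted integral $\int_{\CD_{\bm x}} \psi \sqrt{h}\, \mathrm{d}^3 x$. Writing $\average{\psi} = \CV_\CD^{-1} \int_{\CD_{\bm x}} \psi \sqrt{h}\, \mathrm{d}^3 x$ and applying the product rule for $\mathrm{d}/\mathrm{d}t$ gives
\begin{equation*}
\frac{\mathrm{d}}{\mathrm{d}t}\average{\psi} = -\frac{1}{\CV_\CD}\frac{\mathrm{d}\CV_\CD}{\mathrm{d}t}\,\average{\psi} + \frac{1}{\CV_\CD}\frac{\mathrm{d}}{\mathrm{d}t}\int_{\CD_{\bm x}} \psi\,\sqrt{h}\,\mathrm{d}^3 x \, ,
\end{equation*}
so the first term on the right is immediately handled by (3.11), and the remaining task is to evaluate the derivative of the weighted integral.

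The central obstacle, already identified in the excerpt, is that $\mathrm{d}/\mathrm{d}t$ does not commute with $\int_{\CD_{\bm x}}\cdot\,\mathrm{d}^3 x$ because the coordinate region $\CD_{\bm x}$ itself depends on $t$ (the fluid moves with coordinate velocity $\bm V$ across the reference coordinate grid). I would resolve this exactly as in the derivation of the volume rate: pull back by the diffeomorphism $\bm \Phi_t$ to the comoving domain $\CD_{\bm X}$, which is $t$-independent by construction. On $\CD_{\bm X}$, differentiation and integration commute, allowing the $\mathrm{d}/\mathrm{d}t$ to be brought inside the integral acting on $\psi(t,\bm f(t,\bm X)) \sqrt{h(t,\bm f(t,\bm X))}\,J(t,\bm X)$.

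The integrand then splits via the Leibniz rule into $\sqrt{h}J\,\frac{\mathrm{d}\psi}{\mathrm{d}t} + \psi \,\frac{\mathrm{d}}{\mathrm{d}t}(\sqrt{h}J)$. The first piece, once pushed forward to $\bm x$-coordinates, yields $\average{\mathrm{d}\psi/\mathrm{d}t}$ after division by $\CV_\CD$. For the second piece, the computation of $\frac{\mathrm{d}}{\mathrm{d}t}(\sqrt{h}J)$ is precisely what produced the integrand of (3.11): using the trace of the evolution equation (3.1) together with the Jacobian identity (3.10) and the rearrangement (3.13), one obtains $\frac{\mathrm{d}}{\mathrm{d}t}(\sqrt{h}J) = \sqrt{h}J\,\bigl(-N\CK + (Nv^i)_{||i}\bigr)$. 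Transforming back to $\bm x$-coordinates and recognising the hypersurface-volume average then gives
\begin{equation*}
\frac{1}{\CV_\CD}\frac{\mathrm{d}}{\mathrm{d}t}\int_{\CD_{\bm x}} \psi\,\sqrt{h}\,\mathrm{d}^3 x = \average{\frac{\mathrm{d}\psi}{\mathrm{d}t}} + \average{\bigl(-N\CK + (Nv^i)_{||i}\bigr)\psi} \, .
\end{equation*}

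Substituting this back and using (3.11) for the first term on the right of the product-rule expansion completes the proof of (3.15). The only step that requires a minimum of care is the justification that $\mathrm{d}/\mathrm{d}t$ genuinely passes inside the integral over $\CD_{\bm X}$, which relies on the assumed regularity of the fluid flow and of the maps $\bm \Phi_t$ (no caustics); once this is granted, the remainder is a direct chain of substitutions using results already established.
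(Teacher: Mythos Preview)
Your proposal is correct and follows essentially the same approach as the paper: product-rule split of $\mathrm{d}\average{\psi}/\mathrm{d}t$, use of the volume rate (3.11) for one piece, and the same pullback-to-comoving-coordinates computation (yielding the paper's equation (3.14)) for the weighted integral. The only cosmetic difference is that the paper presents the product-rule split with (3.11) already inserted as its equation (3.12), whereas you state the split first and then substitute.
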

This commutation rule is independent of the shift vector, and hence is independent of the propagation of the spatial 
coordinates. This feature is inherited from the coordinate-independent definition of the propagation of the domain of averaging obtained by requiring it to be comoving with the fluid.

Note that, as shown in Appendix~\ref{app:usual_aver_intrinsic} (Eq.\eqref{eq:tilted_exp_rate_app} therein), the local terms appearing in the rate of change of the volume \eqref{eq:vol_rate} can be equivalently expressed in terms of the lapse, tilt, and fluid expansion rate as
\begin{equation}
\label{eq:rel_volume_rates}
- N \CK + (N v^i)_{||i} = \frac{N}{\gamma} \Theta - \frac{1}{\gamma} \ddt{\gamma} \; .
\end{equation}
The commutation rule can thus alternatively be written under the following form for any scalar $\psi$:
\begin{equation}
 \ddt{} \average{\psi} = \average{\ddt{}\psi} - \average{\frac{N}{\gamma} \Theta - \frac{1}{\gamma} \ddt{\gamma}} \average{\psi} + \average{\left( \frac{N}{\gamma} \Theta - \frac{1}{\gamma} \ddt{\gamma} \right) \psi} \; ,
 \label{eq:com_rule_final_theta}
\end{equation}
which will be useful when applied to fluid rest frame variables such as $\epsilon$ or $\varrho$.

% ###################### %
% ##### Subsection ##### %

\subsection{Conservation of the fluid rest mass}
\label{subsec:mass_cons}

We introduce the conserved fluid rest mass flux vector $\bm M$,
\beq
	M^\mu := \varrho u^\mu \, , \quad \;\; 
	\nabla_\mu M^\mu = 0 \, ,
	\label{eq:rest_mass_vector}
\eeq
from the (conserved) rest mass density $\varrho$. The rest mass of the fluid within the domain $\CD$ is given by the flow of $\bm M$ through $\CD$ (see also the remarks on the interpretation of the rest mass, rest mass density and flux vector in footnote \ref{fn:restmass}):
\beq
	M_\CD 
	:= \int_\CD M^\mu \mathrm{d} \sigma_\mu 
	 = \int_\CD - \varrho u^\mu n_\mu \sqrt{h} \, \mathrm{d}^3 x
	= \CV_\CD \average{\gamma \varrho} \, ,
	\label{eq:rest_mass_scalar}
\eeq
with the oriented spatial volume element $\mathrm{d} \sigma_\mu = - n_\mu \sqrt{h} \, \mathrm{d}^3 x$, and where we used 
$-u^\mu n_\mu = \gamma$, Eq.~\eqref{eq:lorentz}.

The conservation of this rest mass can be seen by integrating the conservation equation of $\bm M$ 
over the spacetime tube $\mathscr T$ swept by the domain $\CD$ between two hypersurfaces at times $t_1$ and $t_2 > t_1$:
\beq
	0 
	= \int_{\mathscr T} \nabla_\mu M^\mu \sqrt{g} \, {\mathrm d}^4 x
	= \oint_{\partial \! \mathscr T} M^\mu {\mathrm d} \eta_\mu \ ,
\eeq
where $g := |\det(g_{\mu \nu})|$ and ${\mathrm d} \eta_\mu$ is the outward-oriented volume element on the boundary $\partial \mathscr T$ of $\mathscr T$.
Introducing the timelike part $\mathscr A$ of $\partial \mathscr T$, with $\bm A$ its outward-oriented unit normal vector (see figure~\ref{fig:flowtube}) and 
${\mathrm d}V_{\!\mathscr A}$ its volume $3-$form, we rewrite the above as: 
\begin{eqnarray}
	0 & = & \int_{\CD_{t_2}} \gamma \varrho \, \sqrt{h} \, {\mathrm d}^3 x - \int_{\CD_{t_1}} \gamma \varrho \, \sqrt{h} \, {\mathrm d}^3 x
	+ \int_{\mathscr A} \, M^\mu A_\mu \, {\mathrm d}V_{\!\mathscr A} \nonumber \\
	& = & M_{\CD_{t_2}} - M_{\CD_{t_1}} + \int_{\mathscr A} \, \varrho \, u^\mu A_\mu \,  {\mathrm d}V_{\!\mathscr A} \, . 
\end{eqnarray}
The last term cancels out precisely because the domain propagates along the fluid flow lines so that the normal vector $\bm A$ is orthogonal to 
$\bm u$ everywhere on the boundary $\mathscr A$. We therefore end up with the conservation of the rest mass within $\CD$: $M_{\CD_{t_2}} = M_{\CD_{t_1}}$.

\begin{figure}[!ht]
\center{
		\includegraphics[scale=0.8]{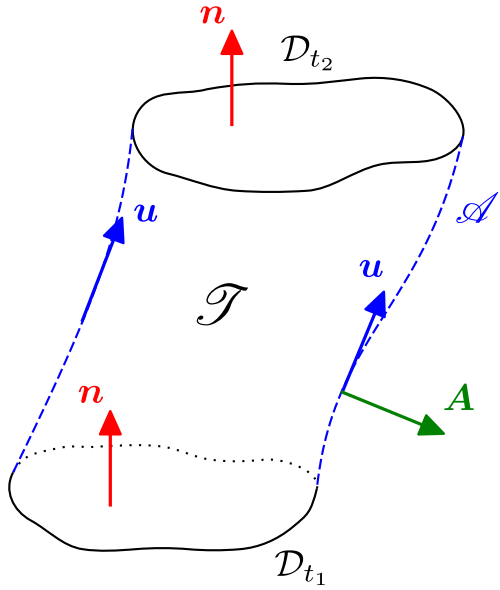}
	}
	\caption{%
		Representation of the flow tube $\mathscr{T}$ and the various vectors and subsets of $\partial \mathscr{T}$ used in the proof of the conservation of the fluid rest mass within $\CD$ in section \ref{subsec:mass_cons}. (We here illustrate orthogonality in terms of an Euclidean spacetime metric rather than the Lorentzian metric $\mathbf{g}$ for visualization purposes.)
	}
	\label{fig:flowtube}
\end{figure}

Alternatively, one can make use of the local continuity equation \eqref{eq:cons_restmass_density} for $\varrho$, equivalent to the conservation of $\bm M$ \eqref{eq:rest_mass_vector}, rewritten in terms of a coordinate-time comoving derivative:
\begin{equation}
	\frac{\mathrm d}{\mathrm dt} \varrho + \frac{N}{\gamma} \Theta \varrho = 0 \, .
	\label{eq:cons_restmass_t}
\end{equation}
Applying the commutation rule expressed in terms of $\Theta$, Eq.~\eqref{eq:com_rule_final_theta}, and the corresponding form of the volume expansion rate, $\VD^{-1} \, \mathrm{d}\VD/\mathrm{d}t = \langle (N/\gamma)\,\Theta - \gamma^{-1}\,\mathrm{d}\gamma/\mathrm{d}t \rangle_\CD$, to the average of the above local continuity equation multiplied by $\gamma$ then gives $\mathrm{d}(\VD \average{\gamma \varrho}) / \mathrm{d}t = 0$, recovering the conservation of $M_\CD$. 

% ###################### %
% ##### Subsection ##### %

\subsection{Averaged inhomogeneous cosmologies in the extrinsic approach}
\label{subsec:av_cosmo_extrinsic}

We introduced in the previous subsections a scalar averaging procedure on a compact spatial domain comoving with the fluid, based on the hypersurface volume. We derived the corresponding 
commutation rule and showed the preservation of the total fluid rest mass within the comoving domain. Both hold for any foliation of spacetime. 
By means of this formalism, and from the $3+1$ Einstein equations given in section \ref{subsec:3+1}, we now give an (under-determined) set of scalar balance equations describing the effective dynamics of spatially averaged (in terms of hypersurface-volume averages) comoving and compact regions of inhomogeneous cosmologies. 

% ------------------- %
% ## Subsubsection ## %

\subsubsection{Averaged evolution equations}
\label{subsubsec:av_evol}

Following the original proposal of \cite{buchert:av_Newton} (used in Papers~I and II), we define the \textit{hypersurface-volume (or extrinsic) effective scale factor} $a_\CD$ of the comoving domain $\CD$, 
\beq
	a_\CD (t) := \left( \frac{\CV_\CD (t)}{\CV_{\CD_\mathbf{i}}} \right)^{1/3}  \; ,
\eeq
where $\CV_{\CD_\mathbf{i}}$ refers to the volume at the initial time $t_\mathbf{i}$. The hypersurface-volume expansion rate 	\eqref{eq:vol_rate} then reads:
\begin{equation}
	\frac{1}{a_\CD} \frac{{\mathrm d} a_\CD}{{\mathrm d}t}
		\, = \, \frac{1}{3} \average{ - N \CK + \big( N v^i \big)_{|| i} }  \; .
	\label{eq:evol_aD_extrinsic}
\end{equation}
We can now average two scalar Einstein equations: the trace of $N \, \times \,$\eqref{eq:evol_K}, and $N^2 \times \,$\eqref{eq:hamilt_const}. Upon using the commutation rule \eqref{eq:com_rule_final} along with relation \eqref{eq:evol_aD_extrinsic} above, we obtain 
the effective evolution equations for a comoving region of an inhomogeneous model universe in the fluid-extrinsic averaging procedure, that we formulate in the form of a \textit{Theorem}.\\
\begin{thgroup}
\label{ths:av_extrinsic}
\begin{theorem}[Extrinsically averaged evolution equations] \label{th:av_evol} \\
\vspace{-7pt}

The evolution equations for the extrinsic effective scale factor of a compact spatial domain $\CD$ comoving with a general fluid flow read, for any $3+1$ spatial foliation of spacetime:
\begin{align}
	3 \, \frac{1}{a_\CD} \frac{{\mathrm d}^2 a_\CD}{{\mathrm d}t^2} 
		& = - 4 \pi G \average{ N^2 \left( \epsilon + 3 p \right)} + \average{ N^2} \cc + \CQ_\CD + \CP_\CD + \frac{1}{2} \CT_\CD \, , 
		\label{eq:av_raych} \\ 
	3 \left( \frac{1}{a_\CD} \frac{{\mathrm d} a_\CD}{{\mathrm d}t} \right)^2 
		& = 8 \pi G \average{ N^2 \epsilon} + \average{ N^2} \cc
			- \frac{1}{2} \average{N^2 \, \CR} - \frac{1}{2} \CQ_\CD - \frac{1}{2} \CT_\CD \, , 
		\label{eq:av_hamilt}
\end{align}
with $\CQ_\CD$, $\CP_\CD$ and $\CT_\CD$ respectively the (extrinsic) kinematical backreaction, dynamical backreaction, and stress-energy backreaction, defined as follows:
\begin{align}
	\CQ_\CD := 
	& \, \average{ N^2 \left( \CK^2 - \CK_{ij} \CK^{ij} \right) } - \frac{2}{3} \average{ - N \CK + \big( N v^i \big)_{|| i} }^2 \, ,
	\label{eq:kin_back}
\\
	\CP_\CD := 
	& \, \average{ \left( \big( N v^i \big)_{|| i} \right)^2 
		+ \frac{\mathrm d}{{\mathrm d}t} \left( \big( N v^i \big)_{|| i} \right) 
		- 2 N \CK \, \big( N v^i \big)_{|| i} 
		- N^2 v^i \CK_{|| i} } \nonumber
\\ 
	& \; + \average{N  N^{|| i}_{\ \; || i} - \CK \, \frac{{\mathrm d} N}{{\mathrm d}t}} \, , 
	\label{eq:dyn_back}
\\ 
	\CT_\CD := 
		& \, - 16 \pi G \average{ N^2 \left( ( \gamma^2 - 1 ) ( \epsilon + p ) 
		+ 2 \, \gamma v^\alpha q_\alpha + v^\alpha v^\beta \pi_\albe \right) } \, . 
	\label{eq:SE_back}
\end{align}
\end{theorem}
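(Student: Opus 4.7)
The plan is to derive the two averaged equations in parallel, building them from the already-established volume expansion rate identity \eqref{eq:evol_aD_extrinsic}, the $3+1$ Einstein equations of section~\ref{subsec:3+1}, the matter decomposition formulas \eqref{eq:rel_E}--\eqref{eq:rel_S}, and the commutation rule of Lemma~\ref{lemma:com_rule}. Throughout, I will write $\psi := -N\CK + (Nv^i)_{||i}$ so that $\langle \psi \rangle_\CD = 3\,\dot a_\CD/a_\CD$.

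The averaged Hamiltonian equation \eqref{eq:av_hamilt} is the easier of the two. I would multiply the local Hamiltonian constraint \eqref{eq:hamilt_const} by $N^2$ and apply $\average{\,\cdot\,}$. This produces $\average{N^2(\CK^2 - \CK^i{}_j \CK^j{}_i)}$, which by definition~\eqref{eq:kin_back} equals $\CQ_\CD + \tfrac{2}{3}\langle\psi\rangle_\CD^2 = \CQ_\CD + 6(\dot a_\CD/a_\CD)^2$. The matter term $16\pi G\,\average{N^2 E}$ is split via \eqref{eq:rel_E} into the rest-frame piece $16\pi G\,\average{N^2\epsilon}$ and the remainder, which by definition~\eqref{eq:SE_back} is precisely $-\CT_\CD$. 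Rearranging yields \eqref{eq:av_hamilt}.

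For the Raychaudhuri-like equation \eqref{eq:av_raych}, I would take $\mathrm d/\mathrm dt$ of \eqref{eq:evol_aD_extrinsic}, giving $3\bigl(\ddot a_\CD/a_\CD - (\dot a_\CD/a_\CD)^2\bigr) = \mathrm{d}\langle\psi\rangle_\CD/\mathrm{d}t$, and apply Lemma~\ref{lemma:com_rule} with the choice $\psi$ as above, which turns the right-hand side into $\average{\mathrm d\psi/\mathrm dt} - \langle\psi\rangle_\CD^2 + \average{\psi^2}$. The key computation is $\mathrm d\psi/\mathrm dt$: writing $\mathrm{d}/\mathrm{d}t = \partial_t\big|_{x^i} + V^k\partial_k$, taking the trace of the ADM evolution equation~\eqref{eq:evol_K}, and using the Hamiltonian constraint~\eqref{eq:hamilt_const} to eliminate $\CR$ in favour of $\CK^i{}_j \CK^j{}_i + 16\pi G\,E + 2\cc$, one obtains a manageable expression of the form
\begin{equation*}
\tfrac{\mathrm d}{\mathrm dt}(-N\CK) = -N^2\bigl(4\pi G(E+S) - \cc + \CK^i{}_j \CK^j{}_i\bigr) - \tfrac{\mathrm dN}{\mathrm dt}\CK + N N^{||i}{}_{||i} - N^2 v^k \CK_{||k} \; ,
\end{equation*}
using $N^k + V^k = N v^k$ from \eqref{eq:relat_vel} to combine the shift- and coordinate-velocity contributions.

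The last step is the bookkeeping: the quadratic terms $\average{\psi^2} - \langle\psi\rangle_\CD^2$, combined with $-\average{N^2 \CK^i{}_j\CK^j{}_i}$ from $\average{\mathrm d\psi/\mathrm dt}$, reorganize into $\CQ_\CD - \tfrac{1}{3}\langle\psi\rangle_\CD^2 = \CQ_\CD - 3(\dot a_\CD/a_\CD)^2$ once the defining relation~\eqref{eq:kin_back} is substituted for $\average{N^2(\CK^2 - \CK_{ij}\CK^{ij})}$; the remaining non-matter contributions $\average{-\CK\,\mathrm{d}N/\mathrm{d}t + N N^{||i}{}_{||i}}$ together with the cross terms $\average{((Nv^i)_{||i})^2 - 2N\CK(Nv^i)_{||i} - N^2 v^i \CK_{||i} + \mathrm{d}((Nv^i)_{||i})/\mathrm{d}t}$ coming from $\average{\psi^2}$ and $\average{\mathrm{d}((Nv^i)_{||i})/\mathrm{d}t}$ assemble into $\CP_\CD$ as in \eqref{eq:dyn_back}; finally, the matter term $-4\pi G\,\average{N^2(E+S)}$, expanded via \eqref{eq:rel_E}--\eqref{eq:rel_S}, splits into $-4\pi G\,\average{N^2(\epsilon+3p)}$ plus $\tfrac{1}{2}\CT_\CD$. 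Collecting these contributions and cancelling the $-3(\dot a_\CD/a_\CD)^2$ pieces on both sides yields \eqref{eq:av_raych}. The main obstacle is entirely this tracking of terms: every cross product and every tilt-induced contribution must land in the correct backreaction bin, and the cleanest way to carry it out is to identify $\CQ_\CD$, $\CP_\CD$, $\CT_\CD$ one at a time as one expands $\average{\mathrm d\psi/\mathrm dt}$, $\average{\psi^2}$, and $\langle\psi\rangle_\CD^2$.
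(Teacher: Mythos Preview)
Your proposal is correct and follows essentially the same route as the paper, which only sketches the proof in a single sentence: average $N^2$ times the Hamiltonian constraint \eqref{eq:hamilt_const} and the trace of $N$ times the evolution equation \eqref{eq:evol_K}, then use the commutation rule \eqref{eq:com_rule_final} together with the volume-expansion identity \eqref{eq:evol_aD_extrinsic}. Your explicit organisation---computing $\mathrm d\langle\psi\rangle_\CD/\mathrm dt$ via the commutation rule and then evaluating $\mathrm d\psi/\mathrm dt$ locally from the traced ADM equation with the Hamiltonian constraint substituted in---is just a detailed unpacking of that same strategy, and your term-by-term identification of $\CQ_\CD$, $\CP_\CD$ and $\CT_\CD$ is precisely the bookkeeping the paper leaves to the reader.
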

\medskip\noindent
{\bf Remarks to \textit{Theorem} \ref{th:av_evol}:}
Care should be taken in the interpretation of the system (\eqref{eq:av_raych},\eqref{eq:av_hamilt}).
These equations are globally invariant under the remaining coordinate freedoms, that is, (i) under any change of the spatial coordinates, 
or (ii) under a change of the time coordinate of the form $t \mapsto T(t)$ with $\mathrm{d}T / \mathrm{d}t > 0$ and accordingly of the lapse as 
$N \mapsto N' = N \,(\mathrm{d}T / \mathrm{d}t)^{-1}$ (which corresponds to a re-parametrization of the hypersurfaces).
However, individual terms, as well as each equation side taken separately, are invariant under changes of spatial coordinates only.
A time-coordinate change as above would rescale most terms, such as $\QD$, $\TD$ or
$3 \,( (1/a_\CD) \; \mathrm{d} a_\CD / \mathrm{d}t )^2$, by the time-dependent factor $( \mathrm{d}T / \mathrm{d}t )^{-2}$ (strictly preserving their sign).
The terms $\PD$ and $(3 / a_\CD) \, \mathrm{d}^2 a_\CD / \mathrm{d}t^2$ would undergo an affine transformation,
with this same rescaling plus an additional term (the same term for both, thus preserving the equation globally)
proportional to $(\mathrm{d} a_\CD / \mathrm{d}t)\, (\mathrm{d}^2 T / \mathrm{d}t^2)$, so that even their
sign can be arbitrarily changed in a time-dependent manner.

Accordingly, depending on what $t$ represents, the left-hand sides of equations (\eqref{eq:av_raych},\eqref{eq:av_hamilt})
may not follow an interpretation similar to the corresponding $3\, (\dot a / a)^2$ and $3 \, \ddot a / a$ of the standard Friedmann equations. 
These are unambiguously expressed as derivatives with respect to the common proper time of the comoving fluid.\footnote{%
	One could in the same way parametrize the Friedmann model by a different time coordinate while staying within the homogeneous foliation,
	and similarly get rescaled terms and an arbitrarily altered acceleration term (see, \textit{e.g.}, the system of equations (20) in Paper~II 
	\cite{buchert:av_pf} or the system of equations (40) in \cite{rza5}).
	The usual form of the Friedmann equations removes this freedom by choosing the proper time as the most natural time parameter in this situation.
	As, additionally, the spatial coordinates generally used in this framework are comoving with the fluid content, this picture corresponds to what we termed
	in this work a Lagrangian description.
}
Without a well-specified choice for $t$, conclusions may only be drawn on quantities that are invariant under the change of time coordinate expressed above. Such invariants include the sign of each term (except the dynamical backreaction and, importantly, the scale factor acceleration),
as we shall discuss, \textit{e.g.}, for the stress-energy backreaction in section \ref{subsubsec:back_remarks}. They also include effective dimensionless ``$\mathrm{\Omega}$'' parameters that may be defined
for a non-stationary $a_\CD$ ($\mathrm{d}a_\CD / \mathrm{d}t \neq 0$) by dividing each term of Eq.~\eqref{eq:av_hamilt} by $3 \,[ (1/a_\CD) \; \mathrm{d} a_\CD / \mathrm{d}t ]^2$.

The generality of \textit{Theorem} \ref{th:av_evol} allows one to choose the most suitable definition for $t$ in any specific application. 
The Friedmannian interpretation of $t$, $\mathrm{d}/\mathrm{d}t$ and $\mathrm{d}^2/\mathrm{d}t^2$ can be recovered for some choices that are applicable to general settings.
This is the case for instance for a Lagrangian description where $t$ is a proper time for all fluid elements, involving a choice of foliation (see section~\ref{subsec:lagrangian_form} for an example of application of this description). 
One could also set $t$ within any choice of foliation
such that it coincides with the proper time along some given, single timelike worldline, for instance taken to model 
the worldline of an observer on Earth. Once a specification of the time label is performed, each term of the above equations, including the acceleration 
term $(3 / a_\CD) \; \mathrm{d}^2 a_\CD / \mathrm{d}t^2$ or its sign, can be interpreted in direct relation to the physical meaning of the chosen $t$.

% ------------------- %
% ## Subsubsection ## %

\subsubsection{Integrability and energy balance conditions}
\label{subsubsec:av_cond}

We proceed by deriving the \textit{integrability condition} for the system of equations of \textit{Theorem} \ref{th:av_evol}, which provides the relation that has to hold for \eqref{eq:av_hamilt} to be the integral of \eqref{eq:av_raych}. This condition is obtained by taking 
the comoving coordinate-time derivative of \eqref{eq:av_hamilt}, and by inserting the set of equations \eqref{eq:av_raych} and \eqref{eq:av_hamilt} 
back into the result.\footnote{%
	Alternatively, we can derive the integrability condition directly from the Einstein equations. For this we can derive the local
	evolution equations for the square of the trace-free part of the extrinsic curvature and for the scalar $3-$curvature 
	using \eqref{eq:evol_K} and \eqref{eq:hamilt_const}. Averaging these equations and combining them, we also obtain the integrability condition above (\textit{cf.} \cite[appendix]{GBC}).
}
Complementing this condition by the average of the energy conservation equation \eqref{eq:en_cl}, we write the second part of the above \textit{Theorem} in the following.

\begin{theorem}[Integrability and energy balance conditions] \label{th:av_cond} \\
\vspace{-7pt}

A necessary condition of integrability of equation \eqref{eq:av_raych} to yield equation \eqref{eq:av_hamilt} is given by the relation:
\small{%
\begin{gather}
	\frac{\mathrm d}{{\mathrm d}t} \CQ_\CD + \frac{6}{a_\CD} \ddt{a_\CD} \CQ_\CD 
	+ \frac{\mathrm d}{{\mathrm d}t} \average{N^2 \CR} + \frac{2}{a_\CD} \ddt{a_\CD} \average{N^2 \CR} 
	+ \frac{\mathrm d}{{\mathrm d}t} \CT_\CD + \frac{4}{a_\CD} \ddt{a_\CD} \left(\CT_\CD + \CP_\CD \right) \nonumber \\
	= 16 \pi G \left( \frac{\mathrm d}{{\mathrm d}t} \average{N^2 \epsilon}
	+ \frac{3}{a_\CD} \ddt{a_\CD} \average{N^2 \left(\epsilon + p \right)} \right) + 2 \cc \frac{{\mathrm d}}{{\mathrm d}t} \average{N^2} \, , 
	\label{eq:int_condition}
\end{gather}
}%
\normalsize{%
where the source part on the right-hand side satisfies the averaged energy conservation law:}
\small{%
	\begin{multline}
		\!\!\!\!\! \frac{\mathrm d}{{\mathrm d}t} \average{N^2 \epsilon}
			+ \frac{3}{a_\CD} \ddt{a_\CD} \average{N^2 \left( \epsilon + p \right)}  
			= \average{\frac{N}{\gamma} \Theta} \! \average{N^2 p}
			- \average{\frac{N}{\gamma} \Theta \, N^2 p} \!\!
			- \average{\frac{1}{\gamma} \frac{{\mathrm d} \gamma}{{\mathrm d}t}} \! \average{N^2 p} \\ 
			+ \average{\left( 2 \frac{1}{N} \frac{{\mathrm d} N}{{\mathrm d}t}
			- \frac{1}{\gamma} \frac{{\mathrm d} \gamma}{{\mathrm d}t} \right) N^2 \epsilon}
			- \average{ \frac{N^3}{\gamma} \left( q^\alpha a_\alpha + \nabla_\alpha q^\alpha + \pi^\albe \sigma_\albe \right) } \, . 
		\label{eq:av_en_cons}
	\end{multline}
}%
\normalsize{%
This conservation law can be complemented by the conservation of the fluid rest mass, $\mathrm{d}M_\CD / \mathrm{d}t = 0$, which may be rewritten as follows:}
\small{\begin{equation}
 \ddt{} \average{\gamma \varrho} + \frac{3}{a_\CD} \ddt{a_\CD} \average{\gamma \varrho} = 0 \; .
\end{equation}}
\end{theorem}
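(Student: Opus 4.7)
My plan is to establish the three assertions of \textit{Theorem}~\ref{th:av_cond} separately. For the integrability condition, I would differentiate \eqref{eq:av_hamilt} with respect to the coordinate time and eliminate the resulting $(1/a_\CD)\,\mathrm{d}^2 a_\CD/\mathrm{d}t^2$ factor via the averaged Raychaudhuri equation \eqref{eq:av_raych}. Concretely, applying $\mathrm{d}/\mathrm{d}t$ to the Hamiltonian side produces a $(6/a_\CD)(\mathrm{d}a_\CD/\mathrm{d}t) \times (1/a_\CD)(\mathrm{d}^2 a_\CD/\mathrm{d}t^2)$ factor on the left, and matter, curvature, kinematical-backreaction and stress-energy derivative terms on the right. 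Substituting into the second-derivative term the acceleration equation \eqref{eq:av_raych} multiplied by $(2/a_\CD)(\mathrm{d}a_\CD/\mathrm{d}t)$, the Raychaudhuri matter piece $-4\pi G\langle N^2(\epsilon+3p)\rangle_\CD$ and the Hamiltonian matter piece $8\pi G\langle N^2\epsilon\rangle_\CD$ combine to reconstruct the $16\pi G\,(3/a_\CD)(\mathrm{d}a_\CD/\mathrm{d}t)\langle N^2(\epsilon+p)\rangle_\CD$ enthalpy-like term of \eqref{eq:int_condition}, while the $\cc\langle N^2\rangle_\CD$ contributions assemble into $2\cc\,\mathrm{d}\langle N^2\rangle_\CD/\mathrm{d}t$, and the backreaction, curvature and stress-energy pieces line up with the weights $6$, $2$, $4$ and $4$ stated in the theorem. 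Careful bookkeeping of these numerical coefficients is the step that will require the most attention.

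For the averaged energy balance I would start from the local law \eqref{eq:en_cl}. Using $\dot\epsilon=(\gamma/N)\,\mathrm{d}\epsilon/\mathrm{d}t$, valid for the scalar $\epsilon$ (section~\ref{subsec:derivatives}), and multiplying by $N^3/\gamma$ gives
\begin{equation*}
N^2\,\frac{\mathrm{d}\epsilon}{\mathrm{d}t} + \frac{N^3}{\gamma}\Theta(\epsilon+p)
= -\frac{N^3}{\gamma}\bigl(a_\alpha q^\alpha + \nabla_\alpha q^\alpha + \pi^{\alpha\beta}\sigma_{\alpha\beta}\bigr).
\end{equation*}
Rewriting the first term as $\mathrm{d}(N^2\epsilon)/\mathrm{d}t - 2N\epsilon\,\mathrm{d}N/\mathrm{d}t$, averaging, and applying the commutation rule in the form \eqref{eq:com_rule_final_theta} to $\psi=N^2\epsilon$, together with the volume expansion identity $3(\mathrm{d}a_\CD/\mathrm{d}t)/a_\CD = \langle(N/\gamma)\Theta - (1/\gamma)\mathrm{d}\gamma/\mathrm{d}t\rangle_\CD$ inherited from \eqref{eq:evol_aD_extrinsic} and \eqref{eq:rel_volume_rates}, the two occurrences of $\langle(N/\gamma)\Theta\,N^2\epsilon\rangle_\CD$ cancel pairwise. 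The remaining pressure contribution is then split by adding and subtracting $3(\mathrm{d}a_\CD/\mathrm{d}t)/a_\CD \times \langle N^2 p\rangle_\CD$, which reconstructs the enthalpy term $3(\mathrm{d}a_\CD/\mathrm{d}t)/a_\CD \times \langle N^2(\epsilon+p)\rangle_\CD$ on the left of \eqref{eq:av_en_cons} and the stated pressure, lapse and Lorentz-factor source terms on the right.

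The rest mass conservation, finally, is an immediate corollary of section~\ref{subsec:mass_cons}, where $M_\CD = \VD\,\langle\gamma\varrho\rangle_\CD$ was shown to be constant along the flow. Differentiating this identity and using the volume expansion rate $(1/\VD)\,\mathrm{d}\VD/\mathrm{d}t = 3(\mathrm{d}a_\CD/\mathrm{d}t)/a_\CD$ gives $\mathrm{d}\langle\gamma\varrho\rangle_\CD/\mathrm{d}t + (3/a_\CD)(\mathrm{d}a_\CD/\mathrm{d}t)\langle\gamma\varrho\rangle_\CD = 0$ directly. Equivalently, the same equation can be obtained by averaging the local continuity equation \eqref{eq:cons_restmass_t} multiplied by $\gamma$ and invoking Lemma~\ref{lemma:com_rule}, giving a consistency check on the formalism.
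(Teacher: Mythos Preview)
Your proposal is correct and follows essentially the same route as the paper. The paper derives the integrability condition by differentiating \eqref{eq:av_hamilt} in coordinate time and substituting \eqref{eq:av_raych} and \eqref{eq:av_hamilt} back into the result, exactly as you outline; for the averaged energy balance it rewrites the local conservation law \eqref{eq:en_cl} as $\mathrm{d}(N^2\epsilon)/\mathrm{d}t + (N/\gamma)\Theta\,N^2(\epsilon+p) = 2\,(N^{-1}\,\mathrm{d}N/\mathrm{d}t)\,N^2\epsilon - (N^3/\gamma)(q^\alpha a_\alpha + \nabla_\alpha q^\alpha + \pi^{\alpha\beta}\sigma_{\alpha\beta})$ --- which is precisely your multiply-by-$N^3/\gamma$ and product-rule step --- and then averages using the commutation rule \eqref{eq:com_rule_final_theta}, while the rest mass statement is taken directly from section~\ref{subsec:mass_cons}.
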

\end{thgroup}

\begin{proof}
\small{The local energy conservation law \eqref{eq:en_cl} implies: 
\beq
	\frac{\mathrm d}{{\mathrm d}t} \big( N^2 \epsilon \big) + \frac{N}{\gamma} \Theta \left( N^2 \left( \epsilon + p \right) \right) 
		= 2 \frac{1}{N} \frac{{\mathrm d} N}{{\mathrm d}t} N^2 \epsilon 
		- \frac{N^3}{\gamma} \left( q^\alpha a_\alpha + \nabla_\alpha q^\alpha + \pi^\albe \sigma_\albe \right) \, . 
	\label{eq:en_cl_scaled}
\eeq
Relation \eqref{eq:av_en_cons} is then recovered by averaging the local equation \eqref{eq:en_cl_scaled} and applying the commutation rule expressed in terms of $\Theta$, Eq.\eqref{eq:com_rule_final_theta}.} $\square$
\end{proof}

We present as \textit{Corollary}~\ref{cor:extrinsic_all} in Appendix~\ref{app:usual_aver_intrinsic} an equivalent formulation of the system of equations of \textit{Theorem}~\ref{ths:av_extrinsic}, focussing explicitly on the kinematic and dynamical variables of the fluid rather than on the geometric properties of the hypersurfaces (such as their intrinsic and extrinsic curvatures).

The system of equations of this theorem could also be rewritten in more compact ways, as we shall illustrate
for similar equations obtained within an alternative averaging approach in section~\ref{sec:intrinsic}.
We will keep it under the current form, as it
is already sufficient to discuss important properties and relations to the literature, to which we turn now.

% ###################### %
% ##### Subsection ##### %

\subsection{Discussion}
\label{subsec:discussion}

We summarize in the first part of this subsection the framework of our study thus far. We then discuss the backreaction terms that were defined, investigate boundary effects and boundary-free global domains, and finally discuss relations to the literature for global and general domains successively.

% ------------------- %
% ## Subsubsection ## %

\subsubsection{Summary}
\label{subsubsec:gen_remarks}

We have worked with three independent sets of worldlines: the normal congruence along $\bm n$, 
everywhere orthogonal to the hypersurfaces of constant coordinate time $t$; the congruence of the \textit{coordinate frames} along $\bm{\partial}_t$, \textit{i.e.} at constant $x^i$; and the threading congruence of the \textit{comoving frames} (or, equivalently, the fluid rest frames) 
along $\bm u$. 
The deviations between $\bm n$ and $\bm{\partial}_t$, on the one hand, and between $\bm n$ and $\bm u$, 
on the other hand, are identified respectively by the vector fields $\bm N$ and $\bm v$, while that between $\bm{\partial}_t$ and $\bm u$ is pinpointed by $\bm V$ (see figure~\ref{fig:schem_vect_3D}).

This general configuration allows for a fluid flow with vorticity and tilted with respect to the normal of the three-surfaces,
and for an arbitrary propagation of the spatial coordinates. Also, the lapse function is left unspecified, preserving the  freedom in the construction of the spatial slices and in their time parametrization. 

We have considered a compact spatial domain $\CD$, lying within the hypersurfaces and transported along the fluid flow lines,
thus enclosing by construction the same collection of fluid elements throughout the evolution.
In the generic situation, this domain undergoes a spatial motion in the coordinate system $(t,x^i)$, since the integral 
curves of $\bm{\partial}_t$ and $\bm u$ do not coincide (see figure~\ref{fig:var_domain}). 

Within this framework, we have defined an averaging formalism that is based on hypersurface-volume averaging, and we have established the general commutation rule (formula \eqref{eq:com_rule_final})
between the corresponding spatial averaging operation and differentiation with respect to the coordinate time along the fluid flow lines. 
We have then derived in \textit{Theorem}~\ref{ths:av_extrinsic} a set of scalar equations describing the regional dynamics of portions of an inhomogeneous fluid spatially averaged in this way. The results obtained hold for a general fluid and for a general foliation of spacetime and,
in particular, are independent of the propagation of the spatial coordinates. In such a general foliation, however, we have stressed the risk of too hastily interpreting these results, in particular of interpreting the time acceleration term
in the same way as the proper-time acceleration term $\ddot{a}/a$ of the standard Friedmann equations: its meaning strongly depends on the interpretation of the chosen time parameter $t$ itself.
We have also highlighted the Lagrangian foliation and coordinates choice as a transparent setting that allows one to recover the usual interpretation.

% ------------------- %
% ## Subsubsection ## %

\subsubsection{Comments on the backreaction terms}
\label{subsubsec:back_remarks}

The kinematical backreaction $\QD$ \eqref{eq:kin_back} and the dynamical backreaction $\PD$ \eqref{eq:dyn_back} 
generalize the expressions given in Paper~II. The emphasis is set here on the geometric variables of the foliation 
($\CK$, $\CK_{i j}$, $\CR$, \textit{etc}.), rather than on the kinematic variables of the fluid ($\Theta$, $\Theta_{i j}$, \textit{etc}.; see 
Appendix~\ref{app:usual_aver_intrinsic} for a formulation in terms of the latter). 
These two sets of variables are identical in the fluid-orthogonal approach of Paper~II, but they differ in the 
present framework. Differences with the setup of Paper~II can be made explicit in the kinematical backreaction 
term by reformulating it as
\begin{equation}
	\QD = \frac{2}{3} \left( \average{N^2 \CK^2} - \average{- N \CK + (N v^i)_{||i}}^2 \right) 
	- 2 \average{N^2 \rm \CK_{tl}^2} \; ,
\end{equation}
where the traceless part of the extrinsic curvature defines the squared rate of shear of the normal congruence,
$\CK_{\rm tl}^2 := \left(\CK_{i j} - (\CK / 3) \, h_{i j} \right) \left(\CK^{i j} - (\CK/3) \, h^{i j} \right) / 2$,
and the trace $\CK$ gives (up to a sign change) the expansion rate of the normal congruence. This formulation is reminiscent of Paper~II. However, it is no longer expressed in terms of kinematic variables, and it highlights an additional contribution 
$(N v^i)_{||i}$ from the Eulerian velocity (or, equivalently, the tilt). We can also notice additional terms due to 
the Eulerian velocity in the expression of the dynamical backreaction \eqref{eq:dyn_back}.

If the fluid is vorticity-free, we can choose a fluid-orthogonal foliation, namely we can set $\bm n = \bm u$ as 
in Paper~II and, thus, have $\bm v = \bm 0$ and $\gamma = 1$. In this configuration the geometric and kinematic 
variables coincide, as well as the accelerations associated to both frames ($\bm a$ and $\bm a^{(n)}$), and we recover the expressions of $\CQ_\CD$ and $\CP_\CD$ given in Paper~II. 
As this setting also implies the vanishing of the stress-energy backreaction $\CT_\CD$, we formally recover 
the same set of 
evolution equations for the effective scale factor (up to the additional inclusion of the cosmological constant 
contribution). This could have been expected, but note that here, in contrast to Paper~II, we allow for a non-vanishing 
shift vector and a non-perfect fluid.
As already discussed, and as for the commutation rule \eqref{eq:com_rule_final}, the shift does not contribute because 
local evolutions are studied along the fluid flow lines, and the spatial domain of averaging is comoving with the fluid. 
Hence, the shift vector neither plays a dynamical role locally nor on average. However, even though nonperfect-fluid effects are not formally present in the evolution equations for the effective scale factor, they still influence 
the dynamics through the local and average evolution of the energy density (see equations \eqref{eq:en_cl} and 
\eqref{eq:av_en_cons}).

In addition to contributing to the kinematical and dynamical backreaction terms, the tilt also yields the additional backreaction term $\CT_\CD$,
which we named \textit{stress-energy backreaction}, and which can be interpreted in the following ways.

Firstly, it measures the difference between the energy of the fluid as seen in its rest frames and as seen in the normal frames. In this sense, it is thus (up to an overall negative factor) an average measure of the kinetic energy 
of the fluid in the normal frames. Indeed, using relation \eqref{eq:rel_E} we can write
\begin{equation}
 	( \gamma^2 - 1 ) ( \epsilon + p ) + 2 \, \gamma v^\alpha q_\alpha + v^\alpha v^\beta \pi_\albe = E - \epsilon
		= T_{\mu \nu} n^{\mu} n^{\nu} - T_{\mu \nu} u^\mu u^\nu \, ,
\end{equation}
so that
\begin{equation}  \label{eq:SE_back_energy_difference}
 	\CT_\CD 
		= -16 \pi G \average{N^2 (E - \epsilon)} 
		= -16 \pi G \average{N^2 (T_{\mu \nu} n^{\mu} n^{\nu} - T_{\mu \nu} u^\mu u^\nu)} \, .
\end{equation}

Secondly, it also expresses the difference between the isotropic pressures measured in each of both frames, since combining relations \eqref{eq:rel_E} and \eqref{eq:rel_S} gives
\begin{equation}
 	E - \epsilon = S - 3 p = T_{\mu \nu} h^{\mu \nu} - T_{\mu \nu} b^{\mu \nu} \; .
\end{equation}
These two interpretations arise almost by definition: the backreaction term $\CT_\CD$ has been indeed introduced to express the dynamics of the averaging domain as sourced by averages
of scalar dynamical quantities of the fluid as seen in its rest frames, $\epsilon$ and $p$ (recall equations
\eqref{eq:av_raych}--\eqref{eq:av_hamilt}), rather than by the quantities measured in the normal frames, $E$ and $S$. Only the former
correspond to intrinsic thermodynamical quantities of the fluid that are directly described by its equation of state. 

Thirdly, as will be shown in section \ref{subsubsec:global_domain}, it corresponds to the `bulk' tilt contribution in that it survives for a boundary-free domain, while the tilt contributions to $\QD$ and $\PD$ are boundary terms. 

The last expression in equation \eqref{eq:SE_back_energy_difference} shows that our stress-energy backreaction corresponds (up to a numerical factor)
to the `fluid corrections' terms introduced by Brown \textit{et al.} in \cite{brown:aver}, while the first form \eqref{eq:SE_back}
is sufficient to identify it with the (unnamed and slightly more general) $\langle F \rangle$ term appearing in R\"as\"anen's equations in
\cite{rasanen:lightpropagation}, and to see that it reduces to the `tilt effects' noticed by Gasperini \textit{et al.} in \cite{marozzi:aver2}
in the particular case of a perfect fluid, still up to numerical factors.

The sign of $\CT_\CD$ will usually be constrained and will remain negative, consistently with the interpretation of $- \CT_\CD$ as a measure of kinetic energy, so that this backreaction
will contribute as a deceleration term to the effective acceleration equation \eqref{eq:av_raych}. This constraint is expressed by the following \textit{Proposition}.\\

\begin{proposition}[sign of the stress-energy backreaction]\\
\label{prop:SE_back}
\vspace{-7pt}

If the matter stress-energy tensor satisfies the Null Energy Condition, then
\begin{equation}
	(\gamma^2 - 1) (\epsilon + p) + v^\mu v^\nu \pi_{\mu \nu} \geq 0 \; ,
\end{equation}
and the following assumptions on the heat vector $\bm q$ separately impose $\CT_\CD \leq 0$:
\begin{itemize}
\item [(i)] a vanishing heat vector, $\bm q = \bm 0$ (this includes the case of a perfect fluid, for which the constant sign
of the corresponding `tilt effects' was already noticed in \cite{marozzi:aver2}), which is equivalent to defining the 
fluid $4-$velocity as an eigenvector of the stress-energy tensor; or,
\item [(ii)] a preferred mutual spatial orientation between $\bm v$ and (the projection onto the hypersurfaces of) $\bm q$ ensuring
$N^2 \gamma \, q_\mu v^\mu \geq 0$, locally or on average; or,
\item [(iii)] on the contrary and more realistically, a variable orientation of the heat vector de-correlated from that of $\bm v$ and from the value
of the lapse $N$ and Lorentz factor $\gamma$, so that the variable-sign term $N^2 \gamma \, q_\mu v^\mu$ is averaged out while the other terms all add up
positively: $\left| \average{N^2 \gamma \, q_\mu v^\mu} \right| \ll \average{(\gamma^2 - 1) (\epsilon + p) + v^\mu v^\nu \pi_{\mu \nu}}$.
\end{itemize}
\end{proposition}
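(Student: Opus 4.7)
The proof splits naturally into two independent tasks: (1) establishing that the Null Energy Condition forces the combination $(\gamma^2-1)(\epsilon+p) + v^\mu v^\nu \pi_{\mu\nu}$ to be non-negative, and (2) handling each of the three assumptions on $\bm q$ in such a way that the remaining heat-flux contribution $2\,\average{N^2 \gamma\, q^\alpha v_\alpha}$ in the definition \eqref{eq:SE_back} of $\CT_\CD$ does not spoil the sign.

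For task (1), my plan is to apply the NEC to a null vector of the form $k^\mu = u^\mu + s^\mu$, with $s^\mu$ unit-spacelike and orthogonal to $u^\mu$ (which is null since $u^\mu u_\mu = -1$ and $s^\mu s_\mu = 1$), and then to specialize $s$ to the normalized projection of $\bm v$ onto the fluid rest frame. Expanding $T_{\mu\nu} k^\mu k^\nu$ using the fluid-rest-frame decomposition \eqref{eq:se_fluid_u} and combining the NEC inequalities applied to $s$ and to $-s$ to eliminate the linear $q\cdot s$ term yields $\epsilon + p + \pi_{\mu\nu} s^\mu s^\nu \geq 0$ for every unit spacelike $s$ orthogonal to $u$. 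The core algebraic step of (1) is then to check, using $\gamma^2(1 - v^\alpha v_\alpha) = 1$ from \eqref{eq:lorentz} together with $\pi_{\mu\nu} u^\nu = 0$, that the normalized $\bm b$-projection of $\bm v$, namely $\hat s^\mu := (v^\mu + \gamma |v|^2 u^\mu)/\sqrt{\gamma^2 - 1}$ with $|v|^2 := v^\alpha v_\alpha$, is indeed such a unit spacelike vector orthogonal to $u$, and that $v^\mu v^\nu \pi_{\mu\nu} = (\gamma^2-1)\,\hat s^\mu \hat s^\nu\, \pi_{\mu\nu}$. Factoring the non-negative $\gamma^2 - 1$ then closes the inequality (the case $|v|=0$ being trivial).

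I expect this identification of $\hat s$ to be the one delicate point of the argument: $\bm v$ is defined orthogonal to $\bm n$ rather than to $\bm u$, so the spatial vector relevant to the NEC must first be obtained by projecting with the fluid rest-frame projector $\bm b$, and it is not obvious \emph{a priori} that its squared norm turns out to equal $\gamma^2 - 1$, precisely the coefficient multiplying the $(\epsilon+p)$ contribution. Once this matching is in place, the three heat-vector clauses follow directly. Under (i) the entire $\bm q$-contribution to the integrand of \eqref{eq:SE_back} vanishes identically. Under (ii) the term $2 N^2 \gamma\, q_\mu v^\mu$ is non-negative locally or after averaging by assumption, so the full integrand has a definite sign. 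Under (iii) the oscillatory $\bm q\cdot\bm v$ contribution is dominated in absolute value, after averaging, by the strictly positive term produced by (1), and $\CT_\CD \leq 0$ then follows from the triangle inequality applied to the averaged expression.
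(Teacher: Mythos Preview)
Your proof is correct and takes essentially the same approach as the paper. The paper's null vectors $k^\mu_\pm = \gamma v\, u^\mu \pm b^\mu_{\ \nu} v^\nu$ are precisely $\gamma v\,(u^\mu \pm \hat s^\mu)$ in your notation, so the two constructions differ only by an overall positive scaling; the paper computes $T_{\mu\nu}k^\mu_\pm k^\nu_\pm$ directly with the scaled vector rather than first deriving $\epsilon+p+\pi_{\mu\nu}s^\mu s^\nu\ge 0$ and then specialising, but the content is identical.
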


\begin{proof}
\small{
Noting that $(b^\mu_{\ \nu} v^\nu) (b_{\mu \rho} v^\rho) = b_{\mu \nu} v^\mu v^\nu = \gamma^2 - 1 = \gamma^2 v^2$, $v := \sqrt{v^\alpha v_\alpha}$, one can define two
future-pointing null vectors ${\bm k}_+, {\bm k}_-$ as
$k^\mu_\pm := \gamma v \, u^\mu \mp b^\mu_{\ \nu} n^\nu = \gamma v \, u^\mu \pm b^\mu_{\ \nu} v^\nu$.
The projections of the stress-energy tensor onto these vectors yield:
\begin{equation}
 T_{\mu \nu} k^\mu_\pm k^\nu_\pm = (\gamma^2 - 1) (\epsilon + p) + \pi_{\mu \nu} v^\mu v^\nu \mp 2 \gamma v \, q_\mu v^\mu \; .
\end{equation}
According to the Null Energy Condition (which we recall is a condition of positiveness of the projection $T_{\mu \nu} k^\mu k^\nu$
for any future-oriented null vector $\bm k$), both projections are positive, hence
\begin{equation}
 (\gamma^2 -1) (\epsilon + p) + \pi_{\mu \nu} v^\mu v^\nu \; \geq \; 2 \gamma v | q_\mu v^\mu | \; \geq \; 0 \; .
\end{equation}
Recalling that
$\CT_\CD = - 16 \pi G \average{ N^2 \left( ( \gamma^2 - 1 ) ( \epsilon + p ) + \pi_\mnu v^\mu v^\nu + 2 \gamma \, q_\mu v^\mu \right) }$
(equation \eqref{eq:SE_back}), and since $2 \gamma v < 2 \gamma$, even the (stronger) first inequality is insufficient to conclude on the sign
of $\CT_\CD$ without further assumptions on $\bm q$. This was to be expected since the same reasoning could be applied similarly after interchanging the roles played by $\bm u$ and $\bm n$ (that is, using the normal-frame decomposition of the stress-energy tensor,
which replaces for instance $\bm q$ by $\bm J$, and using the null vectors $k'^\mu_\pm := \gamma v \, n^\mu \mp h^\mu_{\ \nu} u^\nu$
instead of $k^\mu_\pm$), which exchanges $\CT_\CD$ and $- \CT_\CD$. This symmetry in the roles played by $\bm u$ and $\bm n$ is broken
by the possibility of constraining $\bm q$, which is an intrinsic property of the fluid, through physical assumptions
(e.g. assuming a perfect fluid), while this is not possible for the foliation-dependent vector $\bm J$.
$\square$
}
\end{proof}
Note that the same result holds under any of the other standard (Weak, Strong, Dominant) Energy Conditions as they all imply the
Null Energy Condition \cite{hawking:structure,wald:relativity}. 

% ------------------- %
% ## Subsubsection ## %

\subsubsection{Boundary terms and global averages}
\label{subsubsec:global_domain}

As previously illustrated (see figure \ref{fig:var_domain}), the spatial motion of $\CD$ in the coordinate system $(t,x^i)$ induces 
a flux of fluid elements with velocity $\bm V$ across the boundary of the domain $\CEt$, coinciding at some instant with $\CD$ 
and transported along the congruence of $\bm{\partial}_t$. In the same line of thoughts, there also exists a flux of fluid elements with
velocity $\bm N + \bm V = N \bm v$ across the boundary of the domain $\CEn$, coinciding with $\CD$ at some instant
and carried along the normal congruence.

The first boundary effect is related to the choice of the spatial coordinates, and it can be made to vanish by adopting a comoving picture. 
The second one is generated by the tilt, that is, the deviation of the fluid $4-$velocity with respect to $\bm n$, that translates into
a tilted motion of the comoving domain boundaries with respect to the normal of the slices. It will be present in general unless
the foliation is fluid-orthogonal,  a foliation choice which is not possible if the fluid has non-vanishing vorticity.
It is this second, coordinate-independent effect that impacts on the time variation of the hypersurface volume of the domain, as one can see upon
writing expression \eqref{eq:vol_D_final} as
\begin{equation}
\small{	\frac{\rm d}{{\rm d}t} \CV_\CD 
		 = \int_{\CD_{\bm x}} - N \CK \sqrt{h} \, {\mathrm d}^3 x
			+ \int_{\CD_{\bm x}} \big( N v^i \big)_{|| i} \sqrt{h} \, {\mathrm d}^3 x 
		 = \int_{\CD_{\bm x}}- N \CK \sqrt{h}\, {\mathrm d}^3 x + \oint_{\partial\CD_{\bm x}} N v^i {\varkappa}_i\,
			{\mathrm d}\varsigma\, , 
	\label{eq:vol_D_VI}}
\end{equation}
where we have used Gauss' theorem for the second equality. Above, $\boldsymbol{\varkappa}$ is the outward-pointing unit normal vector to the
boundary $\partial\CD$, whose surface element is denoted by ${\mathrm d}\varsigma$. This rewriting allows to clearly see how the tilt,
as measured by $\bm v$, contributes as a boundary flux term to the evolution of the domain's volume.

Similar tilt-related boundary terms affect the commutation rule \eqref{eq:com_rule_final} and the evolution equations for the effective 
scale factor \eqref{eq:av_raych}--\eqref{eq:av_hamilt}. They arise from the averages of covariant spatial three-divergences, which are
boundary terms as implied by Gauss' theorem:
\begin{equation}
 \average{A^i_{\ || i}} = \frac{1}{\VD} \int_{\CD_{\bm x}} A^i_{\ || i} \, \sqrt{h} \, {\mathrm d}^3 x
                        = \frac{1}{\VD} \oint_{\partial\CD_{\bm x}} A^i {\varkappa}_i \, {\mathrm d}\varsigma \; ,
\label{eq:gauss_theorem}
\end{equation}
for any spatial vector field $\bm A$.
These effects cannot be neglected in general; for a given fluid flow, their contribution entirely depends on the way
the slices are constructed, which locally affects the lapse and tilt amplitudes, and on the choice of the domain of interest (locally defining a specific boundary).

As an example, let us consider the commutation rule \eqref{eq:com_rule_final}. Successively applying \eqref{eq:gauss_theorem} to
$\bm A = N \bm v = \bm N + \bm V$ and $\bm A = \psi N \bm v$, we can rewrite it for 
any scalar $\psi$ under the following forms:
\begin{eqnarray}
 \frac{\mathrm d}{{\mathrm d}t} \average{\psi} & = & \average{ \frac{\mathrm d}{{\mathrm d}t} \psi } + \average{N \CK } \average{\psi} 
	 - \average{ \left( N \CK - \big( N v^i \big)_{|| i} \right) \psi } \nonumber \\
     & & \qquad - \average{\psi} \frac{1}{\VD} \oint_{\partial\CD_{\bm x}} N v^i {\varkappa}_i \, {\mathrm d}\varsigma  \\
 &=& \average{ N n^\mu \partial_\mu \psi} + \average{N \CK } \average{\psi} - \average{ N \CK \psi} \nonumber \\
     & & + \frac{1}{\VD} \oint_{\partial\CD_{\bm x}} \psi N \, v^i {\varkappa}_i \, {\mathrm d}\varsigma
	 - \average{\psi} \frac{1}{\VD} \oint_{\partial\CD_{\bm x}} N v^i {\varkappa}_i \, {\mathrm d}\varsigma \; ,
\label{eq:com_rule_boundary_terms}
\end{eqnarray}
where the second expression makes use of the total coordinate-time derivative with respect to $\bm n$, instead of $\bm u$ (as in the first expression), replacing $\mathrm{d}/\mathrm{d}t$ by $N n^\mu \partial_\mu$. 

For simplicity, we do not make the boundary contributions explicit in 
the evolution equations for $a_\CD$, although this could be done in the same 
manner. Instead, we illustrate their effect by comparing the set of averaged 
equations in the generic case to a restricted situation where all boundary 
terms cancel out. We consider to this aim the case of topologically 
closed spatial sections (that is, we assume that the hypersurfaces are 
compact three-dimensional manifolds without boundaries), and we extend 
the averaging domain to the whole compact boundary-free hypersurface, which we 
denote by $\mathrm{\Sigma}$.
From \eqref{eq:vol_D_VI}, the evolution of the domain volume becomes in this case:
\begin{equation}
 \frac{1}{\CV_\mathrm{\Sigma}} \frac{{\mathrm d} \CV_\mathrm{\Sigma}}{{\mathrm d}t} = - \gaverage{N \CK} \; ,
\end{equation}
so that the scale factor here satisfies $({\mathrm d} a_\mathrm{\Sigma} / {\mathrm d}t) / a_\mathrm{\Sigma} = - \gaverage{N \CK} / 3$.
Then, from \eqref{eq:com_rule_boundary_terms}, the \textit{extrinsic commutation rule for a global boundary-free averaging domain} can be written under the following equivalent forms:
\begin{eqnarray}
	\frac{\mathrm d}{{\mathrm d}t} \gaverage{\psi} & = & \gaverage{ \frac{\mathrm d}{{\mathrm d}t} \psi } 
		 + \gaverage{ N \CK } \gaverage{\psi} - \gaverage{ \left( N \CK - \big( N v^i \big)_{|| i} \right) \psi }  \; ; \nonumber \\ 
	\frac{\mathrm d}{{\mathrm d}t} \gaverage{\psi} & = & \gaverage{ N \, n^\mu \partial_\mu \psi} + \gaverage{ N \CK } \gaverage{\psi} - \gaverage{ N \CK \, \psi} \; , 
\label{eq:com_rule_global}
\end{eqnarray}
for any scalar $\psi$.

Applying \textit{Theorem} \ref{th:av_evol} to a global domain on topologically closed hypersurfaces ($\CD = \mathrm{\Sigma}$), we infer that the system of evolution equations \eqref{eq:av_raych}--\eqref{eq:av_hamilt} for the extrinsic effective scale factor remains formally unchanged as written,
while the global backreaction terms reduce to the following:
\begin{align}
	\CQ_{\mathrm{\Sigma}} =
		& \, 
		\frac{2}{3} \gaverage{ N^2 \CK^2 - \gaverage{ N \CK }^2 } - 2  \gaverage{ N^2 \CK_{\mathrm{tl}}^2}
		\; ; 
		\label{eq:kin_back_global} \\
	\CP_{\mathrm{\Sigma}} = 
		& \, - \gaverage{ N \CK \, n^\mu \partial_\mu N} - \gaverage{ N^{|| i} \, N_{|| i} }  \; ;
		\label{eq:dyn_back_global} \\ 
	\CT_{\mathrm{\Sigma}} = 
		& \, - 16 \pi G \gaverage{N^2 \left( ( \gamma^2 - 1 ) ( \epsilon + p ) 
		+ 2 \, \gamma v^\alpha q_\alpha + v^\alpha v^\beta \pi_\albe \right) } \; ,
		\label{eq:SE_back_global}
\end{align}
thanks to the vanishing of the averages of spatial covariant divergences (which are boundary terms) on $\mathrm{\Sigma}$. In particular,
for the calculation of the expression of $\CP_\mathrm{\Sigma}$ from the general $\PD$ \eqref{eq:dyn_back}, successive uses of this property provide the following equivalent expressions:
\begin{align}
 \CP_\mathrm{\Sigma} = 
	& \, \gaverage{N N^{|| i}_{\phantom{||i}|| i} - \CK \,\frac{{\mathrm d} N}{{\mathrm d}t}} - \gaverage{N \big(\CK \, N v^i \big)_{|| i}} \; ;
\label{eq:dyn_back_global_var1} \\
 \CP_\mathrm{\Sigma} = & \, - \gaverage{ N^{|| i} \, N_{|| i} + \CK \, \frac{{\mathrm d} N}{{\mathrm d}t}} + \gaverage{\CK \, N v^i N_{|| i}} \; .
\label{eq:dyn_back_global_var2}
\end{align}
The backreaction formulae \eqref{eq:kin_back_global}--\eqref{eq:SE_back_global} can be compared with the expressions in the general case,
\eqref{eq:kin_back}--\eqref{eq:SE_back}: the differences are the boundary contributions to the backreactions, erased when $\CD = \mathrm{\Sigma}$.
These include all explicit contributions of the tilt vector $\bm v$ to the kinematical and dynamical backreactions, which have disappeared
in the above expressions \eqref{eq:kin_back_global}--\eqref{eq:dyn_back_global}. The alternative expressions
\eqref{eq:dyn_back_global_var1}--\eqref{eq:dyn_back_global_var2} for the dynamical backreaction when $\CD = \mathrm{\Sigma}$ show, nevertheless, that the
tilt vector still manifests itself through the difference between coordinate-time total derivatives along the fluid flow ${\rm d}/{\rm d}t$
and along the hypersurface-orthogonal (normal) flow $N n^\mu \partial_\mu$, here applied to the lapse $N$. Moreover, the existence of a tilt still
influences the dynamics of the extrinsic effective scale factor through the stress-energy backreaction, which is unchanged whether the domain has boundaries
or not. Indeed, the stress-energy backreaction is not a boundary effect but instead a manifestation of, e.g., the \textit{local} difference between
the rest frame energy of the fluid and its energy as measured in the normal frames.

The integrability condition and the averaged energy conservation law for an hypersurface-volume average performed over a closed hypersurface 
are, respectively, deduced from relations \eqref{eq:int_condition} and \eqref{eq:av_en_cons} without change. The same terms are involved,
since no explicit three-divergence term appears in these two expressions. However, the backreactions appearing in
the integrability condition should again be replaced by their simplified expressions above.

% ------------------- %
% ## Subsubsection ## %

\subsubsection{Relations to the literature: global averages}
\label{subsubsec:literature_global}

The averaged equations and the commutation rule that we obtained in the particular case $\CD = \mathrm{\Sigma}$ are equivalent to those derived by R\"as\"anen in \cite{rasanen:lightpropagation},\footnote{%
This is not obvious at first glance, due to
a different choice of the scalars that have been averaged, i.e. in contrast to our case the averaged
quantities in \cite{rasanen:lightpropagation} do not involve the factor $N^2$. Hence, the averaged equations do not appear identical to those
obtained in the present work. To see that they are equivalent, the use of the corresponding local equations is necessary. The notations
also differ (mostly because the description adopted in \cite{rasanen:lightpropagation} is explicitly $4-$covariant);
one should take care in particular of the fact that in \cite{rasanen:lightpropagation} the notation $\partial_t$ is used for
the coordinate-time covariant derivative along $\bm n$ (i.e. $N n^\mu \nabla_\mu$ in the notations of the present work),  rather than for the  coordinate-time partial derivative $\partial_t |_{x^i}$.
}
where all averages were taken on the whole boundary-free hypersurface
(which was not assumed to be topologically closed and compact;
instead, the existence of the averages was implied by an assumption
of statistical homogeneity of the spatial hypersurfaces).
The above average equations for the $\CD = \mathrm{\Sigma}$ case
are also identical to those obtained by Tanaka and Futamase
in \cite{futamase:aver2} (following from \cite{futamase:aver1} and supplementing
the corresponding equations with the contributions of the cosmological constant),
while the commutation rule was not explicitly given in these papers.
Periodic boundary conditions were assumed, so that the situation considered
was equivalent to a global averaging over hypersurfaces with a closed
$3$-torus topology. The vanishing shift considered in these papers does not affect
the results since, as seen above, this vector neither contributes to the local nor to the average dynamics.

One also recovers the same averaged equations and commutation rule as in 
section~\ref{subsubsec:global_domain} above by restricting in the same way
the expressions obtained by Brown \textit{et al.} in \cite{brown:aver} to the compact boundary-free domain case (whereas it is not the case
for the results of Larena in \cite{larena:aver} due to the different choice of scale factor). More surprisingly, the averaged and commutation
relations derived by Gasperini \textit{et al.} in \cite{marozzi:aver2} (or by Smirnov in \cite{smirnov:aver} within the same formalism) remain formally similar to the equations we get in our boundary-free $\CD = \mathrm{\Sigma}$ case hereabove, even when applied to a general domain.
This originates from the different
propagation of the averaging domain, which in \cite{marozzi:aver2,smirnov:aver} is chosen to be along the flow of $\bm n$;
accordingly, the natural time derivative in their approach is $N n^\mu \partial_\mu$ (in the notations of the present work). This similitude in the equations
(or, equivalently, the fact that the averaged equations and commutation rule of \cite{marozzi:aver2,smirnov:aver} are formally unchanged
by restricting them to the case $\CD = \mathrm{\Sigma}$) indeed shows that boundary terms only occur when the domain's boundaries follow a tilted flow
with respect to the normal to the hypersurfaces in which the domain is embedded. There is no such tilt in the domain propagation in \cite{marozzi:aver2,smirnov:aver},
hence boundary terms are absent, despite the non-vanishing local tilt vector between the fluid and normal flows. As in our case, this local tilt still
influences the dynamics \textit{via} the difference in energy density and pressure between the local frames orthogonal to each of these flows.

% ------------------- %
% ## Subsubsection ## %

\subsubsection{Relations to the literature: transport of the averaging domain}
\label{subsubsec:literature_transport}

In the more generic case of an averaging domain not covering the whole hypersurface, its time propagation
needs to be specified. Three choices in particular, determined by the three congruences we introduced 
(see figure~\ref{fig:schem_vect_3D}), may appear as `natural' definitions of the transport of the averaging domain.

The first choice is to assume a domain evolving along the congruence of the coordinate 
frames $\bm{\partial}_t$. This is the situation implicitly considered by Larena  
\cite{larena:aver} and Brown \textit{et al.} \cite{brown:aver} (see also the respective 
applications of these papers in \cite{larena:aver_app} and 
\cite{brown:aver_app1,brown:aver_app2}).
Such a construction picks up two important issues: first, given a particular choice of shift, 
the vectors $\bm{\partial}_t$ and $\bm u$ will not be collinear in general, hence there will 
be a flow of fluid elements across the domain boundary. This calls the physical relevance 
of the averaged system into question as the domain will not encompass the same collection 
of fluid elements throughout its evolution, i.e. it will not conserve its rest mass 
content. Second, for the same spacetime and the same 
foliation, the location of the domain at a given time will depend on the choice of the shift 
vector, as it determines the direction of $\bm{\partial}_t$. This leads to an unphysical 
dependency of the averaged system (hence, of all spatial average properties) on the choice 
of the spatial coordinates and on the way they propagate. 

The second choice is to assume a spatial domain evolving along the integral curves 
of the normal frames $\bm n$. This is the configuration considered by Gasperini \textit{et al.} \cite{marozzi:aver2}
(see also the follow-up paper \cite{marozzi:aver3}). Their averaging formalism, as 
introduced in \cite{marozzi:aver1}, is based on the construction of a spacetime window 
function characterizing the averaging domain to be considered, and is written in manifestly 
$4-$covariant form. While this formalism is suitable for a freely specifiable propagation 
of the domain boundaries, the averaged system of equations derived in \cite{marozzi:aver2},
both in $4-$covariant and $3+1$ forms, has assumed a transport along $\bm n$ (see equation (3.2) therein).\footnote{%
	Accordingly, and in contrast to a statement of \cite{marozzi:aver2}, the resulting averaged system of equations, as expressed in $3+1$ form, is \textit{not}  identical to that of Brown \textit{et al.} \cite{brown:aver} for a non-vanishing shift, as in this latter study the domain is transported along $\bm{\partial}_t$. This becomes true if a vanishing shift is chosen, due to the proportionality of $\bm n$ and $\bm{\partial}_t$ in this case. As correctly stated, however, the averaged system of equations in $3+1$ form of \cite{marozzi:aver2} becomes identical to that of Paper II for an irrotational perfect fluid if the fluid rest frames are used to generate the spatial hypersurfaces. This is indeed expected as in this case $\bm n = \bm u$, hence the domain has the same (fluid-comoving) evolution as in Paper II.
}

This choice of propagation was also the one adopted by Smirnov \cite{smirnov:aver} 
and Beltr\'an Jim\'enez \textit{et al.} \cite{dunsby:aver}. 
In these papers, $\bm n$ is assumed to be geodesic and to correspond to the $4-$velocity of an irrotational non-interacting dust contribution
to the stress-energy tensor, in contrast to \cite{marozzi:aver2} where this normal vector was freely specifiable.
The formalism of Smirnov is otherwise close to that of Gasperini \textit{et al.}
\cite{marozzi:aver2}, from which it is directly inspired, with both $4-$covariant and $3+1$ forms of the averaged equations. 
Beltr\'an Jim\'enez \textit{et al.} \cite{dunsby:aver} consider a $3+1$ description, with a 
vanishing shift and a trivial lapse ($N = 1$, allowed by the geodesic assumption on $\bm n$) but still tilted fluid flows, and their domain actually follows both $\bm{\partial}_t$ and $\bm n$ as the vanishing shift makes these two directions identical.

\begin{figure}[!ht]
\center{
		\includegraphics[scale=0.8]{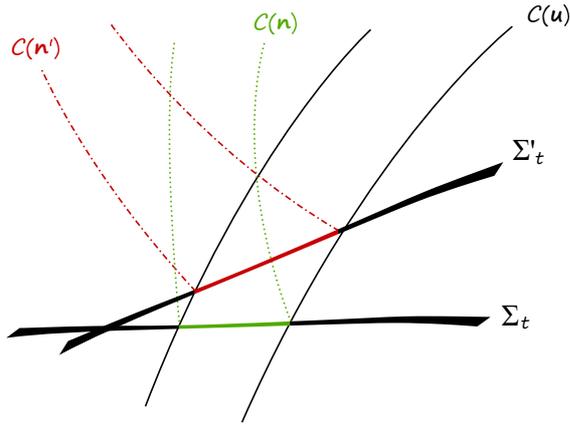}
	}
	\caption{%
		We here illustrate the situation where the propagation of the averaging domain
		is chosen so as to follow the normal of the hypersurfaces at stake. 
		For the foliation of slices $\mathrm{\Sigma}_t$, the domain locus is described by the associated 
		normal congruence $\CC(\bm n)$ (green dotted lines). For another foliation of slices
		$\mathrm{\Sigma}'_{t}$, it is described by the normal congruence $\CC(\bm n')$ (red 
		dash-dotted lines), which differs in general from $\CC(\bm n)$.
		The spatial domains selected in this way by both foliations may be matched to the same set of fluid elements (represented by the fluid congruence $\CC(\bm u)$ in thin continuous black lines) at a given time $t$; such domains are represented by the continuous-line   
		colored sections of the corresponding hypersurfaces, $\mathrm{\Sigma}_t$ and $\mathrm{\Sigma}'_t$. However, at subsequent times the fluid elements collected within both domains will differ.
		Choosing a domain transport along the normal of the hypersurfaces constructs 
		different four-dimensional tubes, corresponding to different physical systems, for different foliations. It will also 
		imply a flow of fluid elements across the domain boundary in general.
	}
	\label{fig:4dtube}
\end{figure}
The choice of a domain transport along $\bm n$ leads to formally simpler averaged equations
in terms of the geometric variables of the foliation
due to the vanishing boundary terms (see section~\ref{subsubsec:literature_global}).
It also makes the propagation of the averaging domain independent of the propagation 
of the spatial coordinates, but this propagation becomes instead dependent on the choice of the foliation 
which defines the vector $\bm n$. One could argue that such a dependence is inherently 
present in any spatial averaging scheme, since the domain of averaging lies by definition 
within the hypersurfaces built from the foliation. However, the dependence we refer to can be 
understood from a spacetime perspective: by changing the foliation, and hence the vector 
$\bm n$, the four-dimensional tube spanned by the domain transported along this vector will 
be different (see figure~\ref{fig:4dtube}). We also note that the 
first drawback mentioned previously for an evolution along $\bm{\partial}_t$
also holds for a transport along $\bm n$: in presence of tilt ($\bm n \neq \bm u$), the particle content of the domain will be altered during its evolution and, as a consequence, the rest mass of the fluid within the domain will
generically not be conserved.

Two generalization schemes to non-fluid-orthogonal foliations have been suggested by R\"as\"anen in \cite{rasanen:lightpropagation} (see also the application \cite{rasanen:lightpropagation_app}), and by Kasai \textit{et al.} in \cite{futamase:aver1} followed by Tanaka \& Futamase in \cite{futamase:aver2}, where such issues related to the propagation
of the domain boundaries are avoided. However, in both cases this requires specific choices of the averaging domain that restrict the scope to large scales and to a class of foliations where the assumptions made in these papers can hold. R\"as\"anen \cite{rasanen:lightpropagation} derives the averaged equations in a $4-$covariant
form for a domain covering the whole (typically non-bounded) hypersurfaces, thus without the need for specifying its propagation. The convergence of the averages for such an infinite domain is ensured by the assumption of statistical homogeneity to hold in these hypersurfaces. In turn, the system of averaged equations obtained by Tanaka \& Futamase in \cite{futamase:aver2} (slightly generalizing that of \cite{futamase:aver1}) requires a domain and foliation where periodic boundary conditions can be assumed.\footnote{%
This system of averaged equations is given in a background-independent scheme 
as a preliminary step in \cite{futamase:aver1,futamase:aver2}. However, the emphasis is subsequently put on linear perturbation theory around a Friedmannian background, on which the main conclusions are based. Accordingly, no or negligible contributions from backreaction are found in this setting, which is expected due to the nonlinear and background-free nature of backreaction. We emphasize that mixing background-dependent applications with a background-free framework may imply strong restrictions, e.g. the small backreaction found by Russ \textit{et al.} \cite{Russ} in second-order perturbation theory with a Friedmannian background must in reality vanish due to the geometric constraints imposed (see the comments in Paper~I \cite{buchert:av_dust}, Sect. 3.4.).} 
The transport of the averaging domain is not specified; this does not affect the results due to the vanishing of any boundary term.
Comparing with \cite{marozzi:aver2} and in view
of the discussion above in section \ref{subsubsec:literature_global}, we conclude that the average equations
obtained in both schemes discussed in this paragraph, \cite{rasanen:lightpropagation,futamase:aver2}, would remain valid in a general foliation, and for any domain, provided it is required that its boundaries propagate along $\bm n$ (which would also be a propagation along $\bm \partial_t$ in \cite{futamase:aver2} in view of the vanishing shift vector choice) in order to prevent the occurrence of additional boundary terms. A wider applicability of the schemes would thus be recovered, but the drawbacks highlighted above for such a propagation would also be retained.

The third choice, which we adopt in the present work, is that of a domain 
comoving with the fluid. As its boundaries follow the fluid flow $\bm u$, the averaging 
domain always sweeps out the same four-dimensional tube of spacetime, whatever the choice of the foliation and spatial coordinates.
This option also ensures, by definition, that the domain encloses the same collection of fluid elements throughout its evolution,
which in turn implies the conservation of the fluid rest mass within $\CD$. 
Choosing such a domain propagation therefore avoids all of the drawbacks mentioned above. 
It should be noted, however, that the advantage of rest mass conservation within the 
spatial domain for a model universe filled with several fluids would not hold (for each fluid), in general. 
A multi-fluid approach would require to 
pick up and follow one preferred fluid congruence, preserving the corresponding rest mass only, while allowing the others 
to flow across the domain boundaries (see, \textit{e.g.}, \cite{ismael:multifluid}), or, to use the joint barycentric velocity for all fluids, ensuring only the preservation of the total combined rest mass.
The rest mass within the domain could be conserved simultaneously for every fluid
only by assuming that the $4-$velocities of all fluids coincide, at least at the domain boundary,\footnote{%
The averaged equations are in general defined for arbitrary domains. If an assumption is adopted that distinct fluid congruences
coincide or ``average out" on the boundary, the arbitrariness of the domain choice has to be given up.%
}
or that the spatial domain is extended to the whole hypersurface. 
In the present work we consider a cosmological model sourced by a single fluid, which 
should satisfactorily account for the description of the main cosmological epochs largely dominated by a 
particular fluid (radiation or dust).

% ------------------- %
% ## Subsubsection ## %

\subsubsection{Relations to the literature: comparison of the final averaged equations}
\label{subsubsec:literature_general}

Most authors cited in the above discussion base their studies either on a direct $3+1$ formulation of the evolution and averaged equations,
or on a formulation using explicitly $4-$covariant terms from which a $3+1$ form is explicitly deduced. This allows for a rather direct comparison
with the formalism and results presented so far in this paper (sections \ref{subsec:av_proc} to \ref{subsec:av_cosmo_extrinsic}).\footnote{%
The averaged energy conservation equation and the integrability condition (see section \ref{subsubsec:av_cond} above) are not always considered. The $3+1$ approach of Beltr\'an Jim\'enez \textit{et al.} \cite{dunsby:aver} differs from the one used here in that it does neither include lapse nor shift, while Tanaka \& Futamase \cite{futamase:aver2} consider a nontrivial lapse along with a vanishing shift.
In the approach of R\"as\"anen \cite{rasanen:lightpropagation},
the formulation is only given in explicitly $4-$covariant terms; also in this case can a $3+1$ formulation be readily deduced, for comparison with the above averaged equations, upon making a coordinate choice including the appropriate time $t$.}

All of the corresponding systems of $3+1$ averaged equations are manifestly different from the one we obtain in section
\ref{subsec:av_cosmo_extrinsic} due to the different propagation of the averaging domain.
However, we notice a formal similarity between the commutation rule
\eqref{eq:com_rule_final} and system of dynamical equations for the effective scale factor \eqref{eq:av_raych}--\eqref{eq:av_hamilt} that
we present, and those of Brown \textit{et al.} \cite{brown:aver}.
The tilt vector weighted by the lapse $N \bm v$ appearing in several terms in our commutation rule and backreaction formulas
would be formally replaced by the shift vector $\bm N$ in the latter paper, both representing the deviation of the vector flow followed by the domain (respectively
$\bm u$ and $\bm{\partial}_t$) to the normal to the slices $\bm n$ in the corresponding framework. Similarly, the time derivative
${\rm d}/{\rm d}t$ along $\bm u$ would be replaced by the time derivative $\partial_t \big|_{x^i}$ along $\bm{\partial}_t$. This allows
to easily see that both systems of equations become equivalent in the special case of a comoving description (within which $N \bm v = \bm N$
and ${\rm d}/{\rm d}t = \partial_t \big|_{x^i}$), as expected since in this case
the spatial coordinates are chosen in such a way that both domains follow the same flow $\bm{\partial}_t \propto \bm u$.

Despite
the same domain propagation choice (also along $\bm{\partial}_t$), the averaged equations of Larena \cite{larena:aver} remain different from the former even in a comoving picture due to a different notion of effective scale factor.\footnote{%
Such additional differences with the results of \cite{larena:aver} arise from a definition of the effective scale factor in this latter study that makes its evolution different from that of the cubic root of the domain's volume. Since the aim of an averaging framework is to investigate the regional dynamics of comoving
domains lying within spatial hypersurfaces, it seems to us to be more appropriate to define the scale factor from the volume of these domains.
The reader may refer to \cite{larena:aver_app} for a comparison of the different averaged energy constraints obtained for different choices
of $a_\CD$, and for an analysis of the backreaction effects obtained for each such choice in a 
Friedmann-Lema\^itre-Robertson-Walker (FLRW) model
perturbed up to second order. Note, however, that in these studies the domain also follows the congruence of the coordinate frames along
$\bm{\partial}_t$, implying the drawbacks already highlighted in section~\ref{subsubsec:literature_transport}.
}
Finally, as already discussed, the choice of a domain propagating along the normal
to the slices (or in the last two cases, the use of global assumptions on the domain that erase boundary terms, yielding the same evolution) made by Gasperini \textit{et al.} \cite{marozzi:aver2},
Beltr\'an Jim\'enez \textit{et al.} \cite{dunsby:aver}, Smirnov \cite{smirnov:aver}, Tanaka \& Futamase \cite{futamase:aver2} and R\"as\"anen \cite{rasanen:lightpropagation}
would require to take either global averages or fluid-orthogonal hypersurfaces (when possible) in each case to make the averaged equations of these papers equivalent to those derived in the above section \ref{subsec:av_cosmo_extrinsic}.

The reader may find a complete comparison of the averaging formalisms discussed above in Appendix~\ref{app:lit_comp} and synthetic tables therein.

\section{Rest mass--preserving scalar averaging: fluid-intrinsic approach}
\label{sec:intrinsic}

In this section we propose an alternative averaging procedure, valid as well for arbitrary spatial foliations, aimed at characterizing average properties that are fully intrinsic to the fluid. 
We start by presenting the motivations for this approach.

% ###################### %
% ##### Subsection ##### %

\subsection{Motivation for a fluid-intrinsic averaging procedure}

As discussed in the previous section, all results in the literature on the generalization of scalar spatially averaged cosmologies that we compared (see sections~\ref{subsubsec:literature_global}--\ref{subsubsec:literature_general} and Appendix~\ref{app:lit_comp}) abandon the intrinsic fluid averaging approach that was a primary element of  Papers I and II. Instead, the averaging procedures considered are 
built from averaging domains evolving either along the normal congruence of the hypersurfaces of arbitrary foliations, or at constant values of the arbitrary spatial coordinates. We pointed out that such choices raise problems with regards to the foliation- or coordinate-dependent evolution of the domain, and especially the non-conservation of the rest mass of the averaging domain in general situations. These problems are avoided for our choice of a comoving domain of averaging, i.e. of a domain transported along the fluid congruence.

The approach we presented
in section~\ref{sec:extrinsic} complies, however, with the definition of the averaging operation, and with the set of
foliation-related local variables explicitly appearing in the equations, adopted in the aforementioned literature
(although some `mixed' fluid and foliation scalars such as $h^{\mu \nu} \nabla_\mu u_\nu$ have also been used by Larena \cite{larena:aver}).
This \textit{extrinsic} approach could be employed to measure the deviations from the dynamics of a homogeneous-isotropic model universe in a geometric way. It is indeed most naturally expressed in terms of averages of foliation-dependent scalars characterizing the hypersurfaces such as the respective traces of the extrinsic and intrinsic curvatures.
We argue, however, that intrinsic properties of the fluid content such as those quantified by the rest frame kinematic quantities $\Theta$, $\sigma^2$ and $\omega^2$, defined in section~\ref{subsubsec:kin_fluid}, are more relevant for the characterization of an effective cosmological model.

It is not only a philosophical question to consider as a viable cosmology the evolution of an averaged fluid formulated in its own variables, rather than looking at averages `from outside' that mostly focus on geometric properties of the hypersurfaces. The latter point of view risks invoking a quasi-Newtonian understanding of a moving fluid with respect to some fiducial external spacetime. If, as in the aforementioned literature and in our section~\ref{sec:extrinsic} above,  the averaged dynamics and definitions of backreaction terms involve the extrinsic curvature of the slices, the resulting properties depend on derivatives of the normal vector. Even if the tilt measuring the deviation of the normal with respect to the $4-$velocity is small (the Lorentz factor is close to unity), its derivatives can be large. This may lead to a strong foliation dependence of the averaged variables and backreaction terms that is to be considered irrelevant for a cosmological model, since in such an approach these average quantities only characterize properties of a family of extrinsic observers (\textit{cf.} the discussion in \cite{foliationsletter}).
 
Having said this, the reader may point out that focusing on the properties of the fluid congruence is more reminiscent of a $1+3$ (\textit{threading}) point of view. Indeed, we employ in this work a $1+3$ threading formalism, but jointly with a $3+1$ foliation, simply because hypersurfaces are needed for the averaging operation. Going as far as possible toward a fluid-intrinsic description avoids an excessive foliation-dependence of the variables considered. However, this goal
will encounter limitations, since the rest frames of a vortical fluid are not hypersurface-forming. A fully intrinsic construction of effective cosmologies will thus in general require other choices. The foliation at constant fluid proper time, as part of the Lagrangian description (see section \ref{subsubsec:lag_desc}), allows for a spatial averaging over hypersurfaces that are built from the fluid flow itself.
Another possibility that is opened with the intrinsic approach would be to characterize hypersurfaces statistically. This strategy will be discussed in section \ref{subsubsec:beyond}.

As a first step toward an intrinsic approach, we present in Appendix~\ref{app:usual_aver_intrinsic} a re-expression of the extrinsic evolution equations
\eqref{eq:av_raych}--\eqref{eq:av_hamilt} in terms of the intrinsic variables of the fluid. This provides more insight into the contributions of these quantities to the averaged dynamics, in particular
the influence of the vorticity can be better understood, but it also raises additional contributions from the tilt factor $\gamma$.
In the following, we shall go another route aiming at an intrinsic fluid point of view. To this end we 
introduce a slightly different generalization of the fluid-orthogonal averaging formalism of Papers~I and II that will also allow us to derive a more compact form of averaged cosmologies. 
We first motivate this route by contemplating further on the definition and conservation of the rest mass of the fluid within the domain.

% ------------------- %
% ## Subsubsection ## %

\subsubsection{The regional rest mass and its conservation}

We have shown in section \ref{subsec:mass_cons} that the total fluid rest mass within the domain $\CD$, $M_\CD = \int_\CD M^\mu \, \mathrm{d}\sigma_\mu$, with $M^\mu = \varrho u^\mu$ the conserved rest mass flux vector, is preserved in time ($\mathrm{d}M_\CD / \mathrm{d}t = 0$) as a consequence of the domain's fluid-comoving propagation. We have also shown that $M_\CD$ can be expressed in terms of the hypersurface volume and associated averaging 
operator introduced by \eqref{eq:spat_aver} as follows:
\begin{equation}
 	M_\CD = \int_\CD \gamma \varrho \,\sqrt{h} \, \mathrm{d}^3 x = \VD \average{\gamma \varrho} \, .
\end{equation}
The relevant scalar to be integrated over the spatial domain is therefore $\gamma \varrho$, rather than the rest mass density $\varrho$ that could have been expected. Unless the foliation is fluid-orthogonal ($\gamma=1$), the quantity $\int_\CD \varrho \, \sqrt{h} \, \mathrm{d}^3 x = \VD \average{\varrho}$ is not the fluid rest mass within $\CD$ and accordingly is not conserved. Indeed, using the continuity equation 
\eqref{eq:cons_restmass_t} for $\varrho$ as well as the commutation rule \eqref{eq:com_rule_final_theta} and the associated volume evolution rate expression, we have
\begin{equation}
 \ddt{} \left( \VD \average{\varrho} \right) 
		= - \VD \average{\frac{1}{\gamma} \ddt\gamma \varrho} \, ,
\end{equation}
which is nonzero in general. The need to account for the factor $\gamma$ is a consequence of the conserved $\varrho$ being a rest mass density of the fluid in its local rest frames. It is thus a density with respect to the measure of proper volume of the fluid elements, while $\gamma \varrho$ is the corresponding density with respect to the (Lorentz-contracted) normal-frames volume measure $\sqrt{h} \, \mathrm{d}^3 x$ used in the definition of the extrinsic averaging operator $\average{\, \cdot \,}$.

The total fluid rest mass within the domain is alternatively obtained by integrating (still over the domain $\CD$ lying within the arbitrary spatial hypersurfaces) the rest mass density per unit of fluid proper volume, $\varrho$, 
with the corresponding fluid rest frames volume element, $\sqrt{b} \: \mathrm{d}^3x$ with $b := \det(b_{ij})$. Given the relation between the determinants 
$b$ and $h$, 
\begin{align}
 	b 
	& = \det(g_{i j} + u_i u_j) = \det(h_{i j} + u_i u_j) = h \, \det(\delta^i_{\ j} + h^{i k} u_k u_j) \nonumber \\
 	& = h \, (1 + h^{i j} u_i u_j) = h \, (1 + h^{\mu \nu} u_\mu u_\nu) = h \, \gamma^2 \; ,
 	\label{eq:rel_dets}
\end{align} 
we have $\sqrt{b} \: \mathrm{d}^3 x = \gamma \sqrt{h} \, \mathrm{d}^3 x$, and therefore we indeed get
\begin{equation}
 \label{eq:restmass_and_b}
	M_\CD = \int_\CD \varrho \, \sqrt{b} \: \mathrm{d}^3 x \, .
\end{equation}
The rest mass $M_\CD$ of the fluid within $\CD$ is thus more naturally defined in terms of the proper volume measure $\sqrt{b} \: \mathrm{d}^3 x$.

Note that the two covariant\footnote{%
	As $\sqrt{h (t,x^k )} \,{\mathrm d}^3 x$, the fluid-orthogonal volume $3-$form $\sqrt{b (t,x^k )}\, {\mathrm d}^3 x$ is also invariant under a 
	change of spatial coordinates, as can be checked either directly or by rewriting it as $\gamma (t,x^k ) \sqrt{h(t,x^k )} \,{\mathrm d}^3 x$, 
	$\gamma = -n^\mu u_\mu$ being a $4-$scalar. It reads in particular 
	$\sqrt{b (t,x^i )} \,{\mathrm d}^3 x = \sqrt{b (t, f^i(t,\bm X))} \, J(t,X^i) \,{\mathrm d}^3 X$ in comoving spatial coordinates $X^i$,
	with $b (t, f^i(t,\bm X)) \, J(t,X^i)^2$ being the determinant of the spatial components of the fluid rest frame projector $\mathbf{b}$ in the 
	comoving coordinate system $(t,X^i)$.
}
volume measures $\sqrt{h} \, \mathrm{d}^3 x$ and $\sqrt{b} \: \mathrm{d}^3 x$ coincide in the case of a flow-orthogonal foliation (possible for an irrotational 
fluid), which is the situation considered in Papers~I and II. Hence, a degeneracy between both volumes is present in these papers, while they are distinct for any other choice of foliation. This is similar to the difference between hypersurface-orthogonal and fluid-comoving propagation choices for the averaging domain, that emerges outside the fluid-orthogonal foliation framework of Papers~I and II (where both choices can be made simultaneously). We have argued above that once this distinction needs to be done, preserving the comoving character of the domain propagation is the relevant choice for a physical description of average properties of a regional subset of the fluid. Here we also notice that keeping a volume measure that corresponds to a proper volume for the fluid appears to be the most suited to describe the integrated contribution of variables that are primarily defined from the fluid rest frames, as, \textit{e.g.}, for the expression of the total rest mass within the domain from the rest mass density $\varrho$.

We shall accordingly introduce a new volume for the domain and a new averaging operator based on the fluid proper volume element. This will allow us to define these notions intrinsically from the source content, leaving only the integration itself as based on the foliation choice since the spatial integration domain lies within a hypersurface. (However, in sections \ref{subsec:lagrangian_form} and \ref{subsubsec:lagrange_interest} we shall emphasize the choice of a fluid proper time foliation, for which, in particular, the hypersurfaces are themselves also defined intrinsically from the fluid, up to the choice of an initial hypersurface.)
We will also recover the expected relation between total rest mass and averaged rest mass density.

% ------------------- %
% ## Subsubsection ## %

\subsubsection{Intrinsic averaging operator}
\label{subsubsec:int_av_def}

We consider as before a compact domain $\CD$ transported along the fluid flow lines and contained within the arbitrary  spatial hypersurfaces normal to the unit timelike 
vector field $\bm n$. Instead of using the hypersurface domain volume $\VD^h$ as in the previous section, where the superscript $h$ is added here for clarity, we introduce 
the total proper volume of the fluid elements within $\CD$ (or \textit{intrinsic volume} or \textit{fluid volume} of $\CD$) as an integral of the fluid proper volume element over $\CD$: 
\beq \label{eq:vol_D_intrinsic}
	\VDb(t) := \int_{\CD} u^\mu {\rm d}\sigma_\mu = \int_{\CD} \gamma (t,x^i ) \sqrt{h (t,x^i )} \, 
	{\mathrm d}^3 x = \int_{\CD} \sqrt{b (t,x^i )} \, {\mathrm d}^3 x \; .
\eeq
We then define the \textit{intrinsic (or fluid-volume) average} over $\CD$ of any scalar $\psi$ as:
\begin{eqnarray}
 \baverage{\psi} := \frac{1}{\VDb} \int_\CD \psi \, u^\mu {\mathrm d}\sigma_\mu
	& = & \frac{1}{\VDb} \int_\CD \psi (t,x^i ) \, \gamma (t,x^i ) \sqrt{h (t,x^i )} \, {\mathrm d}^3 x \nonumber \\
	& = &\frac{1}{\VDb} \int_\CD \psi (t,x^i ) \, \sqrt{b (t,x^i )} \, {\mathrm d}^3 x \; . \label{eq:spat_aver_intrinsic}
\end{eqnarray}
These volume and averages definitions simply differ from the extrinsic ones introduced in section~\ref{sec:extrinsic} through the volume $3-$form they use, being now built from $\bm u$ instead of $\bm n$. We detail this change in the volume $3-$form further in Appendix~\ref{app:volumes}. We also briefly recall there the possible reformulation of both averaging schemes under a manifestly $4-$covariant form (see \cite{asta1}, extending \cite{marozzi:aver2}), for which the transition from the extrinsic to the intrinsic operators also takes a rather natural form.

Similarly to the extrinsic hypersurface averager of section~\ref{sec:extrinsic}, we recover from \eqref{eq:vol_D_intrinsic} and \eqref{eq:spat_aver_intrinsic} the volume and averager of Papers~I and II when considering a foliation orthogonal to an irrotational fluid flow. The two averaging schemes can be formally related as follows:
\begin{equation}
 \label{eq:extr_intr_avg_rel}
 	\VDb = \VD^h \left\langle \gamma \right\rangle^h_\CD \;\; ; \qquad 
	\baverage{\psi} = \frac{\left \langle \gamma \psi \right\rangle_\CD^h}{\left\langle \gamma \right\rangle_\CD^h} \;\; ,
\end{equation}
for any scalar $\psi$, where we label the extrinsic averaging operator used throughout section \ref{sec:extrinsic} with a superscript $h$ for a more explicit distinction. This shows the identity of both operators in the absence of tilt ($\gamma = 1$), and their approximate identity in the case of a small tilt, \textit{i.e.}, of non-relativistic Eulerian velocities of the fluid in the chosen foliation ($\gamma \simeq 1$). This also shows, on the other hand, that we must have $\VD^h < \VDb$ if the tilt does not identically vanish inside $\CD$, which can be seen as a consequence of the local Lorentz contraction of the volume of each fluid element when measured in the normal frames.

By construction, from the expression \eqref{eq:restmass_and_b} for $M_\CD$, the averaged rest mass density is naturally expressed as the ratio of the total rest mass to the total volume, if the intrinsic averaging operator and fluid volume are used:
\begin{equation}
 \baverage{\varrho} = \frac{M_\CD}{\VDb} \; .
\end{equation}
While such a relation is expected, it is to be compared to the case where the extrinsic averaging operator and volume are used, as in section~\ref{sec:extrinsic}: in this case, $M_\CD / \CV_\CD^h$ is instead equal to $\left \langle \gamma \varrho \right \rangle_\CD^h$, as can be seen from \eqref{eq:rest_mass_scalar}.

Despite the intrinsic averaging operator and associated volume presenting such natural features, we will still denote them with a label ${}^b$ in the following to underline the difference with the extrinsic formalism of section~\ref{sec:extrinsic} above and previous proposals from the literature that are closely related to the latter. Also for this reason, we use no specific label for the extrinsic formalism, unless a label ${}^h$ is needed as hereabove for a clearer distinction between both approaches.

% ###################### %
% ##### Subsection ##### %

\subsection{Intrinsic effective dynamics of general fluids seen in general foliations}
\label{subsec:av_cosmo_intrinsic}

% ------------------- %
% ## Subsubsection ## %

\subsubsection{Fluid-intrinsic volume and averager: time evolution}

The evolution rate of the fluid volume $\CV_\CD^b$ can be derived in the same way as it was done for the hypersurface Riemannian domain volume in section \ref{subsubsec:vol_lagrange},
changing the spatial coordinates to comoving ones in the integral to commute integration and comoving coordinate-time derivative. Using the invariance
of the fluid rest frame volume form with respect to such a spatial diffeomorphism, we then get:
\begin{equation}
 \label{eq:vol_rate_intrinsic}
 \frac{1}{\VDb} \ddt{}\VDb = \baverage{\frac{N}{\gamma} \Theta} = \baverage{\Thetat} \; ,
\end{equation}
through the first equality of relation \eqref{eq:exp_tensor_comoving} holding in comoving coordinate systems. We have introduced above the rescaled scalar expansion rate $\Thetat := (N / \gamma ) \Theta$. Since $N/ \gamma = \mathrm{d}\tau / \mathrm{d}t$, $\Thetat$ can be seen as the fluid's local expansion rate with respect to the coordinate time $t$, while $\Theta$ expresses this rate with respect to the proper time $\tau$.

An alternative derivation can be obtained by starting from the reformulation of the extrinsic averaging scheme in Appendix \ref{app:usual_aver_intrinsic}. Using the extrinsic volume evolution rate \eqref{eq:vol_rate_II}, the associated commutation rule~\eqref{eq:com_rule_final_II}, and the above relations between both averaging schemes \eqref{eq:extr_intr_avg_rel}, the above evolution rate is recovered. 

Both methods can be equally used to obtain a new commutation rule for the intrinsic averager, which we now express in the form of a \textit{Lemma}.

\begin{lemma}[commutation rule for fluid-intrinsic volume averages]\\
\label{lemma:com_rule_intrinsic} 
\vspace{-7pt}

The commutation rule between fluid-intrinsic averaging on a compact domain $\CD$, lying within the constant-$t$ hypersurfaces and comoving with the fluid, and comoving differentiation with respect to the coordinate time reads, for any $3+1$ foliation of spacetime and for any scalar $\psi$:
\beq
	\frac{\mathrm d}{{\mathrm d}t} \baverage{\phantom{\tilde |}\!\!\psi} 
		= \baverage{ \frac{\mathrm d}{{\mathrm d}t} \psi } 
		- \baverage{ \Thetat } \baverage{\phantom{\tilde |}\!\!\psi} 
		+ \baverage{ \Thetat \, \psi } \,  . 
	\label{eq:com_rule_intrinsic}
\eeq
\end{lemma}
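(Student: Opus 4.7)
The plan is to mirror the derivation of the extrinsic commutation rule (\textit{Lemma}~\ref{lemma:com_rule}), but using the fluid proper volume $3$-form $\sqrt{b}\,{\mathrm d}^3x$ in place of $\sqrt{h}\,{\mathrm d}^3x$. First, I would rewrite the fluid-intrinsic average \eqref{eq:spat_aver_intrinsic} as $\baverage{\psi} = (\VDb)^{-1}\int_{\CD_{\bm x}}\psi(t,x^i)\sqrt{b(t,x^i)}\,{\mathrm d}^3x$, and pull it back to the time-independent comoving-coordinate domain $\CD_{\bm X}$ through the family of maps $\mathbf{\Phi}_t$ of \eqref{eq:map}. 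The key property to invoke here is the scalar-density character of $\sqrt{b}\,{\mathrm d}^3x$ under spatial diffeomorphisms (as emphasized in the footnote preceding \eqref{eq:rel_dets}), so that the integrand in comoving coordinates reads $\sqrt{\bar b(t,\bm X)}\,{\mathrm d}^3X$, where $\bar b$ denotes the determinant of the spatial components of $\mathbf b$ in the $(t,X^i)$ system.

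Second, I would identify the local ``Jacobian rate'' that governs the evolution of this new volume element. From the first equality in \eqref{eq:exp_tensor_comoving}, which holds in any comoving coordinate system, one has $\frac{1}{2}\,\bar b^{ij}\,\ddt{}\bar b_{ij} = (N/\gamma)\,\Theta = \Thetat$, hence $\ddt{}\sqrt{\bar b} = \Thetat\sqrt{\bar b}$. This is the intrinsic analogue of the combination $-N\CK + (Nv^i)_{||i}$ that appeared in \eqref{eq:vol_D_final}; note that the boundary/tilt contribution $(Nv^i)_{||i}$ of the extrinsic case is here absorbed into the local definition of $\Thetat$, which is precisely why the final rule turns out to be simpler.

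Third, since $\CD_{\bm X}$ is $t$-independent, differentiation and integration commute in comoving coordinates. For any scalar $\psi$ this yields, after transforming back to the reference coordinates,
\begin{equation*}
\ddt{}\int_{\CD_{\bm x}}\!\psi\sqrt{b}\,{\mathrm d}^3x \;=\; \int_{\CD_{\bm X}}\!\ddt{}\!\bigl(\psi\sqrt{\bar b}\,\bigr)\,{\mathrm d}^3X \;=\; \VDb\!\left(\baverage{\ddt{}\psi}+\baverage{\Thetat\,\psi}\right),
\end{equation*}
and, specializing to $\psi=1$, the volume evolution rate \eqref{eq:vol_rate_intrinsic}. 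Applying the product/quotient rule to $\baverage{\psi}=(\VDb)^{-1}\int_{\CD_{\bm x}}\psi\sqrt{b}\,{\mathrm d}^3x$ and combining the two preceding identities immediately gives \eqref{eq:com_rule_intrinsic}.

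As an alternative route (essentially a consistency check), one may use the dictionary \eqref{eq:extr_intr_avg_rel}, $\baverage{\psi}=\average{\gamma\psi}/\average{\gamma}$, apply the extrinsic rule in its $\Theta$-form \eqref{eq:com_rule_final_theta} both to $\gamma\psi$ and to $\gamma$, and combine the results: the $\gamma^{-1}\ddt{}\gamma$ terms, together with those generated by $\ddt{}(\gamma\psi)=\gamma\,\ddt{}\psi+\psi\,\ddt{}\gamma$, cancel pairwise, leaving \eqref{eq:com_rule_intrinsic}. No serious obstacle is expected in either approach: the calculation is routine once the correct local rate $\Thetat$ has been identified, and the main conceptual point is that choosing a volume form comoving with the fluid removes the explicit tilt-boundary flux that was present in \textit{Lemma}~\ref{lemma:com_rule}.
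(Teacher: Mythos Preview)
Your proposal is correct and follows essentially the same two approaches the paper indicates: a direct derivation mirroring the extrinsic case with $\sqrt{b}$ in place of $\sqrt{h}$ (pulling back to comoving coordinates, using the invariance of the fluid rest-frame volume form, and invoking \eqref{eq:exp_tensor_comoving} to get $\ddt{}\sqrt{\bar b}=\Thetat\sqrt{\bar b}$), and an alternative route via the dictionary \eqref{eq:extr_intr_avg_rel} combined with the extrinsic commutation rule. One minor remark: your aside that ``the boundary/tilt contribution $(Nv^i)_{||i}$ \ldots\ is here absorbed into the local definition of $\Thetat$'' is slightly imprecise, since by \eqref{eq:rel_volume_rates} one has $-N\CK+(Nv^i)_{||i}=\Thetat-\gamma^{-1}\ddt{}\gamma$ rather than $\Thetat$ itself; the cleaner statement is simply that the fluid volume form evolves with rate $\Thetat$ without any additional tilt term.
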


\bigskip

This simple relation is again independent of the shift due to the spatial coor\-dinates-independent definitions of the domain propagation and of the averaging procedure. It only depends on the lapse and the tilt through the threading lapse factor $N/\gamma$ in $\Thetat$, rescaling the proper time evolutions to coordinate-time evolutions (see Appendix~\ref{app:threading}).

From the volume evolution rate \eqref{eq:vol_rate_intrinsic}, the commutation rule \eqref{eq:com_rule_intrinsic} and the local continuity equation $\mathrm{d} \varrho / \mathrm{d}t + \Thetat \varrho = 0$, we obtain $\mathrm{d}(\VDb \, \langle \varrho \rangle_\CD^b)/ \mathrm{d}t = 0$,
which shows again the preservation of the domain rest mass $\MD = \VDb \, \langle \varrho \rangle_\CD^b$.

% ------------------- %
% ## Subsubsection ## %

\subsubsection{Averaged evolution equations}
\label{subsubsec:av_evol_intrinsic}

We define the \textit{intrinsic (or fluid-volume) effective scale factor} of the fluid body within the domain $\CD$
\textit{via} the intrinsic domain volume:  
\begin{equation}
	a^b_\CD (t) := \left( \frac{\VDb (t)}{\CV^b_{\CD_\mathbf{i}}} \right)^{1/3} \; ,
\end{equation}
so that its rate of change yields the averaged fluid expansion rate as seen in coordinate time $t$: 
\begin{equation}
	H_\CD^b \;:=\; \frac{1}{a^b_\CD} \frac{{\mathrm d} a^b_\CD}{{\mathrm d}t} 
		\, = \, \frac{1}{3} \baverage{ \Thetat } \, .
	\label{eq:evol_aD_intrinsic}
\end{equation}
Equivalently, the rate of change of the intrinsic scale factor can be written as
\beq
	H_\CD^b = \baverage{\frac{1}{\ell} \frac{{\mathrm d}\ell}{{\mathrm d}t}} \, ,
\eeq
with $\ell$ being the representative length lying in the rest frames of the fluid, defined in section~\ref{subsubsec:comov_desc} as satisfying $\dot \ell / \ell = \Theta / 3$, \textit{i.e.}, $\ell^{-1} \, \mathrm{d}\ell / \mathrm{d}t = \tilde \Theta / 3$.
Accordingly, $\ell$ represents the spatial isotropic deviation of two neighbouring fluid elements.\footnote{%
The difference to the averager used in section~\ref{sec:extrinsic} can be made explicit by introducing $l$ as the counterpart of $\ell$:
\[
		\frac{1}{a_\CD} \ddt{a_\CD} = \average{\frac{1}{l} \ddt{l}} \;\; \text{with} \;\; \frac{1}{l} \frac{{\mathrm d} l}{{\mathrm d}t} := \frac{1}{3} \left( \frac{N}{\gamma} \Theta - \frac{1}{\gamma} \frac{{\mathrm d} \gamma}{{\mathrm d}t} \right) \, 
		= 	\frac{1}{\ell} \frac{{\mathrm d} \ell}{{\mathrm d}t} - 
			\frac{1}{3} \frac{1}{\gamma} \frac{{\mathrm d} \gamma}{{\mathrm d}t}  \; .
\]
We thus have $l^3 = \ell^3 / \gamma$, i.e. $l$ is an isotropically averaged length (cubic root of a volume) associated with the
volume contraction of $\ell^3$ by the Lorentz factor $\gamma$: lengths are contracted by $\gamma$ in one
spatial direction and are not affected in the other orthogonal two directions, implying a factor $\gamma^{1/3}$ for the isotropically averaged length contraction. In comoving spatial coordinates, one can see from the first relation in \eqref{eq:exp_tensor_comoving} that $\ell$ may be chosen as $\ell \propto b^{1/6}$ from the determinant $b$, so that, accordingly, $l$ may be chosen as $l \propto h^{1/6}$. The continuity equation \eqref{eq:cons_restmass_density} also shows that one may define $\ell$ as $\ell \propto \varrho^{-1/3}$, in any spatial coordinates.
}

Instead of using the Einstein equations projected along $\bm n$, yielding equations \eqref{eq:evol_K} and
\eqref{eq:hamilt_const} (expressed in terms of the intrinsic and extrinsic curvatures of the hypersurfaces), we here express
the local dynamics of the fluid directly through the Raychaudhuri equation, obtained from a projection of the Einstein equations along $\bm u$:
\beq
 \dot\Theta = -\frac{1}{3} \Theta^2 - 2 \sigma^2 + 2 \omega^2 + \nabla_\mu a^\mu - 4 \pi G \left(\epsilon + 3 p \right) + \cc \; .
 \label{eq:raych_eq}
\eeq
This equation relates rest frame kinematic and dynamical scalars of the fluid, and is thus relevant for the present fluid-focussed approach. It can be complemented by an analogue in terms of fluid-intrinsic quantities
of the foliation-related energy constraint \eqref{eq:hamilt_const} by defining a `fluid rest frame $3-$curvature' scalar $\SR$ from the $4-$Ricci tensor $R_{\mu \nu}$ and scalar $R$, following Ellis et al. \cite{ellis:vorticity}, as follows:
\begin{equation}
 \SR := \nabla_\mu u^\nu \,\nabla_\nu u^\mu - \nabla_\mu u^\mu \,\nabla_\nu u^\nu + R + 2 \,R_{\mu \nu} \, u^\mu u^\nu \; .
 \label{eq:def_u_3_curv}
\end{equation}
Noting that the covariant derivatives above can be equivalently replaced by their projections orthogonal to $\bm u$
($\nabla_\rho u^\sigma \mapsto b_\rho^{\ \kappa} b^\sigma_{\ \tau} \nabla_\kappa u^\tau$), the scalar Gauss equation
\cite{alcub:foliation,gourg:foliation}  applied to the $\bm u$-orthogonal hypersurfaces when those exist (i.e.
for vanishing vorticity) shows that $\SR$ corresponds in this case to the scalar intrinsic curvature of these hypersurfaces.
For non-zero vorticity, such hypersurfaces cannot be built, and $\SR$ is not transparently interpreted as a scalar curvature.\footnote{%
However, $\SR$ can indeed arise as the $3-$Ricci scalar associated to a $\bm u$-orthogonal spatial `Riemann-like' tensor
which can be built from the $\bm u$-orthogonal spatial covariant derivative operator (defined for tensors fully orthogonal to $\bm u$
as the projection through $\mathbf{b}$ on every component of their covariant $4-$derivative) as well as from its spacetime embedding
\cite{massa:1+3,ellis:vorticity,roy:1+3}. For non-vanishing vorticity,
this Riemann-like tensor does not possess all the symmetry properties of a true Riemann tensor, and the way of defining
such a spatial curvature tensor is not unique. Despite of this, $\SR$ may be seen as the scalar part of local $3-$curvature associated with this tensor in the $\bm u$-orthogonal
subspace of the tangent space at each spacetime point. Boersma and Dray introduce so-called parametric manifolds to define this quantity as the curvature of the parametric submanifold \cite{boersmadray}. Alternatively, we may see it simply as a definition through equation~\eqref{eq:gauss_eq_intr}.
}
In particular, it should be kept in mind that it does not in general correspond to the intrinsic scalar curvature $\CR$ of the $\bm n$-orthogonal hypersurfaces in which the domain $\CD$ is embedded. 

Inserting the trace of the Einstein equations and their projection along $\bm u$
in the definition \eqref{eq:def_u_3_curv} of $\SR$ allows us to relate it to the fluid rest frame energy density within a constraint equation where the covariant derivative of $\bm u$ has been decomposed into its kinematic parts:
\beq
 \frac{2}{3} \Theta^2 - 2 \sigma^2 + 2 \omega^2 + \SR = 16 \pi G \epsilon + 2 \cc \, .
 \label{eq:gauss_eq_intr}
\eeq
Analogously to what has been done in section~\ref{sec:extrinsic} within the extrinsic averaging scheme, we can now apply the fluid-intrinsic averager
to equations \eqref{eq:gauss_eq_intr} and \eqref{eq:raych_eq} multiplied by $(N/\gamma)^2$ and use expression
\eqref{eq:vol_rate_intrinsic} for the evolution rate of $\VDb$ as well as the commutation rule \eqref{eq:com_rule_intrinsic},
to obtain the effective evolution equations of the intrinsic scale factor $a^b_\CD$. We formulate them in the following \textit{Theorem} in terms of rescaled variables defined similarly to $\Thetat$: rescaled kinematic variables, $\sigmat^2 := (N/\gamma)^2 \sigma^2$
and $\omegat^2 := (N/\gamma)^2 \omega^2$, dynamical variables, $\epsilont := (N/\gamma)^2 \epsilon$
and $\tilde p := (N/\gamma)^2 p$, acceleration $4-$divergence, $\tilde{\CA} := (N/\gamma)^2 \CA$
with $\CA := \nabla_\mu a^\mu$, and fluid $3-$curvature, $\SRt := (N/\gamma)^2 \SR$.
\begin{thgroup}
\label{ths:av_intrinsic}
\begin{theorem}[fluid-intrinsically averaged evolution equations]\\
\label{th:av_evol_intrinsic}
\vspace{-7pt}

The evolution equations for the intrinsic effective scale factor of the fluid body within a compact and comoving regional spatial domain $\CD$ of an inhomogeneous general fluid, and for any $3+1$ foliation of spacetime, read:
\begin{align}
	3 \, \frac{1}{a^b_\CD} \frac{{\mathrm d}^2 a^b_\CD}{{\mathrm d}t^2} = 
		& \; - 4 \pi G \baverage{ \, \epsilont + 3 \tilde p \, } + {\tilde\Lambda}^b_\CD + \tilde{\CQ}_\CD^b
			+ \tilde{\CP}_\CD^b \, ;  
\label{eq:av_raych_intr} \\
	3 \,  \big( H_\CD^b \big)^2  = 
		& \; 8 \pi G \baverage{ \, \epsilont \, } + {\tilde\Lambda}^b_\CD - \frac{1}{2} \tilde{\CQ}_\CD^b - \frac{1}{2} \baverage{\SRt} \, , 
\label{eq:av_hamilt_intr}
\end{align}
with a time- and scale-dependent contribution from the cosmological constant,
\beq
\label{tildelambda}
{\tilde\Lambda}^b_\CD := \cc \baverage{\frac{N^2}{\gamma^2}}\;,
\eeq 
and with $\tilde{\CQ}^b_\CD$ and $\tilde{\CP}^b_\CD$
denoting the intrinsic kinematical and dynamical backreaction terms, respectively, as seen in the $t$-hypersurfaces. They are defined as follows:
\begin{align}
	\tilde{\CQ}^b_\CD 
		& := \frac{2}{3} \baverage{ \left( \Thetat - \baverage{\Thetat} \right)^2 }
			- 2 \baverage{ \sigmat^2 } + 2 \baverage{ \omegat^2 } \, ;
 \label{eq:kin_back_intr} \\ 
	\tilde{\CP}^b_\CD 
		& := \baverage{\tilde\CA} + \baverage{\Thetat \frac{\gamma}{N} \ddt{} \left(\frac{N}{\gamma} \right)}
 \, . 
 \label{eq:dyn_back_intr}
\end{align}
\end{theorem}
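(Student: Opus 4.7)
My plan is to derive the two averaged equations of Theorem \ref{th:av_evol_intrinsic} from the local Raychaudhuri equation \eqref{eq:raych_eq} and the fluid energy constraint \eqref{eq:gauss_eq_intr}, following essentially the same route as was used to obtain the extrinsic Theorem \ref{ths:av_extrinsic} but now with the intrinsic averager of Lemma \ref{lemma:com_rule_intrinsic}. The key preparatory step is to multiply each local equation by $(N/\gamma)^2$ so that all kinematic, dynamical, curvature and cosmological-constant terms are naturally expressed through the rescaled (tilded) scalars whose coordinate-time evolution matches that of $\Thetat=(N/\gamma)\,\Theta$.

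First I would handle the Raychaudhuri equation. After multiplying by $(N/\gamma)^2$, every term on the right-hand side becomes a tilded scalar or $(N^2/\gamma^2)\cc$, while the left-hand side $(N^2/\gamma^2)\dot\Theta = (N/\gamma)\,\mathrm{d}\Theta/\mathrm{d}t$ must be reorganised via the product rule
\begin{equation}
 \frac{N^2}{\gamma^2}\dot\Theta \; = \; \frac{\mathrm{d}\Thetat}{\mathrm{d}t} - \Thetat\,\frac{\gamma}{N}\,\frac{\mathrm{d}}{\mathrm{d}t}\!\left(\frac{N}{\gamma}\right) .
\end{equation}
Averaging with $\baverage{\cdot}$ and applying the commutation rule \eqref{eq:com_rule_intrinsic} to $\Thetat$ converts $\baverage{\mathrm{d}\Thetat/\mathrm{d}t}$ into $\mathrm{d}\baverage{\Thetat}/\mathrm{d}t - \baverage{\Thetat^2} + \baverage{\Thetat}^2$, which combines with the $-\frac{1}{3}\baverage{\Thetat^2}$ term originating from the quadratic expansion contribution to produce precisely $-\frac{2}{3}\baverage{(\Thetat-\baverage{\Thetat})^2} - 3(H_\CD^b)^2$ once I use $\baverage{\Thetat^2} = \baverage{(\Thetat-\baverage{\Thetat})^2}+\baverage{\Thetat}^2$ together with $\baverage{\Thetat}=3H_\CD^b$. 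Finally, rewriting $\mathrm{d}\baverage{\Thetat}/\mathrm{d}t = 3(\mathrm{d}^2 a_\CD^b/\mathrm{d}t^2)/a_\CD^b - 3(H_\CD^b)^2$ cancels the $-3(H_\CD^b)^2$ contributions and leaves the expected form with $\tilde{\CQ}_\CD^b$, the acceleration divergence $\baverage{\tilde{\CA}}$, and the residual lapse/tilt derivative $\baverage{\Thetat\,(\gamma/N)\,\mathrm{d}(N/\gamma)/\mathrm{d}t}$ assembled into $\tilde{\CP}_\CD^b$.

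Second, the Hamiltonian-like constraint \eqref{eq:gauss_eq_intr} is purely algebraic and does not involve a time derivative, so after multiplying by $(N/\gamma)^2$ and averaging, the only manipulation required is the identity $\baverage{\Thetat^2} = \baverage{(\Thetat-\baverage{\Thetat})^2}+9(H_\CD^b)^2$. This immediately produces $6(H_\CD^b)^2+\tilde{\CQ}_\CD^b+\baverage{\SRt} = 16\pi G \baverage{\epsilont} + 2\tilde{\Lambda}_\CD^b$, which rearranges to equation \eqref{eq:av_hamilt_intr}.

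The only real obstacle is the bookkeeping of the $(N/\gamma)$ factors when commuting $(N/\gamma)^2$ through $\dot{\phantom{\Theta}}$: one has to be careful that the residual piece $-\Thetat\,(\gamma/N)\,\mathrm{d}(N/\gamma)/\mathrm{d}t$ is precisely what belongs to the dynamical backreaction and not a stray term that must cancel. Apart from that, the derivation is a direct substitution exercise, and no other identity beyond the commutation rule \eqref{eq:com_rule_intrinsic}, the volume rate \eqref{eq:vol_rate_intrinsic}, and the definition of $H_\CD^b$ is needed; in particular no integrations by parts or boundary terms appear, which is consistent with the compact form highlighted in the statement.
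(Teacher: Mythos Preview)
Your proposal is correct and follows essentially the same route as the paper: multiply the local Raychaudhuri equation \eqref{eq:raych_eq} and the fluid energy constraint \eqref{eq:gauss_eq_intr} by $(N/\gamma)^2$, apply the intrinsic averager using the commutation rule \eqref{eq:com_rule_intrinsic} and the volume rate \eqref{eq:vol_rate_intrinsic}, and collect terms into $\tilde{\CQ}^b_\CD$ and $\tilde{\CP}^b_\CD$. Your product-rule identity $(N^2/\gamma^2)\,\dot\Theta = \mathrm{d}\Thetat/\mathrm{d}t - \Thetat\,(\gamma/N)\,\mathrm{d}(N/\gamma)/\mathrm{d}t$ is exactly the mechanism by which the second term of $\tilde{\CP}^b_\CD$ appears, and the paper uses it implicitly in the same way.
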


As for \textit{Theorem}~\ref{th:av_evol}, the left-hand side of equation \eqref{eq:av_raych_intr} above should not be directly interpreted as a proper-time acceleration of the scale factor, unless a framework such as the Lagrangian picture, that we develop below, is adopted. This is again related to the dependence of individual terms in the above equations on the choice of the time coordinate for a given foliation, while the equations as a whole remain fully covariant (see the discussion and proof in section~\ref{subsubsec:av_evol}).

Note also that the backreaction terms introduced above do not correspond in general to the terms $\QD$ and $\PD$ appearing in the extrinsic averaging scheme. They do coincide, however, in case of a fluid-orthogonal foliation (with, moreover, $\TD = 0$ in this case) as can be seen by direct comparison with the definitions
\eqref{eq:kin_back}--\eqref{eq:dyn_back} of $\QD$ and $\PD$, and by noting that in this case $\CK_{i j} = - \Theta_{i j}$, $\omega^2 = 0$,
and (through relation \eqref{eq:eulerian_acc} between lapse and acceleration of the normal frames), $\CAt = N N^{||i}_{\ \ ||i}$.

The above system of averaged equations can alternatively be derived (through relations \eqref{eq:extr_intr_avg_rel} between
both averaging schemes) from the analogous relations for the extrinsic effective scale factor $a_\CD$, provided the latter
relations are re-expressed in terms of the fluid rest frame local kinematic and dynamical variables, as exposed in
Appendix~\ref{app:usual_aver_intrinsic}. The use of the local dynamical equations \eqref{eq:gauss_eq_intr} and \eqref{eq:raych_eq} is still required in the process since the local quantities to be averaged differ between both schemes by a factor $\gamma$.

% ------------------- %
% ## Subsubsection ## %

\subsubsection{Integrability and energy balance conditions}
\label{subsubsec:av_cond_intr}

As for the extrinsic averaging formalism (see section \ref{subsubsec:av_cond}), a condition of integrability of the system
of averaged equations \eqref{eq:av_raych_intr}--\eqref{eq:av_hamilt_intr} can be obtained by applying the coordinate-time derivative
$\mathrm{d}/\mathrm{d}t$ to the averaged constraint equation \eqref{eq:av_hamilt_intr} and by inserting
$2 \, H_\CD^b \times ( \eqref{eq:av_raych_intr} - \eqref{eq:av_hamilt_intr})$ into the result. The averaged fluid source terms
appearing in the resulting condition are themselves constrained by the local energy balance equation \eqref{eq:en_cl}, which can be rescaled by
a factor $(N/\gamma)^3$ to yield:
\begin{equation}
 \frac{\mathrm d}{{\mathrm d}t} \epsilont + \Thetat \left( \epsilont + \tilde p \right) 
		= 2 \, \epsilont \, \frac{\gamma}{N} \ddt{} \left(\frac{N}{\gamma} \right)
		- \frac{N^3}{\gamma^3} \left( q^\mu a_\mu + \nabla_\mu q^\mu + \pi^{\mu \nu} \sigma_{\mu \nu} \right) \, .
\end{equation}
Applying to it the intrinsic averager, the commutation rule \eqref{eq:com_rule_intrinsic} yields an evolution equation
for $\langle \epsilont \rangle_\CD^b$, which we express along with the integrability condition in a second part of the above \textit{Theorem}.
\vspace{10pt}
\begin{theorem}[integrability and energy balance conditions to \ref{th:av_evol_intrinsic}]\\
\label{th:av_cond_intrinsic}
\vspace{-7pt}

A necessary condition of integrability of equation \eqref{eq:av_raych_intr} to yield equation \eqref{eq:av_hamilt_intr} is given by: 
\color{black}
\begin{eqnarray}
 \ddt{} \tilde{\CQ}_\CD^b + 6 H^b_\CD \tilde{\CQ}_\CD^b + \ddt{} \baverage{\SRt}
			+ 2 H^b_\CD \baverage{\SRt} + 4 H^b_\CD \tilde{\CP}_\CD^b \nonumber \\
	= 16 \pi G \left( \ddt{} \baverage{\epsilont}
			+ 3 H^b_\CD \baverage{\epsilont + \tilde p} \right)
			+ 2 \ddt{} {\tilde\Lambda}_\CD^b \; , \qquad
	\label{eq:int_condition_intr}
\end{eqnarray}
where the source terms on the right-hand side obey an averaged energy balance equation:
\begin{eqnarray}
 \ddt{} \Big\langle \epsilont \Big\rangle_\CD^b \!\! + 3 H^b_\CD \Big\langle \epsilont + \tilde p \Big\rangle_\CD^b \!\!
			= \baverage{\Thetat} \!\! \Big\langle \tilde p \Big\rangle_\CD^b \!\!
			- \baverage{\Thetat \, \tilde p} \quad \qquad \qquad \qquad \nonumber \\
	- \baverage{\frac{N^3}{\gamma^3} (\nabla_\mu q^\mu + q^\mu a_\mu + \pi^{\mu \nu} \sigma_{\mu \nu})}
			+ 2 \baverage{\epsilont \, \frac{\gamma}{N} \ddt{} \left( \frac{N}{\gamma} \right) } \; . 
 \label{eq:av_en_cons_intr}
\end{eqnarray}
This balance equation can be supplemented by the rest mass conservation law $\mathrm{d}\MD/\mathrm{d}t = 0$, which can be equivalently expressed in terms
of the averaged rest mass density $\baverage{\varrho} = \MD/\VDb$:
\begin{equation}
	\frac{\rm d}{{\rm d}t} \baverage{\varrho} + 3 H^b_\CD \baverage{\varrho} = 0 \, .
 \label{eq:av_mass_density_cons}
\end{equation}
\end{theorem}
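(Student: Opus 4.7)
The plan is to establish the three assertions by transcription and simplification of the corresponding derivations in the fluid-extrinsic case (\textit{Theorem}~\ref{ths:av_extrinsic}), exploiting the cleaner form of the intrinsic commutation rule \eqref{eq:com_rule_intrinsic} and volume expansion rate \eqref{eq:vol_rate_intrinsic}, and relying on the local Raychaudhuri equation \eqref{eq:raych_eq} and scalar Gauss constraint \eqref{eq:gauss_eq_intr} that are built into \eqref{eq:av_raych_intr} and \eqref{eq:av_hamilt_intr}. The rest-mass conservation \eqref{eq:av_mass_density_cons} is the easiest part: from $M_\CD = \VDb \baverage{\varrho}$ together with the already-proven $\mathrm{d}M_\CD/\mathrm{d}t = 0$ and \eqref{eq:evol_aD_intrinsic}, one obtains the equation directly by logarithmic differentiation. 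Equivalently, rescaling the local continuity equation \eqref{eq:cons_restmass_density} to $\ddt{\varrho} + \Thetat\,\varrho = 0$, averaging with $\baverage{\cdot}$, and applying \eqref{eq:com_rule_intrinsic} produces the same result, since the cross term $\baverage{\Thetat\,\varrho}$ from the commutation rule cancels against the averaged $\Thetat\,\varrho$ piece, leaving only $3H_\CD^b \baverage{\varrho}$.

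For the averaged energy balance \eqref{eq:av_en_cons_intr}, I would start from the local energy conservation \eqref{eq:en_cl}, multiply both sides by $(N/\gamma)^3$, and use the identity $(N/\gamma)\,\dot{\phantom{F}} = \mathrm{d}/\mathrm{d}t$ acting on scalars to absorb one factor of $N/\gamma$ into the overdot. Grouping the remaining two factors with $\epsilon$ and $p$ to form $\epsilont$ and $\tilde p$, and carefully tracking the derivatives of the prefactor $(N/\gamma)^2$, produces the local rescaled equation
\begin{equation*}
\ddt{\epsilont} + \Thetat\,(\epsilont + \tilde p) = 2\,\epsilont\,\frac{\gamma}{N}\ddt{}\!\left(\frac{N}{\gamma}\right) - \frac{N^3}{\gamma^3}\bigl(q^\mu a_\mu + \nabla_\mu q^\mu + \pi^{\mu\nu}\sigma_{\mu\nu}\bigr) \,.
\end{equation*}
Applying $\baverage{\cdot}$ and using the commutation rule on $\baverage{\ddt{\epsilont}}$ turns the left-hand side into $\ddt{}\baverage{\epsilont} + \baverage{\Thetat}\baverage{\epsilont} + \baverage{\Thetat\,\tilde p}$: the $\baverage{\Thetat\,\epsilont}$ term from the commutation rule is exactly cancelled by the $\epsilont$ contribution from the averaged $\Thetat(\epsilont+\tilde p)$. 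Using $3H_\CD^b = \baverage{\Thetat}$ and completing $3H_\CD^b\baverage{\tilde p}$ on both sides then yields \eqref{eq:av_en_cons_intr}.

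For the integrability condition \eqref{eq:int_condition_intr}, the plan follows the standard Buchert strategy. First, take $\mathrm{d}/\mathrm{d}t$ of the averaged Hamiltonian constraint \eqref{eq:av_hamilt_intr}, producing a term $6 H_\CD^b\,\ddt{H_\CD^b}$ on the left. Next, replace this factor using $3\,\ddt{H_\CD^b} = (3/a_\CD^b)\,\mathrm{d}^2 a_\CD^b/\mathrm{d}t^2 - 3(H_\CD^b)^2$, that is, by subtracting \eqref{eq:av_hamilt_intr} from \eqref{eq:av_raych_intr}. The $\tilde\Lambda_\CD^b$ contributions cancel in this difference, yielding
\begin{equation*}
3\,\ddt{H_\CD^b} = -12\pi G\,\baverage{\epsilont + \tilde p} + \tfrac{3}{2}\,\tilde{\CQ}_\CD^b + \tilde{\CP}_\CD^b + \tfrac{1}{2}\baverage{\SRt} \,.
\end{equation*}
Multiplying by $2H_\CD^b$ and equating with the differentiated Hamiltonian constraint recasts everything into coordinate-time derivatives of the backreaction, curvature, energy and cosmological-constant terms. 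Collecting like terms and using $48\pi G H_\CD^b = 16\pi G\cdot 3H_\CD^b$ on the right-hand side reproduces exactly the stated form \eqref{eq:int_condition_intr}.

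The main obstacle will be the careful bookkeeping of the rescaling factor $(N/\gamma)^2$. Because $\epsilont$, $\tilde p$, $\tilde{\CQ}_\CD^b$, $\SRt$ and $\tilde\Lambda_\CD^b$ all include such a factor, differentiating them along the fluid flow produces contributions proportional to $(\gamma/N)\,\ddt{}(N/\gamma)$. For the integrability condition these contributions have to cancel coherently between the terms coming from the differentiated Hamiltonian constraint and those coming from the Raychaudhuri side; for the energy balance, they have to combine with the term $2\,\epsilont\,(\gamma/N)\ddt{}(N/\gamma)$ arising from the rescaled local energy equation. Verifying that the definition \eqref{eq:dyn_back_intr} of $\tilde{\CP}_\CD^b$ absorbs precisely the leftover $\Thetat\,(\gamma/N)\ddt{}(N/\gamma)$ pieces, so that no spurious lapse-tilt remainders survive, is the main algebraic hurdle; once this is verified, the three compact formulas follow.
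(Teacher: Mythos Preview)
Your proposal is correct and follows essentially the same route as the paper: the integrability condition is obtained by differentiating \eqref{eq:av_hamilt_intr} and inserting $2H_\CD^b \times(\eqref{eq:av_raych_intr}-\eqref{eq:av_hamilt_intr})$, the energy balance comes from multiplying \eqref{eq:en_cl} by $(N/\gamma)^3$, rewriting, and averaging with the commutation rule \eqref{eq:com_rule_intrinsic}, and the rest-mass statement is immediate from $M_\CD = \VDb\baverage{\varrho}$.

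One remark on your final paragraph: the ``main algebraic hurdle'' you anticipate for the integrability condition does not in fact arise. That derivation is carried out entirely at the level of the already-averaged equations \eqref{eq:av_raych_intr}--\eqref{eq:av_hamilt_intr}, treating $\baverage{\epsilont}$, $\tilde\CQ_\CD^b$, $\baverage{\SRt}$, $\tilde\Lambda_\CD^b$ as opaque functions of $t$; you never unwrap the $(N/\gamma)^2$ factors, so no $\Thetat\,(\gamma/N)\,\mathrm{d}(N/\gamma)/\mathrm{d}t$ remainders appear and nothing has to be absorbed into $\tilde\CP_\CD^b$. The rescaling bookkeeping is genuinely needed only for the energy balance \eqref{eq:av_en_cons_intr}, where you have handled it correctly.
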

\end{thgroup}

% ###################### %
% ##### Subsection ##### %

\subsection{Effective forms of the fluid-intrinsic cosmological equations}
\label{subsec:eff}

We now introduce effective forms of the fluid-intrinsically averaged equations providing compact expressions that are suitable for applications.

% ------------------- %
% ## Subsubsection ## %

\subsubsection{Effective Friedmannian form}
\label{subsubsec:eff_friedmann}

Following the suggestion in Paper II, the set of equations given in \textit{Theorem} \ref{ths:av_intrinsic},
which differs from the standard Friedmann equations, can be seen as a (scale-dependent) Friedmannian dynamics sourced by an \textit{effective} energy-momentum tensor. The corresponding effective, time-dependent energy density and pressure for a given domain $\CD$ are defined as:
\begin{eqnarray}
	\epsilon^b_{\rm eff} (t) & := & \baverage{\epsilont} - \frac{1}{16 \pi G} \tilde\CQ_\CD^b - \frac{1}{16 \pi G} \tilde\CW_\CD^b + \frac{1}{8 \pi G} \tilde\CL^b_\CD   \, ;
 	\label{eq:def_epsilon_eff} \\
	p^b_{\rm eff} (t) & := & \baverage{\tilde p} - \frac{1}{16 \pi G} \tilde\CQ_\CD^b + \frac{1}{48 \pi G} \tilde\CW_\CD^b - \frac{1}{8 \pi G} \tilde\CL^b_\CD - \frac{1}{12 \pi G} \tilde\CP_\CD^b \, , \qquad \qquad
	\label{eq:def_p_eff}
\end{eqnarray}
where we have introduced the backreaction terms $\tilde\CW_\CD^b$, for the deviation of the averaged rescaled fluid $3-$curvature $\langle \SRt \rangle^b_\CD$ from a constant-curvature behaviour, and $\tilde\CL_\CD$ for the deviation of $\tilde \Lambda_\CD^b$ from the cosmological constant $\cc$:\footnote{%
In the standard cosmological model it is assumed that the cosmological constant $\cc$ models Dark Energy; the averaged equations show that we then also have to account for \textit{Dark Energy backreaction} $\tilde \CL^b_\CD$ in cases where $N \ne \gamma$ and $\cc \neq 0$, \textit{cf.} \eqref{tildelambda}. Note that a change of time parameter within a given foliation, $t \mapsto T(t)$, changes the value of this term at each time according to an affine transformation as the lapse factor in $\tilde \Lambda^b_\CD$ gets rescaled.
}
\begin{equation}
\label{deviationfields}
 \tilde\CW_\CD^b := \baverage{\SRt} - 6 \frac{k_{\initial\CD}}{(a^b_\CD)^2} \quad;\quad
  \tilde\CL^b := \tilde\Lambda^b_\CD - \cc\;.
  \end{equation}
$k_{\initial\CD}$ is a domain-dependent constant that can be arbitrarily set for each $\CD$. It may for instance be defined for a given domain as $k_{\initial\CD} = \langle \SRt \rangle^b_\CD (t_{\mathbf i})  / \, 6$ (recall that $a_\CD^b(\initial t) = 1$ by definition), so that $\tilde\CW_\CD^b(t_{\mathbf i}) = 0$, and $\tilde\CW_\CD^b(t)$ represents the deviation of the average rescaled curvature $\langle \tilde \SR \rangle^b_\CD (t)$ from $\langle \tilde \SR \rangle^b_\CD (t_{\mathbf{i}}) /a(t)^2$. It may also be defined instead on a given domain $\CD_H$, corresponding to some large scale of homogeneity, as a global constant $k_{\CD_H}=:k$, \textit{e.g.} for comparison with a given fiducial FLRW model, including the flat, $k=0$ case (see also the remarks of footnote~\ref{fn:int_constants} in section \ref{subsec:newton}).

Equations \eqref{eq:av_raych_intr}--\eqref{eq:av_hamilt_intr} and \eqref{eq:int_condition_intr} can then be written as Friedmann-like equations for the effective sources and
the effective Hubble function $H_\CD^b$, summarized in the following \textit{Corollary} to \textit{Theorem}~\ref{ths:av_intrinsic}.

\begin{corgroup}
\setcounter{corgroupcount}{1}
\refstepcounter{corgroupcount}
% Will set the counter to 2 and update this for correct referencing. Apparently you need to add one to the counter while updating it for some reason, so you have to set it to the value you want minus one, then simultaneously add one and update the referencing.
\label{cor:intrinsic_effective_all}
\begin{corollary}[Effective Friedmannian form]\\
\label{cor:intrinsic_effective_Friedmann}
\vspace{-7pt}

The set of effective cosmological evolution equations of \textit{Theorem} \ref{th:av_evol_intrinsic} can be written in Friedmannian form for the effective sources \eqref{eq:def_epsilon_eff} and \eqref{eq:def_p_eff}:
\begin{eqnarray}
 3 \, \frac{1}{a^b_\CD} \frac{{\rm d}^2 a^b_\CD}{{\rm d}t^2} & = & - 4 \pi G \,(\epsilon^b_{\rm eff} + 3 \, p^b_{\rm eff}) + \cc \; ;
  \label{eq:friedmann_raych}  \\
 3 \, \big( H_\CD^b \big)^2 & = & 8 \pi G \,\epsilon^b_{\rm eff} - 3 \frac{k_{\initial\CD}}{(a^b_\CD)^2} + \cc \, ,
  \label{eq:friedmann_hamilt} 
\end{eqnarray}
while the integrability condition \eqref{eq:int_condition_intr} reduces to the effective conservation equation:
\begin{equation}
 \ddt{} \epsilon^b_{\rm eff} + 3 \, H_\CD^b \, \left( \epsilon^b_{\rm eff} + p^b_{\rm eff} \right) = 0 \; .
 \label{eq:friedmann_cons}
\end{equation}
\end{corollary}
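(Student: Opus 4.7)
The proof is a direct algebraic verification that the definitions of $\epsilon^b_{\rm eff}$, $p^b_{\rm eff}$, $\tilde\CW_\CD^b$, and $\tilde\CL^b_\CD$ have been engineered exactly so as to recast the system of \textit{Theorem}~\ref{ths:av_intrinsic} in the Friedmann-like form. There is no new dynamical content to extract; the task is to check consistency of the coefficients. I would proceed equation by equation, first handling the two constraints (Hamiltonian-like and Raychaudhuri-like), and then treating the conservation law separately, since it involves time derivatives and requires the constancy of $k_{\initial\CD}$.

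The plan for equation~\eqref{eq:friedmann_hamilt} is to start from~\eqref{eq:av_hamilt_intr} and replace $\baverage{\SRt} = \tilde\CW_\CD^b + 6 k_{\initial\CD}/(a_\CD^b)^2$ and $\tilde\Lambda_\CD^b = \tilde\CL_\CD^b + \cc$. Collecting the $\baverage{\epsilont}$, $\tilde\CQ_\CD^b$, $\tilde\CW_\CD^b$, and $\tilde\CL_\CD^b$ contributions and comparing with $8\pi G \, \epsilon^b_{\rm eff}$ from~\eqref{eq:def_epsilon_eff} gives the claim, with the constant-curvature term $-3 k_{\initial\CD}/(a_\CD^b)^2$ and the cosmological constant $\cc$ emerging as residuals. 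For~\eqref{eq:friedmann_raych} one computes $-4\pi G\,(\epsilon^b_{\rm eff} + 3 p^b_{\rm eff}) + \cc$ from the definitions~\eqref{eq:def_epsilon_eff}--\eqref{eq:def_p_eff}: the $\tilde\CW_\CD^b$ coefficients cancel ($-1/16 + 3/48 = 0$), the $\tilde\CL_\CD^b$ terms combine with $\cc$ to reconstruct $\tilde\Lambda_\CD^b$, the $\tilde\CQ_\CD^b$ coefficients add to give $+\tilde\CQ_\CD^b$, and $\tilde\CP_\CD^b$ is recovered with the correct weight, matching~\eqref{eq:av_raych_intr}.

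For the conservation law~\eqref{eq:friedmann_cons}, I would differentiate~\eqref{eq:def_epsilon_eff} along the fluid flow and add $3 H_\CD^b (\epsilon^b_{\rm eff} + p^b_{\rm eff})$ as obtained from~\eqref{eq:def_epsilon_eff}--\eqref{eq:def_p_eff}. The key simplification is the identity
\begin{equation}
\ddt{}\tilde\CW_\CD^b + 2 H_\CD^b \,\tilde\CW_\CD^b \;=\; \ddt{}\baverage{\SRt} + 2 H_\CD^b \baverage{\SRt}\;,
\end{equation}
which follows from $k_{\initial\CD}$ being a domain-labelled constant, together with $\ddt{}(a_\CD^b)^{-2} = -2 H_\CD^b / (a_\CD^b)^2$: the $k_{\initial\CD}$ contributions from differentiating $(a_\CD^b)^{-2}$ exactly cancel against the $2 H_\CD^b \cdot (-6 k_{\initial\CD}/(a_\CD^b)^2)$ piece. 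Similarly $\ddt{}\tilde\CL_\CD^b = \ddt{}\tilde\Lambda_\CD^b$. After multiplying by $16\pi G$ and collecting terms, the condition $\ddt{}\epsilon^b_{\rm eff} + 3 H_\CD^b (\epsilon^b_{\rm eff} + p^b_{\rm eff}) = 0$ reproduces line by line the integrability condition~\eqref{eq:int_condition_intr}.

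The main obstacle is purely bookkeeping: tracking the many numerical coefficients ($1/16$, $1/48$, $1/12$, $1/8$) and signs through the linear combinations, and being careful with the derivative of $\tilde\CW_\CD^b$. I expect no conceptual difficulty, as the three target equations form a triangular system: once the two constraint equations are verified, it could alternatively be argued that \eqref{eq:friedmann_cons} must hold, since it is the integrability condition of the Friedmann-like pair and the latter was shown to be equivalent to the original system which itself obeys~\eqref{eq:int_condition_intr}. I would nevertheless prefer the direct verification, which also makes transparent the role played by the constancy of $k_{\initial\CD}$ in trading the $\baverage{\SRt}$ dynamics for the $\tilde\CW_\CD^b$ dynamics.
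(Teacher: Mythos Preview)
Your proposal is correct and matches the paper's approach: the paper presents this corollary as an immediate consequence of the definitions~\eqref{eq:def_epsilon_eff}--\eqref{deviationfields} without spelling out any proof, so the direct algebraic verification you describe is precisely the implicit argument. Your bookkeeping of the coefficients is accurate, and your observation that the constancy of $k_{\initial\CD}$ ensures $\ddt{}\tilde\CW_\CD^b + 2H_\CD^b\,\tilde\CW_\CD^b = \ddt{}\baverage{\SRt} + 2H_\CD^b\baverage{\SRt}$ is exactly the point that makes~\eqref{eq:friedmann_cons} equivalent to~\eqref{eq:int_condition_intr}.
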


% ------------------- %
% ## Subsubsection ## %

\subsubsection{Effective scalar field form}
\label{subsubsec:eff_morphon}

Looking at the effective sources \eqref{eq:def_epsilon_eff} and \eqref{eq:def_p_eff}, we appreciate that the
kinematical backreaction term $- \tilde\CQ_\CD^b / (16 \pi G)$ individually obeys an effective \textit{stiff equation of state}, \textit{i.e.}, its contributions $p^b_\CQ $ and $\epsilon^b_\CQ$ to the effective pressure and energy density (respectively) obey $p^b_\CQ = \epsilon^b_\CQ$. The curvature deviation term $- \tilde\CW_\CD^b / (16 \pi G)$, on the other hand, individually obeys an effective \textit{curvature equation of state}, $p^b_\CW = - \epsilon^b_\CW / 3$ (with similar notations), and the Dark Energy backreaction term $\tilde\CL^b_\CD / (8 \pi G)$ obeys an effective \textit{Dark Energy equation of state}, $p^b_\CL = -\epsilon^b_\CL$. The dynamical backreaction term 
$-\tilde\CP_\CD^b / (12 \pi G)$ arises as an additional effective pressure. 
These considerations motivate the introduction of a scalar field language, since a free scalar field in the fluid analogy obeys a stiff equation of state, while the scalar field potential contributes with opposite signs to the expressions for the energy density and pressure.

The backreaction terms (by definition only time-dependent, as spatial averages) can be represented, for each domain $\CD$, by an effective time-dependent scalar field $\tilde \Phi_\CD(t)$, the \textit{morphon field}, as introduced in \cite{morphon}. The 
resulting Friedmann-like equations are sourced in this description by the following effective time-dependent energy density and pressure:\footnote{In the paper introducing the morphon field \cite{morphon}, the possibility of phantom energies has been discussed too, which in this effective picture does not violate energy conditions. We have omitted this possible parametrization here.}
\beq \label{morphonminimal}
	\epsilon^b_{\rm eff} (t)= \baverage{\epsilont}(t) + \epsilon^{\Phi, b}_{\rm eff}(t)\ \ ; \ \ 
	p^b_{\rm eff}(t) = \baverage{\tilde p}(t) + p^{\Phi, b}_{\rm eff}(t) \; ,
\eeq
with the morphon variables (for the simplest choice of a scalar field fluid analogy), 
\beq \label{morphonenergies}
	\epsilon^{\Phi, b}_{\rm eff}: = \frac{1}{2}\left(\ddt{}\tilde\Phi_\CD\right)^2 + U^b_{\rm eff}(\tilde\Phi_\CD) \ \ ; \ \ 
	p^{\Phi, b}_{\rm eff}: = \frac{1}{2}\left(\ddt{}\tilde\Phi_\CD\right)^2 - U^b_{\rm eff}(\tilde\Phi_\CD)\; .
\eeq
The morphon field is therefore defined from the backreaction terms as follows:
\begin{eqnarray}
\label{morphondictionary}
&&24\pi G \left(\ddt{}\tilde\Phi_\CD\right)^2 := - 3\tilde\CQ_\CD^b - 2 \tilde\CP_\CD^b - \tilde\CW_\CD^b  \ ; \\ 
&&24\pi G \ U^b_{\rm eff}(\tilde\Phi_\CD) \; \; := \ \ 3\tilde\CL_\CD^b + \tilde\CP_\CD^b - \tilde\CW_\CD^b   \ . 
\end{eqnarray}
We infer that the \textit{virial condition} on morphon energies is satisfied for the relation:
\beq
\label{virial}
0\,=\,\left(\ddt{}\tilde\Phi_\CD\right)^2 -  U^b_{\rm eff}(\tilde\Phi_\CD) \,=\, - \frac{1}{8 \pi G} \left(
\tilde\CQ_\CD^b + \tilde\CP_\CD^b + \tilde\CL_\CD^b  \right)  \;,
\eeq
\textit{i.e.}, in contrast to the irrotational dust matter model without cosmological constant considered in \cite{morphon} where only the kinematical backreaction has to vanish for this condition, here the dynamical and the Dark Energy backreaction terms also enter the energy density balance. Those terms also determine, together with the curvature deviation term, the effective potential in which the morphon evolves. Vanishing of the backreaction terms would reduce the averaged equations to the standard Friedmann equations that represent a case of the scalar field `virial equilibrium' \eqref{virial}.

We summarize the reformulation of the averaged equations in terms of the morphon field in the following \textit{Corollary}.

\begin{corollary}[Effective Friedmannian form with effective scalar field]\\
\label{cor:intrinsic_effective_morphon}
\vspace{-7pt}

The set of effective cosmological evolution equations of \textit{Theorem} \ref{ths:av_intrinsic} can be written in Friedmannian form for the averaged energy sources and effective scalar field energies:\footnote{%
%
% FOOTNOTE
The language of a given effective scalar field theory can be freely specified. We may think of other effective scalar field theories, \textit{e.g.} non-minimally coupled, especially if we set the scalar field analogy within an extrinsic averaging formalism, where another dictionary could be a better choice. In this line, the analogy---here set up for fluid-intrinsic averaging---could have interesting implications for the relation of different effective scalar field theories to different foliation choices. By construction, the scalar field obtained here for any given domain $\CD$ obeys the evolution equations of a homogeneous scalar field, being built from pure functions of $t$. One may, however, define it first (following the above procedure) as a pure function of the time $t$ of a given foliation, and then consider this field in another foliation choice, where it will in general be inhomogeneous. In this way, the scalar field would acquire a nonvanishing spatial gradient and would so allow for a comparison with phenomenological
inhomogeneous scalar fields employed in standard perturbation theory.
}
\begin{eqnarray}
	 3 \, \frac{1}{a^b_\CD} \frac{{\rm d}^2 a^b_\CD}{{\rm d}t^2} & = & - 4 \pi G \,\left(\baverage{\epsilont} + \epsilon^{\Phi, b}_{\rm eff} + 3 \, (\baverage{\tilde p} + p^{\Phi, b}_{\rm eff})\right) + \cc \; ;
 	\label{eq:friedmann_raych_morphon}  \\
	3 \, \big(H_\CD^b \big)^2 & = & 8 \pi G \, \left(
 \baverage{\epsilont} + \epsilon^{\Phi, b}_{\rm eff} \right) - 3 \frac{k_{\initial\CD}}{(a^b_\CD)^2} + \cc \; .
	\label{eq:friedmann_hamilt_morphon}
\end{eqnarray}
The integrability condition \eqref{eq:int_condition_intr}, written in terms of the deviation fields $\tilde\CW_\CD^b$ and $\tilde\CL_\CD$ (see \eqref{deviationfields}),
\begin{eqnarray}
 \ddt{} \tilde{\CQ}_\CD^b + 6 H^b_\CD \tilde{\CQ}_\CD^b + \ddt{} \tilde\CW_\CD 
			+ 2 H^b_\CD \tilde\CW_\CD + 4 H^b_\CD \tilde{\CP}_\CD^b - 2 \ddt{} {\tilde\CL}_\CD^b\nonumber \\
	= 16 \pi G \left( \ddt{} \baverage{\epsilont}
			+ 3 H^b_\CD \baverage{\epsilont + \tilde p} \right)
			 \; , \qquad
\end{eqnarray}
is mapped to a conservation law for the effective time-dependent scalar field energies, equivalent to an effective Klein-Gordon operator applied to $\tilde\Phi_\CD$(t):
\begin{eqnarray}
\ddt{}\epsilon^{\Phi,b}_{\rm eff} + 3 \, H_\CD^b \, \left( \epsilon^{\Phi,b}_{\rm eff} + p^{\Phi,b}_{\rm eff}\right) &+ \ \mathfrak{S}_\CD^b \,=\,0 \; , \\
\mathrm{i.e.,} \quad \frac{{\rm d} \tilde\Phi_\CD}{{\rm d}t}\left(\frac{{\rm d}^2 \tilde\Phi_\CD}{{\rm d}t^2}  + 3 H_\CD^b \frac{{\rm d} \tilde\Phi_\CD}{{\rm d}t} + \frac{\partial U^b_{\rm eff}(\tilde\Phi_\CD)}{\partial \tilde\Phi_\CD}\right) &+\ \mathfrak{S}_\CD^b \,=\,0\; .
\label{kleingordon}
\end{eqnarray}
This is balanced by the averaged conservation law for the sources (\textit{cf.} \eqref{eq:av_en_cons_intr}):
\begin{eqnarray}
\label{conservationbalance}
\mathfrak{S}_\CD^b (t):= \ddt{}\baverage{\epsilont} + 3 \, H_\CD^b \, \left( \baverage{\epsilont} + \baverage{\tilde p}\right) \,=\, \baverage{\Thetat} \!\! \Big\langle \tilde p \Big\rangle_\CD^b \!\!
			- \baverage{\Thetat \, \tilde p} \!\! \qquad\qquad\nonumber \\
	- \baverage{\frac{N^3}{\gamma^3} (\nabla_\mu q^\mu + q^\mu a_\mu + \pi^{\mu \nu} \sigma_{\mu \nu})}
			+ 2 \baverage{\epsilont \, \frac{\gamma}{N} \ddt{} \left( \frac{N}{\gamma} \right) } \; , \quad\qquad \;
\end{eqnarray}
so that in total the conservation law for the total effective energy densities \eqref{eq:friedmann_cons} holds.
\end{corollary}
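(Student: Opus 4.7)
The plan is to show that Corollary~\ref{cor:intrinsic_effective_morphon} is essentially a bookkeeping rewrite of Corollary~\ref{cor:intrinsic_effective_Friedmann} and of the integrability condition of Theorem~\ref{th:av_cond_intrinsic}, once the morphon dictionary \eqref{morphondictionary} is inserted into the definitions \eqref{eq:def_epsilon_eff}--\eqref{eq:def_p_eff} of the effective sources. No new dynamical input enters; all manipulations are algebraic.

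First I would check the consistency of the morphon split $\epsilon^b_{\rm eff} = \baverage{\epsilont} + \epsilon^{\Phi, b}_{\rm eff}$ and $p^b_{\rm eff} = \baverage{\tilde p} + p^{\Phi, b}_{\rm eff}$. Adding and subtracting the two relations \eqref{morphondictionary} gives
\begin{equation*}
48\pi G\, \epsilon^{\Phi, b}_{\rm eff} = -3\tilde\CQ_\CD^b - 3 \tilde\CW_\CD^b + 6 \tilde\CL_\CD^b, \quad 48\pi G\, p^{\Phi, b}_{\rm eff} = -3\tilde\CQ_\CD^b + \tilde\CW_\CD^b - 6\tilde\CL_\CD^b - 4\tilde\CP_\CD^b,
\end{equation*}
which match $\epsilon^b_{\rm eff} - \baverage{\epsilont}$ and $p^b_{\rm eff} - \baverage{\tilde p}$ read off from \eqref{eq:def_epsilon_eff}--\eqref{eq:def_p_eff}. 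Once this has been verified, equations \eqref{eq:friedmann_raych_morphon}--\eqref{eq:friedmann_hamilt_morphon} follow by a direct substitution into \eqref{eq:friedmann_raych}--\eqref{eq:friedmann_hamilt}.

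Next I would rewrite the integrability condition. From \eqref{deviationfields}, $\mathrm{d}\tilde\CL_\CD^b/\mathrm{d}t = \mathrm{d}\tilde\Lambda_\CD^b/\mathrm{d}t$ (since $\cc$ is constant), and the curvature subtraction is self-cancelling in the combination appearing on the left-hand side of \eqref{eq:int_condition_intr}: the time derivative of $-6 k_{\initial\CD}/(a^b_\CD)^2$ contributes $+12 H^b_\CD k_{\initial\CD}/(a^b_\CD)^2$, which exactly cancels $2 H^b_\CD$ times that same subtraction. Hence $\mathrm{d}\tilde\CW_\CD^b/\mathrm{d}t + 2 H^b_\CD \tilde\CW_\CD^b = \mathrm{d}\baverage{\SRt}/\mathrm{d}t + 2 H^b_\CD \baverage{\SRt}$, and moving $2\,\mathrm{d}\tilde\Lambda_\CD^b/\mathrm{d}t$ from the right to the left of \eqref{eq:int_condition_intr} and re-expressing it as $-2\,\mathrm{d}\tilde\CL_\CD^b/\mathrm{d}t$ yields the reformulated integrability condition of the Corollary.

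Finally, the Klein-Gordon form will follow from the effective conservation law \eqref{eq:friedmann_cons} after the same morphon split. The averaged source contribution separates out as $\mathfrak{S}_\CD^b$ given by \eqref{conservationbalance} (which is just the averaged energy balance \eqref{eq:av_en_cons_intr} of Theorem~\ref{th:av_cond_intrinsic}), leaving a residual constraint on the morphon part: $\mathfrak{S}_\CD^b + \mathrm{d}\epsilon^{\Phi, b}_{\rm eff}/\mathrm{d}t + 3 H_\CD^b \, (\epsilon^{\Phi, b}_{\rm eff} + p^{\Phi, b}_{\rm eff}) = 0$. Using $\epsilon^{\Phi, b}_{\rm eff} + p^{\Phi, b}_{\rm eff} = (\mathrm{d}\tilde\Phi_\CD/\mathrm{d}t)^2$ and $\mathrm{d}\epsilon^{\Phi, b}_{\rm eff}/\mathrm{d}t = (\mathrm{d}\tilde\Phi_\CD/\mathrm{d}t)\,[\mathrm{d}^2\tilde\Phi_\CD/\mathrm{d}t^2 + \partial U^b_{\rm eff}/\partial\tilde\Phi_\CD]$, one can factor out $\mathrm{d}\tilde\Phi_\CD/\mathrm{d}t$ to obtain \eqref{kleingordon}. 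No substantial obstacle arises; the only subtlety worth flagging is the sign-indefiniteness of $-3\tilde\CQ_\CD^b - 2\tilde\CP_\CD^b - \tilde\CW_\CD^b$, which controls whether $\tilde\Phi_\CD$ is a standard real scalar or must be reinterpreted as a phantom-type field, as already acknowledged in the footnote to \eqref{morphonminimal}.
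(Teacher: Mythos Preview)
Your proposal is correct and follows the same approach as the paper, which treats this Corollary purely as a definitional rewriting: the paper sets up the morphon dictionary \eqref{morphondictionary} and the effective sources \eqref{eq:def_epsilon_eff}--\eqref{eq:def_p_eff} so that the substitution into Corollary~\ref{cor:intrinsic_effective_Friedmann} and the integrability condition \eqref{eq:int_condition_intr} is automatic. Your explicit verification of the algebraic consistency (the $48\pi G$ combinations, the self-cancellation of the $k_{\initial\CD}$ term, and the factorization yielding the Klein--Gordon form) is exactly the bookkeeping the paper leaves implicit.
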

\end{corgroup}

We note the important property that $\mathfrak{S}_\CD^b (t)$ only identically vanishes in general in the case of dust
matter, then separating the individually satisfied material sources conservation law from a Klein-Gordon equation for $\tilde\Phi_\CD (t)$. In more general cases, $\mathfrak{S}_\CD^b (t)$ is not identically zero, and the dynamics of the averaged material sources, energy density and pressure, is consequently affected by its coupling to the effective morphon field.

% ###################### %
% ##### Subsection ##### %

\subsection{Lagrangian effective forms}
\label{subsec:lagrangian_form}

The averaged evolution equations derived in the fluid-intrinsic approach can be further simplified by moving to a Lagrangian picture, where the rescaled variables ($\Thetat$, $\epsilont$, ...)  reduce to the original variables ($\Theta$, $\epsilon$, ...) since $N=\gamma$.
We recall that a Lagrangian picture requires both a foliation choice of hypersurfaces at constant fluid proper time $\tau$, and the natural adapted spacetime coordinates choice $(\tau, X^i)$.
We shall list below (section \ref{subsubsec:lagrange_interest}) arguments why we consider this choice as the most adapted one, both to the geometric structure and to cosmological applications. The choice of fluid-comoving spatial coordinates $X^i$ actually remains optional in the following, as we have seen that the average equations do not depend on the shift, all terms that they feature being invariant under a change of spatial coordinates.

Within this picture, the commutation rule \eqref{eq:com_rule_intrinsic} and scale factor evolution rate \eqref{eq:evol_aD_intrinsic} become respectively:
\begin{equation}
\baveragedot{\overset{\phantom{.}}{\psi}} = \baverage{\dot\psi}
 		- \Rbaverage{\Theta} \Rbaverage{\psi} + \Rbaverage{\Theta \, \psi} \quad  ; \quad 
\frac{\dot a^b_\CD}{a^b_\CD} \, = \, \frac{1}{3} \baverage{ \Theta} \; ,
\end{equation}
where the overdot and $\mathrm{d}/\mathrm{d}t$ are here equivalent operators for scalars.

% ------------------- %
% ## Subsubsection ## %

\subsubsection{Lagrangian effective cosmological equations}

We summarize the Lagrangian formulation of the averaged cosmological equations, \textit{i.e.}, their expression in a Lagrangian picture, in the following \textit{Corollary}. 

\begin{corgroup}
\label{cor:intrinsic_Lagrange_all}
\begin{corollary}[Lagrangian effective cosmological equations]\\
\label{cor:intrinsic_Lagrange_averages}
\vspace{-7pt}

The evolution equations for the fluid-volume scale factor $a^b_\CD$ \eqref{eq:av_raych_intr}--\eqref{eq:av_hamilt_intr} for a choice of foliation by constant fluid proper time slices, parametrized by $t=\tau$, read:
\begin{eqnarray}
 3 \, \frac{\ddot a^b_\CD}{a^b_\CD} & = & \, 
		- 4 \pi G \baverage{\, \epsilon + 3 p \,} + \cc + \CQ^b_\CD + \CP^b_\CD \ ; \\
 3 \, \big(H_\CD^b \big)^2  & = & \,
 		8 \pi G \baverage{\,\epsilon\,} + \cc -\frac{1}{2} \QD^b -\frac{1}{2} \baverage{\SR} \ ,
\end{eqnarray}
with $H_\CD^b = \dot a^b_\CD / a^b_\CD$, and the backreaction terms being reduced to the following forms:
\begin{eqnarray}
 \label{eq:kin_back_lagrange}
 \CQ^b_\CD & = & \frac{2}{3} \baverage{\left(\Theta - \baverage{\Theta} \right)^2 }
			- 2 \baverage{\sigma^2} + 2 \baverage{\omega^2} \, ; \\ 
 \label{eq:dyn_back_lagrange}
 \CP^b_\CD & = & \baverage{\CA} \, .
\end{eqnarray}
The corresponding integrability condition (\textit{cf.} equation~\eqref{eq:int_condition_intr}) now becomes:
\begin{eqnarray}
 {\dot\CQ}^b_\CD + 6 H_\CD^b \QD^b + \baveragedot{\SR}
		+ 2 H_\CD^b \baverage{\SR} + 4 H_\CD^b \PD^b \qquad \nonumber\\
		= 16 \pi G \left(\baveragedot{\epsilon}
		+ 3 H_\CD^b \baverage{\epsilon + p} \right) ,
 \label{eq:int_cond_lagrange}
\end{eqnarray}
with the right-hand side satisfying the averaged energy conservation equation \eqref{eq:av_en_cons_intr} under the following simpler form:
\begin{equation}
\label{conservation_lagrange}
\baveragedot{\epsilon}  + \, 3 H_\CD^b \baverage{\epsilon + p} = \baverage{\Theta} \baverage{p} - \baverage{\Theta \, p} 
		\, - \baverage{\nabla_\mu q^\mu + q^\mu a_\mu + \pi^{\mu \nu} \sigma_{\mu \nu}} \, ._{\phantom{\big|}}
\end{equation}
\end{corollary}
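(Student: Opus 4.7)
The plan is to treat this corollary as a direct specialization of \textit{Theorem}~\ref{ths:av_intrinsic} to the Lagrangian picture, so the essential content is to check that every tilde-decorated or lapse-dependent quantity collapses to its untilded counterpart when $N=\gamma$ and $t=\tau$. First, I will recall the defining feature of the Lagrangian description from section~\ref{subsubsec:lag_desc}: choosing a fluid proper time foliation and then using $\tau$ as the time coordinate gives $N=\gamma$ and ${\rm d}/{\rm d}t={\rm d}/{\rm d}\tau=\dot{}$ for scalars. In particular, the threading lapse factor becomes $N/\gamma\equiv 1$, so that, term by term,
\[
\Thetat=\Theta,\qquad \sigmat^2=\sigma^2,\qquad \omegat^2=\omega^2,\qquad \epsilont=\epsilon,\qquad \tilde p=p,\qquad \tilde{\CA}=\CA,\qquad \SRt=\SR\;.
\]
Next, since $N/\gamma$ is identically unity along the flow, I note that
$(\gamma/N)\,({\rm d}/{\rm d}t)(N/\gamma)=0$, which kills the second contribution to $\tilde{\CP}_\CD^b$ in \eqref{eq:dyn_back_intr} and the final source term in \eqref{eq:av_en_cons_intr}. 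Likewise $\tilde\Lambda_\CD^b=\cc\,\baverage{N^2/\gamma^2}=\cc$ in \eqref{tildelambda}, so the averaged cosmological constant term collapses to its Friedmannian value and the deviation field $\tilde\CL_\CD^b$ vanishes.

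With these substitutions in hand, I would just copy \eqref{eq:av_raych_intr}--\eqref{eq:dyn_back_intr} and \eqref{eq:int_condition_intr}--\eqref{eq:av_en_cons_intr} and read off the reduced forms claimed in the corollary, noting that the left-hand side $(1/a^b_\CD)\,{\rm d}^2 a^b_\CD/{\rm d}t^2$ of the acceleration equation is now legitimately written as $\ddot a^b_\CD/a^b_\CD$, since the coordinate time coincides with the common fluid proper time so that ${\rm d}^2/{\rm d}t^2$ acting on the purely time-dependent $a^b_\CD(t)$ agrees with $u^\alpha\nabla_\alpha u^\beta\nabla_\beta$ applied to it along any fluid worldline. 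This also justifies writing $\dot a^b_\CD/a^b_\CD$ for the effective Hubble rate and $\baveragedot{\,\cdot\,}$ for the time derivative of an average. In the energy balance, the rescaled dissipative term $\baverage{(N/\gamma)^3(\nabla_\mu q^\mu+q^\mu a_\mu+\pi^{\mu\nu}\sigma_{\mu\nu})}$ reduces to $\baverage{\nabla_\mu q^\mu+q^\mu a_\mu+\pi^{\mu\nu}\sigma_{\mu\nu}}$, giving \eqref{conservation_lagrange}.

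The only genuine point to verify carefully, rather than by inspection, is the reduction of the commutation rule and the volume evolution rate in the Lagrangian setting, since these underpin the derivation of the averaged equations. I would check this by applying \textit{Lemma}~\ref{lemma:com_rule_intrinsic} with $\Thetat=\Theta$ and $\mathrm{d}/\mathrm{d}t=\dot{}$, obtaining $\baveragedot{\overset{\phantom{.}}{\psi}}=\baverage{\dot\psi}-\Rbaverage{\Theta}\Rbaverage{\psi}+\Rbaverage{\Theta\,\psi}$ and $\dot a^b_\CD/a^b_\CD=\baverage{\Theta}/3$, as already stated in the paragraph just above the corollary. The averaged Raychaudhuri equation, Hamiltonian constraint, integrability condition, and energy balance then follow by the same averaging procedure used in the proof of \textit{Theorem}~\ref{ths:av_intrinsic}, now with every rescaling factor equal to unity.

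The main (and only mild) obstacle I anticipate is making sure no hidden lapse factor is dropped: for instance, the $\cc$ term in the Hamiltonian constraint originally comes from averaging $(N/\gamma)^2\cdot 2\cc$ and must be checked to yield exactly $\cc$ rather than $\baverage{1}\,\cc$ (which it does, since $\baverage{1}=1$). Once these trivial checks are performed, the corollary is established by direct substitution.
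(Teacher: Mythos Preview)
Your proposal is correct and follows essentially the same approach as the paper: the corollary is presented there as a direct specialization of \textit{Theorem}~\ref{ths:av_intrinsic} to the Lagrangian picture, with the paper simply noting that the rescaled variables $(\Thetat,\epsilont,\dots)$ reduce to $(\Theta,\epsilon,\dots)$ since $N=\gamma$, and that the commutation rule and volume evolution rate simplify accordingly. Your additional explicit checks (the vanishing of $(\gamma/N)\,\mathrm{d}(N/\gamma)/\mathrm{d}t$, the collapse of $\tilde\Lambda^b_\CD$ to $\cc$, and the reduction of the $(N/\gamma)^3$ prefactor in the dissipative term) are exactly the substitutions the paper leaves implicit.
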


% ------------------- %
% ## Subsubsection ## %

\subsubsection{Effective Friedmannian and Lagrangian form}

Combining the above form with the effective Friedmannian form provides the most compact writing of the averaged cosmological equations, summarized in the following second part of the \textit{Corollary}.

\begin{corollary}[Compact form of Lagrangian cosmologies]\\
\label{cor:intrinsic_Lagrange_Friedmann}
\vspace{-7pt}

In a Lagrangian picture (which implies $N=\gamma$), the effective Friedmann equations \eqref{eq:friedmann_hamilt}--\eqref{eq:friedmann_cons} reduce to the following form: 
\begin{align}
3 \frac{\ddot a^b_\CD}{a^b_\CD}  =  - 4 \pi G \left(\epsilon^b_{\rm eff} + 3 \, p^b_{\rm eff}\right) + \cc \ ;
  \label{eq:friedmann_lagrange2} \\
 3 \, \big(H_\CD^b \big)^2  =  8 \pi G \,\epsilon^b_{\rm eff} - \frac{3 \, k_{\initial\CD}}{(a^b_\CD)^2} + \cc \ ;
  \label{eq:friedmann_lagrange1} \\
{\dot\epsilon}^{\,b}_{\rm eff} + 3 \, H_\CD^b \, \left( \epsilon^b_{\rm eff} + p^b_{\rm eff} \right) = 0 \ ,
 \label{eq:friedmann_lagrange3}  
\end{align}
with $H^b_\CD = {\dot a^b}_\CD / a^b_\CD$. The effective energy density $\epsilon^b_{\rm eff} (t)$ and effective pressure $p^b_{\rm eff} (t)$,
as defined in \eqref{eq:def_epsilon_eff} and \eqref{eq:def_p_eff}, are here simplified to the following expressions:
\begin{align}
\epsilon^b_{\rm eff}  (t) & {} = \baverage{\epsilon} - \frac{1}{16 \pi G} \QD^b - \frac{1}{16 \pi G} \WD^b \ ; 
\label{effectivesources_lagrange1}\\
p^b_{\rm eff} (t) & {} = \baverage{p} - \frac{1}{16 \pi G} \QD^b + \frac{1}{48 \pi G} \WD^b - \frac{1}{12 \pi G} \PD^b  \ , \quad \qquad
\label{effectivesources_lagrange2}
\end{align}
with $\QD^b$ and $\PD^b$ as given by \eqref{eq:kin_back_lagrange} and \eqref{eq:dyn_back_lagrange},
and with the curvature deviation term $\tilde\CW_\CD^b$ reduced to $\WD^b = \baverage{\SR} - 6 \, k_{\initial\CD} / (a^b_\CD)^2$. 
\end{corollary}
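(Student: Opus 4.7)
The plan is to obtain this corollary as a direct specialization of the effective Friedmannian form given in \textit{Corollary}~\ref{cor:intrinsic_effective_Friedmann}, once the Lagrangian choice $N=\gamma$ is imposed on the foliation and time coordinate. First I would apply this condition to every ingredient of the general fluid-intrinsic system of \textit{Theorem}~\ref{ths:av_intrinsic}, which is a preliminary step that yields \textit{Corollary}~\ref{cor:intrinsic_Lagrange_averages}. Setting $N=\gamma$ identically implies $N/\gamma\equiv 1$, so that every rescaled quantity carrying the prefactor $(N/\gamma)^n$ collapses to its rest-frame counterpart: $\Thetat=\Theta$, $\sigmat^2=\sigma^2$, $\omegat^2=\omega^2$, $\epsilont=\epsilon$, $\tilde p=p$, $\tilde{\CA}=\CA$ and $\SRt=\SR$. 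In particular the effective cosmological constant contribution \eqref{tildelambda} reduces to $\tilde\Lambda_\CD^b=\cc$, so that the Dark Energy backreaction $\tilde{\CL}_\CD^b$ defined in \eqref{deviationfields} vanishes identically. Moreover, since $\gamma/N\equiv 1$ is constant along the fluid flow, the second contribution $\baverage{\Thetat(\gamma/N)\mathrm{d}(N/\gamma)/\mathrm{d}t}$ in \eqref{eq:dyn_back_intr} vanishes, leaving $\tilde{\CP}_\CD^b=\baverage{\CA}$, and \eqref{eq:kin_back_intr} reduces directly to the expression \eqref{eq:kin_back_lagrange}.

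Next I would insert these simplifications into the effective sources \eqref{eq:def_epsilon_eff}--\eqref{eq:def_p_eff} and into the curvature deviation $\tilde{\CW}_\CD^b$ from \eqref{deviationfields}. The $\tilde{\CL}_\CD^b$ term drops from $\epsilon^b_{\rm eff}$ and $p^b_{\rm eff}$, the tildes on $\epsilont$, $\tilde p$, $\tilde{\CQ}_\CD^b$, $\tilde{\CP}_\CD^b$ disappear, and $\tilde{\CW}_\CD^b$ reduces to $\WD^b=\baverage{\SR}-6\,k_{\initial\CD}/(a^b_\CD)^2$, giving precisely \eqref{effectivesources_lagrange1}--\eqref{effectivesources_lagrange2}. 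The Friedmann-like equations \eqref{eq:friedmann_lagrange2}--\eqref{eq:friedmann_lagrange1} then follow by substituting these simplified effective sources into \eqref{eq:friedmann_raych}--\eqref{eq:friedmann_hamilt} of \textit{Corollary}~\ref{cor:intrinsic_effective_Friedmann}, which holds for any choice of $t$ compatible with the chosen foliation. The effective conservation law \eqref{eq:friedmann_lagrange3} is inherited unchanged from \eqref{eq:friedmann_cons}. Equivalently, one can verify these Friedmann-like equations directly by rearranging the Lagrangian averaged equations of \textit{Corollary}~\ref{cor:intrinsic_Lagrange_averages} so as to read off the coefficients of $\baverage{\epsilon}$, $\baverage{p}$, the backreactions $\QD^b$, $\PD^b$ and the deviation $\WD^b$.

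The derivation is essentially algebraic bookkeeping and presents no substantive technical obstacle, since the work of deriving the intrinsic averaged system has been done upstream in \textit{Theorem}~\ref{ths:av_intrinsic} and Corollaries \ref{cor:intrinsic_effective_Friedmann}--\ref{cor:intrinsic_Lagrange_averages}. The main point requiring care is conceptual and concerns the domain of validity of the Lagrangian picture itself: one must restrict to a spacetime region where the constant-fluid-proper-time hypersurfaces, defined by \eqref{eq:def_tau}, remain spacelike (a caveat already noted there), so that \textit{Theorem}~\ref{ths:av_intrinsic} and its reformulations remain applicable. A minor ancillary check is that $N=\gamma$ makes $\mathrm{d}/\mathrm{d}\tau=\mathrm{d}/\mathrm{d}t$ through \eqref{eq:der_rel_2}, which for scalars also matches the four-covariant overdot $u^\alpha\nabla_\alpha$, justifying the writing $H_\CD^b=\dot a_\CD^b/a_\CD^b$ and the overdot notation used throughout the statement.
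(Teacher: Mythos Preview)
Your proposal is correct and follows essentially the same approach as the paper: the paper itself does not give an explicit proof of this corollary but simply presents it as the combination of the Lagrangian specialisation (\textit{Corollary}~\ref{cor:intrinsic_Lagrange_averages}) with the effective Friedmannian form (\textit{Corollary}~\ref{cor:intrinsic_effective_Friedmann}), which is exactly what you carry out in detail. Your additional remarks on the vanishing of $\tilde{\CL}_\CD^b$ and the simplification of $\tilde{\CP}_\CD^b$ under $N=\gamma$, as well as the caveat on the spacelike character of the constant-$\tau$ hypersurfaces, are all consistent with the paper's own discussion.
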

Note that in a Lagrangian picture the \textit{Dark Energy backreaction} $\tilde\CL_\CD^b$ vanishes as a consequence of $N=\gamma$. 

Such a picture also allows for a rewriting of the scalar (morphon) field analogy of \textit{Corollary}~\ref{cor:intrinsic_effective_morphon} under a simplified form.

\begin{corollary}[Compact form of Lagrangian cosmologies with morphon]\\
\label{cor:intrinsic_Lagrange_morphon}
\vspace{-7pt}

The effective Friedmann equations in a Lagrangian picture \eqref{eq:friedmann_lagrange1}--\eqref{eq:friedmann_lagrange3} can be interpreted as being sourced by time-dependent morphon energy densities through reformulation of the backreaction terms in \eqref{effectivesources_lagrange1} and \eqref{effectivesources_lagrange2}:
\begin{align} \label{effectivesources_morphon_lagrange}
	\epsilon^b_{\rm eff}(t)  =  \baverage{\epsilon}(t) + \epsilon^{\Phi, b}_{\rm eff}(t) \quad & ; \quad 
	\epsilon^{\Phi, b}_{\rm eff} (t) := \frac{1}{2}{\dot\Phi}_\CD^2 + U^b_{\rm eff}(\Phi_\CD) \; ; \\
	p^b_{\rm eff}(t) = \baverage{p}(t) + p^{\Phi, b}_{\rm eff}(t) \quad & ; \quad 
	p^{\Phi, b}_{\rm eff} (t) := \frac{1}{2}{\dot\Phi}_\CD^2 - U^b_{\rm eff}(\Phi_\CD)\; , 
\end{align}
with the simplified morphonic dictionary:
\begin{align}
\label{morphondictionary_lagrange}
24\pi G \ {\dot\Phi}_\CD^2 & {} := -3 \CQ_\CD^b -2 \CP_\CD^b - \CW_\CD^b \ ; \\ 
24\pi G \ U^b_{\rm eff}(\Phi_\CD) & {} := \, \CP_\CD^b - \CW_\CD^b  \ ,\quad\qquad\qquad 
\end{align}
yielding the following \textit{`virial condition'} on morphon energy densities:
\beq
\label{virial_lagrange}
0\,=\,{\dot\Phi}_\CD^2 -  U^b_{\rm eff}(\Phi_\CD) \,=\, - \frac{1}{8 \pi G} \left(
\CQ_\CD^b + \CP_\CD^b \right)  \;.
\eeq
The conservation law \eqref{eq:friedmann_lagrange3} couples the conservation law for the material sources \eqref{conservation_lagrange} to an effective Klein-Gordon operator applied to $\Phi_\CD(t)$:
\beq
{\dot\Phi}_\CD \left({\ddot\Phi}_\CD  + 3 H_\CD^b {\dot\Phi}_\CD + \frac{\partial U^b_{\rm eff}(\Phi_\CD)}{\partial \Phi_\CD}\right) + \mathfrak{S}_\CD^b (t) \,=\,0 \; ,
\eeq
with $\mathfrak{S}_\CD^b$ here reduced to 
\beq
\mathfrak{S}_\CD^b =  \baveragedot{\epsilon}  + \, 3 H_\CD^b \baverage{\epsilon + p} = \baverage{\Theta} \baverage{p} - \baverage{\Theta \, p} - \baverage{\nabla_\mu q^\mu + q^\mu a_\mu + \pi^{\mu\nu} \sigma_{\mu\nu}} \ .\nonumber
\eeq
\end{corollary}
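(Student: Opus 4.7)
The strategy is purely algebraic: the Friedmannian form of the averaged equations has already been established in \textit{Corollary}~\ref{cor:intrinsic_Lagrange_Friedmann}, so all that remains is to verify that the proposed dictionary \eqref{morphondictionary_lagrange} is compatible with the split of the effective energy density and pressure into a matter and a scalar-field part, and then to re-express the effective conservation equation \eqref{eq:friedmann_lagrange3} in terms of the morphon. No new dynamical input is needed; the content of the claim is that the backreaction and curvature-deviation terms admit a consistent scalar-field parametrization.

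First I would verify the dictionary by computing $\epsilon^{\Phi, b}_{\rm eff} + p^{\Phi, b}_{\rm eff}$ and $\epsilon^{\Phi, b}_{\rm eff} - p^{\Phi, b}_{\rm eff}$ from the proposed definitions, obtaining $\dot\Phi_\CD^2$ and $2 U^b_{\rm eff}(\Phi_\CD)$ respectively. These two combinations must coincide with the non-matter contributions to $\epsilon^b_{\rm eff} + p^b_{\rm eff}$ and $\epsilon^b_{\rm eff} - p^b_{\rm eff}$ read off from \eqref{effectivesources_lagrange1}--\eqref{effectivesources_lagrange2}. A direct computation yields
\begin{equation*}
\epsilon^b_{\rm eff} + p^b_{\rm eff} - \baverage{\epsilon} - \baverage{p} = -\frac{1}{8\pi G}\CQ^b_\CD - \frac{1}{24\pi G}\CW^b_\CD - \frac{1}{12\pi G}\CP^b_\CD \,,
\end{equation*}
and an analogous expression for the difference, which match $\dot\Phi_\CD^2$ and $2 U^b_{\rm eff}$ through \eqref{morphondictionary_lagrange}. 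The virial-like identity \eqref{virial_lagrange} then follows by subtracting the two dictionary relations, which eliminates $\CW^b_\CD$ and leaves exactly $\dot\Phi_\CD^2 - U^b_{\rm eff} = -(\CQ^b_\CD + \CP^b_\CD)/(8\pi G)$.

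Next I would derive the Klein-Gordon-like equation by inserting the morphon split $\epsilon^b_{\rm eff} = \baverage{\epsilon} + \epsilon^{\Phi, b}_{\rm eff}$, $p^b_{\rm eff} = \baverage{p} + p^{\Phi, b}_{\rm eff}$ into the effective conservation law \eqref{eq:friedmann_lagrange3}. Grouping the matter terms on one side yields, by definition of $\mathfrak{S}_\CD^b$ in \eqref{conservation_lagrange},
\begin{equation*}
\dot\epsilon^{\Phi, b}_{\rm eff} + 3 H_\CD^b \left( \epsilon^{\Phi, b}_{\rm eff} + p^{\Phi, b}_{\rm eff} \right) + \mathfrak{S}_\CD^b = 0 \,.
\end{equation*}
Using $\dot\epsilon^{\Phi, b}_{\rm eff} = \dot\Phi_\CD \ddot\Phi_\CD + \dot\Phi_\CD \, \partial U^b_{\rm eff}/\partial \Phi_\CD$ and $\epsilon^{\Phi, b}_{\rm eff} + p^{\Phi, b}_{\rm eff} = \dot\Phi_\CD^2$, the left-hand side factorizes as $\dot\Phi_\CD$ times the advertised Klein-Gordon operator, plus $\mathfrak{S}_\CD^b$, reproducing the final displayed equation.

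The one step that deserves care, rather than being a real obstacle, is the conceptual point that $\mathfrak{S}_\CD^b$ is \emph{not} identically zero for a general fluid: only in the dust case does the matter sector conserve its energy density independently, decoupling the morphon from the source and leaving a clean Klein-Gordon equation. For general fluids (with pressure, heat flux, or anisotropic stress) the morphon evolution is sourced by $\mathfrak{S}_\CD^b$, which one obtains directly from the Lagrangian form of the averaged energy balance derived in \textit{Corollary}~\ref{cor:intrinsic_Lagrange_averages}. This coupling is the genuine content of the last equation and should be emphasized as the main qualitative novelty with respect to the dust morphon of \cite{morphon}.
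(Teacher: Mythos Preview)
Your proposal is correct and follows precisely the approach implicit in the paper: the corollary is stated without a separate proof because it is the direct specialization of \textit{Corollary}~\ref{cor:intrinsic_effective_morphon} to the Lagrangian picture (where $N=\gamma$, hence $\tilde\CL^b_\CD=0$ and all tildes drop), and the verification amounts exactly to the algebraic checks you outline --- matching the sum and difference of the effective sources \eqref{effectivesources_lagrange1}--\eqref{effectivesources_lagrange2} to $\dot\Phi_\CD^2$ and $2U^b_{\rm eff}$, then inserting the split into \eqref{eq:friedmann_lagrange3} to factor out the Klein-Gordon operator. Your closing remark on the non-vanishing of $\mathfrak{S}^b_\CD$ for non-dust fluids is also the point the paper emphasizes immediately after the corollary.
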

\end{corgroup}

As for the more general situation considered in \textit{Corollary}~\ref{cor:intrinsic_effective_morphon}, the above $\mathfrak{S}_\CD^b (t)$ is still not identically zero in general for a non-dust fluid. As an example it was pointed out in Paper~II, as a consequence of this property, that the spatially averaged inhomogeneous (irrotational) radiation fluid does not follow the volume expansion law of the homogeneous-isotropic radiation-dominated cosmos.

Useful characteristics for cosmological models such as dimensionless effective cosmological `parameters' or effective state finders can also be defined from the above compact forms along the lines explained in \cite[sect. 2.4]{buchert:review}.

\section{Discussion and concluding remarks}
\label{sec:discussion_intr}

In this article we have introduced the \textit{fluid-extrinsic} (section~\ref{sec:extrinsic}) and \textit{fluid-intrinsic} (section~\ref{sec:intrinsic}) averaging procedures. We have applied them to different scalar parts of 
the Einstein equations in the setting of an arbitrary spacelike $3+1$ splitting of spacetime, and for a general fluid with a congruence tilted with respect to the normal to the hypersurfaces. 
All past generalization proposals of Papers~I and II to arbitrary spatial foliations found in the literature have focussed on fluid-extrinsic formalisms. We have compared in detail these investigations to our fluid-extrinsic averaging approach, which, unlike the previous proposals, is formulated for comoving, i.e. rest mass--preserving, domains of averaging.
The fluid-intrinsic approach, on the other hand, forms our new proposal of constructing effective cosmological equations. We have already briefly introduced this formalism including a discussion on foliation dependence of cosmological backreaction in \cite{foliationsletter}, and have put a manifestly $4-$covariant reformulation of the averaged equations into perspective in \cite{asta1}. In the present work we provide all the details that will be necessary for the concrete application of the fluid-intrinsic framework to various fluid equations of state and cosmological models, as well as to the analysis of relativistic numerical simulations.

% ###################### %
% ##### Subsection ##### %

\subsection{Recovering the results of Paper I and Paper II}
\label{subsec:recovery}

The fluid-intrinsic scalar averaging generalizes, in letter and spirit, the original effective cosmological equations for flow-orthogonal foliations of spacetime, as given in Paper I \cite{buchert:av_dust} for irrotational dust (case I below) and Paper II \cite{buchert:av_pf} for irrotational perfect fluids (case II-C below). We show below how these results can be recovered as special cases of the present general approach. We have also highlighted how the fluid-intrinsic averaging is adapted to a Lagrangian picture. We comment below, in addition, on its application within this picture to irrotational perfect fluids with pressure as a new volume-averaged description of such fluids (case II-L below).
\begin{itemize}
\item[I:] \textit{Irrotational, non-tilted dust in Lagrangian form.} We set $\omega = 0$, $p=0$, $q^{\mu}=0$, $\pi_{\mu\nu}=0$, and $\epsilon = \varrho$. We assume the global existence of a fluid-orthogonal foliation, the local existence of which is guaranteed by the irrotationality assumption and Frobenius' theorem, and we will work within this foliation choice (that is, we set $\bm n = \bm u$). The acceleration expression \eqref{eq:mom_cl} shows that the dust fluid is geodesic, $a_\mu = 0$. This implies that $\CA = 0$ and, from the application of \eqref{eq:eulerian_acc} to this fluid-orthogonal and geodesic case, that the lapse is a pure function of time. The time coordinate $t$ can thus be chosen such that the lapse reduces to $N=1$. This $t$ is then a proper time $\tau$ for the fluid (see \eqref{eq:def_tau}, with here $\gamma=1$), and we can apply the Lagrangian form of the averaged equations, \textit{Corollary}~\ref{cor:intrinsic_Lagrange_averages}.
Moreover, $a^b_\CD = a_\CD$ and extrinsic and intrinsic averaging operators become equivalent, $\baverage{\cdots} = \average{\cdots}$. The index $b$ becomes redundant for all expressions, and we directly recover the cosmological equations of Paper~I.
\item[II-C:] \textit{Irrotational, non-tilted perfect fluids in comoving form.} In Paper~II the fluid-orthogonal choice ($\gamma = 1$ with a non-constant lapse function $N$) was adopted. This does not correspond to a Lagrangian picture, \textit{i.e.}, $\tau$ does not reduce to the coordinate time $t$ (see \eqref{eq:def_tau}). To recover the same form we have to use the equations of 
\textit{Theorem}~\ref{ths:av_intrinsic}, in which we can select a fluid-orthogonal foliation assuming an irrotational perfect fluid content. We can then omit the index $b$ for the same reasons as in case I.
Setting  $\gamma = 1$, $\omega = 0$, $q^{\mu}=0$ and $\pi_{\mu\nu}=0$ in these equations, we recover the equations of Paper~II (with the additional contribution of $\cc$). As mentioned in the same context for our extrinsic approach in section~\ref{sec:extrinsic}, these equations are also independent of the shift, and thus they are obtained without the need to assume a vanishing shift as in Paper~II.
\item[II-L:] \textit{Irrotational perfect fluids in Lagrangian form.} We can set $\omega = 0$, $q^{\mu}=0$, $\pi_{\mu\nu}=0$ and consider a fluid proper time foliation, with $t=\tau$ (hence the equations of \textit{Corollary}~\ref{cor:intrinsic_Lagrange_averages} can be used). For nonvanishing pressure gradients in the fluid local rest frames, this foliation is not fluid-orthogonal, $\gamma > 1$, and we get different, simpler averaged equations with respect to Paper~II, with an intrinsic averaging operator that is distinct from the extrinsic one.
\end{itemize}

% ###################### %
% ##### Subsection ##### %

\subsection{Recovering the Newtonian form of the effective cosmological equations}
\label{subsec:newton}

The compact form of the cosmological equations of \textit{Corollary}~\ref{cor:intrinsic_Lagrange_all}
is directly reminiscent of, and can be transformed into, the corresponding equations that arise in Newtonian Cosmology
\cite{buchert:av_Newton}.
Such a transformation does not require a `Newtonian limit' (referring to a limit where the inverse of a causality constant goes to zero, $1/c\rightarrow 0$; for a detailed discussion of this notion, see \cite{newtonlimit}). It rather requires a restriction of the fluid deformations to integrable fluid deformations, named \textit{Minkowski Restriction} (henceforth MR), defined and executed for various variables in a fluid-orthogonal framework in the series of papers \cite{rza1,rza2,rza3,rza4,rza5}, see especially \cite{rza3,rza4}. 

The cosmological equations presented in this paper do not assume any particular spatial metric or any specific explicit form for the twice-covariant fluid-orthogonal projecting tensor $\mathbf{b} = b_{\mu \nu} \, \mathbf{d}x^\mu \otimes \mathbf{d}x^\nu$: they only depend functionally on these tensors. 
As a step toward extending the definition of the MR to the current general framework, we first consider a Lagrangian picture. We thus in particular set a $3+1$ foliation by constant fluid proper time hypersurfaces. For this foliation, we then write in all generality the restriction $\mathbf{b}_{|\mathrm{\Sigma}}$ of $\mathbf{b}$ to the constant-$\tau$ hypersurfaces\footnote{%
More precisely, the notation $\mathbf{b}_{|\mathrm{\Sigma}}$ denotes the restriction of the type-(0,2) tensor $\mathbf{b}$ to $T\mathrm{\Sigma} \otimes T\mathrm{\Sigma}$ where $T\mathrm{\Sigma}$ is the tangent space to the hypersurface; \textit{i.e.}, it corresponds to the type-(0,2) tensor on the submanifold $\mathrm{\Sigma}$ obtained from the evaluation of $\mathbf{b}$ on vectors of $T\mathrm{\Sigma}$ only. We will use the same notation in Appendix~\ref{app:volumes} for $3-$forms, meaning their restriction to $T\mathrm{\Sigma} \otimes T\mathrm{\Sigma} \otimes T\mathrm{\Sigma}$ in this case.
} $\mathrm{\Sigma}$ in terms of three Cartan coframes $\boldsymbol{\eta}^a$ (basis of $1-$form fields on the hypersurfaces): 
\beq
\mathbf{b}_{|\mathrm{\Sigma}} = \delta_{ab} \, \boldsymbol{\eta}^a \otimes \boldsymbol{\eta}^b \ , 
\eeq
where the labels $a,b=1,2,3$ count the spatial coframes. 

The MR is an assumption, within the fluid proper time foliation choice adopted here, of \textit{integrability} of the spatial coframes within the slices. It thus restricts these general spatial $1-$forms to exact forms on the spatial slices: $\boldsymbol{\eta}^a \mapsto \spatiald f^a$, where $\mapsto$ denotes the application of the MR, and $\spatiald$ is the exterior derivative on the hypersurfaces. The three scalar functions $f^a$, considered as functions of space parametrized by $\tau$, define Eulerian spatial coordinates $x^a = f^a (X^i; \, \tau)$, where $X^i$ are the comoving (Lagrangian) spatial coordinates. In particular, this condition of integrability requires the global existence of coordinates on the spatial slices, and consequently imposes that the hypersurfaces are diffeomorphic to the Euclidean space $\mathbb{R}^3$. The set of functions $f^a$ moreover defines for each slice the transition function---a diffeomorphism---between the Lagrangian and Eulerian maps.

In the exact spatial basis $\{ \spatiald X^i\}$, associated with the Lagrangian coordinates $X^i$, the coefficients of the Cartan coframes reduce to the Newtonian deformation matrix ($\partial f^a / \partial X^i$) in Lagrangian coordinates, 
\begin{equation}
\boldsymbol{\eta}^a = \eta^a_{\ i} \, \spatiald X^i \ \  \mapsto \;\; \spatiald f^a = \frac{\partial f^a}{\partial X^i }\, \spatiald X^i \ .
\end{equation}
The restriction to the hypersurfaces $\mathbf{b}_{|\mathrm{\Sigma}}$ of the fluid-orthogonal projector reduces to an Euclidean metric within the slices, as can be seen through the coordinate transformation from Lagrangian to Eulerian coordinates, 
\begin{equation}
\mathbf{b}_{|\mathrm{\Sigma}} = b_{ij}\, \spatiald X^i \otimes \spatiald X^j \mapsto \delta_{ab} \, \frac{\partial f^a}{\partial X^i} \frac{\partial f^b}{\partial X^j} \, \spatiald X^i \otimes \spatiald X^j = \delta_{ab}\,
\spatiald x^a \otimes \spatiald x^b \; .
\end{equation}
For a dust fluid, the $3+1$ Einstein equations then reduce to the Newtonian equations in Lagrangian form (for an explicit demonstration in the case of an irrotational dust matter model, see \cite{rza1} for the Einstein equations and \cite{buchert:lagrange,ehlersbuchert} for the Newtonian equations in Lagrangian form, admitting non-zero vorticity; see also the related discussions in \cite[sect.7]{buchert:focus} and \cite{newtonlimit}). 
In presence of acceleration due to a non-dust energy-momentum tensor, the induced special-relativistic effects---that can be interpreted as local time dilations, in addition to the direct contribution of pressure to the energy sources---will maintain differences with respect to the Newtonian dynamics. A nonrelativistic limit (in the sense of special relativity) $c \rightarrow \infty$ would then be further required to fully recover the Newtonian equations.

We illustrate this transformation to the Newtonian equations for the case of a \textit{rotational dust matter model}. We thus consider a (geodesic) perfect fluid with zero pressure, but with \textit{a priori} $\bm \omega \neq {\bf 0}$, and we adopt a Lagrangian description. The effective cosmological equations of \textit{Corollary}~\ref{cor:intrinsic_Lagrange_averages} thus hold and reduce to the following for this case:
\begin{align}
\label{eq:avgRaych_prenewton}
 		3 \, \frac{\ddot a^b_\CD}{a^b_\CD} & {} =  \, 
		- 4 \pi G \baverage{\varrho} + \cc + \CQ^b_\CD \; ; \\
\label{eq:avgHamilton_prenewton}
  3 \, \big(H_\CD^b \big)^2  & {} =  \,
 		8 \pi G \baverage{\varrho} - 3 \frac{k_{\initial\CD}}{\left( a_\CD^b \right)^2} + \cc -\frac{1}{2} \QD^b -\frac{1}{2} \WD^b \; ,
\end{align}
with $H_\CD^b = \dot a_\CD^b / a_\CD^b$, and $\WD^b = \baverage{\SR} - 6 \, k_{\initial\CD} / \left(a_\CD^b \right)^2$. For dust, the energy density $\epsilon$ coincides with the rest mass density $\varrho$, and the averaged energy conservation law \eqref{conservation_lagrange} reduces to the continuity equation for the average density:
\beq
\label{eq:avgMasscons_prenewton}
 \baveragedot{\varrho}  + \, 3 H_\CD^b \baverage{\varrho} \,=\,0 \; .
\eeq
Applying the MR to this situation, the fluid proper volume measure on the hypersurfaces reduces to an Euclidean volume element, with $\int_{\CD_{\bm X}} \psi \, \sqrt{\det(b_{ij})} \, \mathrm{d}^3 X = \int_{\CD_{\bm x}} \psi \, \mathrm{d}^3 x$ for any scalar $\psi$. Hence, the fluid volume $\VDb$ and effective scale factor $a_\CD^b$ reduce to Euclidean-space expressions in the Eulerian coordinates $x^a$, and the intrinsic averaging operator reduces to an Euclidean volume average $\Eaverage{\cdot}$. The intrinsic kinematical backreaction reduces to
\begin{equation}
 \label{eq:kin_back_prenewton}
 \QD^b = \CQ_\CD^E :=  \frac{2}{3} \Eaverage{\left(\Theta - \Eaverage{\Theta} \right)^2 }
			- 2 \Eaverage{\sigma^2} + 2 \Eaverage{\omega^2} \ , 
\end{equation}
where $\Theta$, $\sigma$ and $\omega$ are the scalar kinematic variables of the now integrable expansion and vorticity tensors, \textit{i.e.}, they are associated with the kinematic invariants of the velocity gradient field $\spatiald  \dot f^a = (\partial \dot f^a / \partial x^b) \; \spatiald x^b$ with coefficients expressed in the Eulerian spatial coordinate basis $\{\spatiald x^a\}$.

In Newtonian theory, $k_{\initial\CD}$ is solely a constant of integration (not associated with a constant-curvature term), and the equivalent of the `curvature deviation' term $\CW^b_\CD$ also loses its relation to a Riemannian curvature.
The curvature deviation term appearing in equation \eqref{eq:avgHamilton_prenewton} above can as well be defined without an explicit averaged curvature term. The integrability condition \eqref{eq:int_cond_lagrange} reduces for the dust case to
\beq
 \dot{\CQ}^b_\CD + 6 H_\CD^b \, \CQ^b_\CD + \dot{\CW}^b_\CD
		+ 2 H_\CD^b \, \CW^b_\CD \,=\,0 \ ,
\eeq
which can also be directly deduced from \eqref{eq:avgRaych_prenewton}--\eqref{eq:avgMasscons_prenewton}. Within the MR, noting respectively $\WD^b$ and $a_\CD^b$ as $\WD^E$ and $a_\CD^E$ in this case, this can be integrated to yield (\textit{cf.} \cite[sect.2.3.1]{buchert:review}, with arbitrary initial data being included hereunder for both backreaction variables):
\beq
\label{eq:integrabilityintegral}
\CW^E_\CD (t)  \,= - \CQ^E_\CD (t) 
+ \frac{1}{a_\CD^E \big( t \big)^2} \left( \CW^E_\CD (t_{\mathbf{i}})+\CQ^E_\CD (t_{\mathbf{i}}) - 4 \int_{t_\mathbf{i}}^t   \! \CQ^E_\CD(t') \, a_\CD^E(t') \, \dot{a}_\CD^E(t') \, {\mathrm d}t' \right) \; ,
\eeq
since $a^E_\CD (\initial t) = {}  a^b_\CD (\initial t) = {} 1$. This gives an expression
for $\WD^E$ in terms of $\QD^E$ and $a_\CD^E$, and their time-history,\footnote{%
The integration constant $\CW_\CD^E(t_{\mathbf{i}}) + \CQ_\CD^E(t_{\mathbf{i}})$ is fully determined
by the---arbitrary---choice of the constant $k_{\initial\CD}$ and by the initial conditions on $H_\CD^b$ and $\langle \varrho \rangle^b_\CD$ \textit{via} the averaged energy constraint equation \eqref{eq:avgHamilton_prenewton} at $t = \initial t$. With the choice $k_{\initial{\CD}} = 0$, the initial data of a reference Einstein-de Sitter FLRW model would, for instance, correspond to $\CW_\CD^E(t_{\mathbf{i}}) + \CQ_\CD^E(t_{\mathbf{i}})=0$.
\label{fn:int_constants}
} which are all defined from Euclidean-space volume integration. The above formula can thus equally be used to define the term corresponding to $\WD^E$ in a Newtonian setting; with this substitution, the system of averaged equations \eqref{eq:avgRaych_prenewton}--\eqref{eq:avgMasscons_prenewton} within the MR reduces to the Newtonian equations\footnote{%
In Newtonian cosmology we have to abandon the background-free character of general relativity. In order to obtain unique solutions, we have to introduce a background in terms of a linear reference velocity field, $V_i = H_{ij} \, x_j$, with homogeneous expansion, shear and vorticity, $H_{ij} := (\Theta_H (t)/3) \, \delta_{ij} + \Sigma_{ij}(t) + \Omega_{ij}(t)$, with $\Sigma_{ij}$ symmetric and traceless and $\Omega_{ij}$ antisymmetric. Deviations thereof are to be bound to a $3-$torus topology. As a result of the integrability of the Newtonian variables on flat space sections, the kinematical backreaction (which does not depend on the background variables) can be written in terms of full divergences of deviation vector fields. Hence, this backreaction has to vanish on the boundary-free $3-$torus, see \cite{buchert:av_Newton} and the recent discussion \cite{buchert:av_Newton_Rev}.%
}%
(\textit{cf}. \cite{buchert:av_Newton}). The spatial slices chosen here, where all fluid elements have experienced the same proper time $\tau$ elapsed from a reference initial hypersurface, play the role of Newton's absolute space, with associated Eulerian coordinates $x^a$, and $\tau$ plays the role of Newton's absolute time.

Note that these results for dust, using the intrinsic average operator of section~\ref{sec:intrinsic}, do not require the additional Newtonian limit assumption $c \rightarrow \infty$. With this additional assumption, the special-relativistic effects of differences of clock rates would become negligible and one would be able to neglect the local tilt (although not its spatial variations) between the fluid rest frames and the constant-$\tau$ hypersurfaces considered, $\gamma \simeq 1$. The extrinsic averages of section~\ref{sec:extrinsic} would then coincide with the intrinsic averages, with $\sqrt{h} \simeq \sqrt{b}$, and thus they would also become Euclidean volume averages.

% ###################### %
% ##### Subsection ##### %

\subsection{Summarizing remarks on the fluid-instrinsic and Lagrangian approaches}

% ------------------- %
% ## Subsubsection ## %

\subsubsection{Interest of the fluid-intrinsic averaged equations}

\textit{Theorem}~\ref{ths:av_intrinsic} shows that, in the fluid-intrinsic approach, the effective averaged system is not affected by contributions from the tilt, apart from the local time rate factor $N/\gamma$, and the \textit{stress-energy backreaction} vanishes.
This is different from the previous extension proposals of the averaging approach of Papers~I and II to general foliations found in the literature and in our section~\ref{sec:extrinsic} above (extrinsic approach), which we have put into perspective in subsection \ref{subsec:discussion}. Tilt effects may, however, be important for the observational interpretation, since the observer may be tilted with respect to the cosmic fluid. For effective cosmologies we advocate the fluid-intrinsic point of view, focusing on the effective evolution of the \textit{model universe} and eliminating wherever possible observer-specific issues.
It is then an entirely different question, well-separated from the model universe, how the variables of these cosmologies are related to observables. This would require the study of light cone averages, not considered in this work (however, see \textit{e.g.} \cite{lightconeav1,lightconeav2} for proposals of light cone averaging formalisms).

The fluid-intrinsic approach focuses on the proper volume of the fluid as a local volume measure. As seen above, this allows for a reduced dependence of the volume, averages and backreaction terms on the choice of spatial foliation, and yields simple and compact expressions for the effective scale factor evolution equations in terms of the foliation-independent intrinsic kinematic and dynamical variables of the fluid. It also allows for a direct, transparent relation between the conserved total fluid rest mass within the domain and the averaged rest mass density. It can be viewed as an alternative extension of the formalism of Papers~I and II, reducing to these results for an irrotational fluid in the fluid-orthogonal foliation, as does the extrinsic approach, but preserving in all foliations the relation between the volume measure and the local fluid rest frames.
We have also shown that this approach is especially suited for a Lagrangian picture; we shall now comment further on this combined description of averaged dynamics. 

% ------------------- %
% ## Subsubsection ## %

\subsubsection{Interest and limitations of the Lagrangian picture choice}
\label{subsubsec:lagrange_interest}

\textit{Corollary}~\ref{cor:intrinsic_Lagrange_averages} shows, in addition to \textit{Theorem}~\ref{ths:av_intrinsic}, that choosing a constant fluid proper time foliation parametrized by $t=\tau$ (implying $N=\gamma$), makes the \textit{Dark Energy backreaction} $\tilde \CL_\CD^b$ vanish and removes the need to account for a difference of time rates in the dynamical backreaction and in the rescaling of all variables to be averaged. 
Despite these simplifications, we emphasize that the corresponding set of 
effective cosmological equations still holds for a general fluid.
The difference between \textit{Theorem}~\ref{ths:av_intrinsic} (presenting the effective cosmological equations for general fluids \textit{and} general foliations) and \textit{Corollary}~\ref{cor:intrinsic_Lagrange_averages} (applying them specifically to a fluid proper time foliation) may serve for a discussion of the robustness of the averaged equations with respect to the foliation choice (see \cite{foliationsletter}).

We here summarize the important features resulting from the use of the Lagrangian picture in combination with the fluid-intrinsic averaging framework:
\begin{itemize}
 \item[(i)] It links the foliation itself to the fluid (in addition to  already making such a link for the domain propagation, volume measure and local variables to be averaged, due to the fluid-intrinsic formalism),
	in a way alternative to the fluid-orthogonal choice, but in contrast to the latter it comes with a unique 
	time-normalization and holds for any fluid;
 \item[(ii)] It allows for an especially simple and compact form of the averaged equations, 
	removing the need for rescalings and extra terms due to different clock rates;
 \item[(iii)] The corresponding choice of time is formally unique up to a constant along each flow line (see, however, our remarks on effective times below), and it enjoys a clear physical interpretation. The associated time derivatives, in particular the scale factor expansion and acceleration rates, are well-defined as proper-time rates for the fluid elements;
 \item[(iv)] It directly reduces to the usual (fluid-orthogonal and Lagrangian) approach for irrotational inhomogeneous dust
	and for homogeneous perfect fluids (FLRW);
 \item[(v)] It allows for a recovery of the Newtonian averaged equations under the assumption of the reduction of the fluid-orthogonal projector to an Euclidean metric within constant fluid proper time spatial slices (Minkowski Restriction, as defined in section~\ref{subsec:newton} above), for negligible $4-$acceleration and pressure;
 \item[(vi)] It also allows for simple, transparent, Newtonian-like formulas for the local kinematic variables and acceleration (especially the components of the tensor variables).
\end{itemize}

It is important to note that in the case of several fluid components, the advantages of a Lagrangian description can in general only be preserved for one fluid, from which the proper time will be defined, unless a Lagrangian description is considered applicable to the common motion of a fluid mixture (see also the related discussion on the multi-fluid case at the end of section~\ref{subsubsec:literature_transport}). A further, already mentioned, important potential limitation of the proper time foliation choice has to be checked in individual applications: the hypersurfaces built in such a foliation from a given, spacelike, initial hypersurface may, over time, become strongly tilted with respect to the fluid, and might even not all remain everywhere spacelike. Along a given flow line, strong acceleration and/or vorticity may induce an infinite tilt after a finite time. If this occurs, the averaging formalism becomes ill-defined in the spacetime regions where the hypersurfaces are null or timelike and lose their interpretation as `spatial' slices.

% ------------------- %
% ## Subsubsection ## %
\subsubsection{Is there interest to go beyond this work? --- an outlook}
\label{subsubsec:beyond}

The results of this work are \textit{general} in various respects, culminating in compact forms of effective cosmologies, especially within a Lagrangian description. However, a few issues remain and are worth addressing. We highlight some of them in what follows and suggest some procedures towards solutions or alternative constructions.

\paragraph{The issue of closure.}

The presented sets of averaged equations and compact cosmological equations are not closed. 
This known issue is already obvious from the approach of performing averages of only the scalar parts of the Einstein equations.
It is also expected given that a balance equation on averages will not allow to reconstruct the local inhomogeneous metric
(similarly to, in Newtonian contexts, the virial relations not allowing for the reconstruction of the orbits in phase space).
We do not enter the issue of averaging or smoothing tensor variables here, but we emphasize that even averaging further scalar equations would result in a hierarchy of equations that would not close
(similar to the hierarchy of moment equations in kinetic theory).
As in the standard Friedmannian framework, where closure conditions have to be imposed in terms of \textit{equations of state} determining fluid properties,
closure conditions may here be represented as \textit{effective equations of state} in the effective Friedmannian and Lagrangian forms.
Such effective relations encode inhomogeneous properties and evolution details of the fluid and, hence, they are dynamical and not simply derivable from thermodynamical properties.
Closure conditions can be studied in terms of exact scaling solutions \cite{buchert:review,morphon,roy:instability}, global assumptions on model universes \cite{buchert:darkenergy,buchert:static}, exact solutions of the Einstein equations \cite{krazinski,bolejko,sussman1,sussman2,korzynski_embedding,sussman:szekeres,zimdahl:LTB,bolejko:szekeres,sussman3,cliftonsussman} (see also \cite[sect.7]{buchert:focus}), or generic but approximate models for inhomogeneities. The latter may be based on relativistic Lagrangian perturbation theory, \textit{e.g.} \cite{rza2}; on specifications of potentials in the morphon analogy of backreaction acting as quintessence, inflation, or dark matter \cite{morphon,inflation,quentin:darkmatter}; or on other assumptions of equations of state or models for bulk viscosity \cite{chaplygin,zimdahl:viscosity}. Closure may also be approached by topological considerations, as discussed in \cite{buchert:av_dust} and implemented for $2+1$ spacetimes \cite{magni,GBC}, and 
for $3+1$ spacetimes leading to evolution equations for backreaction \cite{GBC}. With this latter we enter a deeper level of the hierarchy of scalar-averaged equations. The above closure methods can also be applied on this deeper level  and they may lead to more refined descriptions of the average dynamics.

\vspace{-5pt}
\paragraph{Remarks on robustness of effective cosmologies.}

We may ask two questions regarding the consequences of the foliation choice on the averaged dynamical equations and on the backreaction terms. One question is, how much a change of foliation quantitatively affects the result on the magnitudes of the backreaction terms. This question is considered in an accompanying Letter \cite{foliationsletter} and will be addressed more explicitly in a forthcoming work \cite{asta2}. 
A second question is, whether a foliation choice can make individual backreaction terms disappear completely. 
We have demonstrated in this paper that the explicit foliation dependence in the intrinsic-averaged equations (and associated backreaction terms) only lies in contributions from the local ratio of time rates $N/\gamma$ between the coordinate time and the proper time of the fluid (the threading lapse, \textit{cf.} Appendix~\ref{app:threading}). We have also seen that, in a Lagrangian picture, the \textit{Dark Energy backreaction} disappears, as well as the explicit dependence of the final equations on the residual freedom in the foliation choice. But, we do not expect a generic situation to exist where the remaining \textit{kinematical} and \textit{dynamical backreactions} would simultaneously disappear.

\vspace{-5pt}
\paragraph{Effective times and the coarse-graining of geometry.}

Looking at the presented effective equations it is evident that spatial properties of the fluid are averaged, while the involved times are not. From a spacetime perspective it is to be expected that a domain may feature its own `averaged time' as an effective time. Note that the choice of a Lagrangian picture allows one to synchronize fluid elements within a domain, but this does not provide an answer to the question of whether an average time would make more physical sense, \textit{e.g.} in terms of an averaged time rates ratio $N/\gamma$. Such an effective time would be expected to be different for domains located in over-- or underdense regions due to the different time dilations; these time differences could play a major role in inhomogeneity-induced dynamics and biases in the interpretation of observations, as suggested in \cite{wiltshire:clocks,wiltshire:dust,wiltshire:lecture}. 

This is connected to the issue of the coarse-graining of geometry: averaging provides integral properties of variables in the inhomogeneous geometry, but the geometry itself could be coarse-grained too to provide an effective geometry. A possible implementation has been discussed in terms of Ricci-flow smoothing of cosmological hypersurfaces and the resulting `dressing' of scalar variables as they are interpreted on the smoothed-out geometry \cite{buchertcarfora:dressing}. 

Finally, a related important question is on which scale the Einstein equations hold \cite{wiltshire:dust,coleywiltshire}.
The assumption behind the averaging operation of the present work is that the Einstein equations hold on the scale where the fluid approximation is a sensible model for the stress-energy tensor, linked to the geometry through the Einstein equations on that scale. Consequently, the averaged dynamics does not obey the Einstein equations, and this is expected too for a coarse-graining of the geometry. An open question is whether spatial averaging of the locally evolved Einstein flow would commute with the dynamics of the spatially averaged variables that include nonlocal terms, \textit{i.e.}, all orders of the correlations of locally evolved variables. A prominent example is the nonlocal kinematical backreaction term.
Note that the assumption of commutativity is made whenever it is assumed (in simulations or exact solutions) that taking the spatial average of a $3+1$ solution of the Einstein equations on the fluid elements' scale provides the correct answer for the average dynamics on a given regional domain.

%\vspace{-5pt}
\paragraph{Statistical hypersurfaces of averaging.}

The framework of Papers~I and II allows for averaging on fluid-comoving domains and on hypersurfaces formed by the fluid itself, but only in cases where the fluid is irrotational and non-tilted.
We proposed here a way of dealing with rotational and tilted cases by introducing a fluid-intrinsic averaging procedure that reduces to the standard volume average in the case of irrotational fluids in their fluid-orthogonal foliations.
We also suggested the fluid proper time foliations as a possible way of tying the average properties even further to the fluid.
Alternatively, we can take a statistical point of view by investigating hypersurfaces of `statistically averaged' geometries, a notion that has yet to be formalized.  
The example of vorticity may illustrate the physical idea behind such a concept:
if we view vorticity as arising on small scales only, while expecting that by going to larger scales vorticity becomes unimportant, we may wonder whether vorticity `averages out' (in a statistical sense) on some scale of averaging.
On scales larger than this, a potential flow is expected, and the fluid can be described as hypersurface-forming, while `averaged-out' scales may still imply a statistical `dressed' 
\cite{buchertcarfora:dressing} contribution from vorticity. 
The idea of viewing averaged equations as providing a definition of `statistical hypersurfaces of averaging' has been advocated (\textit{e.g.} \cite{rasanen:lightpropagation}) and, in Newtonian theory, assumptions such as homogeneous-isotropic turbulence have been advanced to construct statistical averages \cite{olsonsachs}. Statistical averaging is a natural further step to construct effective cosmologies, also since a single realization follows all details of the inhomogeneities, while a statistical average over realizations inherently leads to smoothing out those details. This could potentially provide a solution to both the averaging and the fitting problems on cosmological scales \cite{ellis:proceedings,ellis:fitting,clarkson:review}.

\begin{acknowledgements}
This work is part of a project that has received funding from the European Research Council (ERC) under the European Union's Horizon 2020 research and innovation programme (grant agreement ERC advanced grant 740021--ARTHUS, PI: T. Buchert). PM was supported by
a `sp\'ecifique Normalien' Ph.D. grant for most of this work, and is grateful to the Erwin Schr\"odinger Institute's 2017 Summer school `Between Geometry and Relativity' in Vienna for providing insights that were of use for this work.
XR acknowledges support from the Claude Leon Foundation when this work started in 2012 at the Cosmology and Gravity group of the University of Cape Town. XR thanks Chris Clarkson and Obinna Umeh for fruitful discussions and also acknowledges hospitality at the \textit{Centre de Recherche Astrophysique de Lyon}, and thanks TB for his invitations. We wish to thank Fosca Al Roumi for valuable comments on an earlier manuscript, L\'eo Brunswic, Mauro Carfora, Asta Heinesen, \'Etienne Jaupart, Jan Ostrowski, Roberto Sussman, Nezihe Uzun, Quentin Vigneron, and David Wiltshire for many valuable discussions, and Julien Larena, Syksy R\"as\"anen for useful comments on the final manuscript.
\end{acknowledgements}
%\clearpage
\appendix
\section{$3+1$ evolution equations along the congruence of the fluid}
\label{app:3+1_lag}

The evolution equations for $h_{ij}$ and $\CK^i_{\phantom{i} j}$ along the congruence of the fluid 
are obtained from expressions \eqref{eq:evol_h} and \eqref{eq:evol_K}, by relating the derivative 
$\partial_t |_{x^i}$ to ${\mathrm d}/{\mathrm d}t$ with the help of \eqref{eq:der_rel_1}. They read: 
\begin{align}
	\frac{\mathrm d}{{\mathrm d}t} h_{ij} = 
		& - 2 N \CK_{ij} + N_{i || j} + N_{j || i} + V^k \partial_k h_{ij} \ ; 
		\label{eq:evol_h_lag} \\
	\frac{\mathrm d}{{\mathrm d}t} \CK^i_{\phantom{i} j} = 
		& \; N \, \Big( \CR^i_{\phantom{i} j} + \CK \CK^i_{\phantom{i} j} 
			+ 4 \pi G \, \big[ \left( S - E \right) \delta^i_{\phantom{i} j} - 2\, S^i_{\phantom{i} j} \big]
			- \cc \,\delta^i_{\phantom{i} j} \Big) \nonumber \\ 
		& - N^{|| i}_{\ \; || i} + N^k \CK^i_{\phantom{i} j || k} + \CK^i_{\phantom{i} k} N^k_{\ || j}
			- \CK^k_{\phantom{k} j} N^i_{\ || k} + V^k \partial_k \CK^i_{\phantom{i} j} \ . 
		\label{eq:evol_K_lag}
\end{align}

\noindent
\textit{Comoving coordinate system.}
In the comoving picture, as described in section \ref{subsec:lagrange}, we have $\bm N = N \bm v$, or equivalently $\bm V = \bm 0$. 
Equations \eqref{eq:evol_h_lag} and \eqref{eq:evol_K_lag} hence read: 
\begin{align}
	\frac{\mathrm d}{{\mathrm d}t} h_{ij} = 
		& - 2 N \CK_{ij} + N ( v_{i || j} + v_{j || i})  +  v_{i} N_{ || j} + v_{j} N_{ || i}  \ ; 
		\label{eq:evol_h_lag_I} \\
	\frac{\mathrm d}{{\mathrm d}t} \CK^i_{\phantom{i} j} = 
		& \; N \, \Big( \CR^i_{\phantom{i} j} + \CK \CK^i_{\phantom{i} j} 
			+ 4 \pi G \, \big[ \left( S - E \right) \delta^i_{\phantom{i} j} - 2\, S^i_{\phantom{i} j} \big]
			- \cc \,\delta^i_{\phantom{i} j} \Big) - N^{|| i}_{\ \; || i}  \nonumber \\ 
		& + N v^k \CK^i_{\phantom{i} j || k} + N \CK^i_{\phantom{i} k} v^k_{\ || j} + \CK^i_{\phantom{i} k} v^k N_{|| j} 
			- N \CK^k_{\phantom{k} j} v^i_{\ || k} - \CK^k_{\phantom{k} j} v^i N_{|| k} \ . 
		\label{eq:evol_K_lag_I}
\end{align}
For these equations, focusing on the local foliation-related variables $\mathbf{h}$, $\CK^i_{\ j}$ rather than fluid variables, the additional assumption of a fluid proper time foliation and of the choice $t=\tau$ (implying $N=\gamma$), corresponding to the Lagrangian picture, does not affect the comoving picture equations \eqref{eq:evol_h_lag_I}--\eqref{eq:evol_K_lag_I} above, except for the possibility of replacing each occurrence of $N$ by $\gamma$. We shall accordingly not rewrite the equations for this case.

The local equations of evolution along the fluid flow for arbitrary slices and coordinates \eqref{eq:evol_h_lag}--\eqref{eq:evol_K_lag} allow for an alternative derivation of the coordinate-time derivative of the extrinsic volume, Eq.~\eqref{eq:vol_D_final}:
\beq
	\frac{\mathrm d}{{\mathrm d}t} \CV_\CD
		= \int_{\CD_{\bm x}} \left( -N \CK + \left(N v^i \right)_{|| i} \right) \sqrt{h} \, {\mathrm d}^3 x \ , 
	\label{eq:vol_D_appendix}
\eeq
by starting over from \eqref{eq:vol_D_var_III} and expanding its integrand as 
\beq
	\frac{\mathrm d}{{\mathrm d}t} \CV_\CD 
		= \int_{\CD_{\bm x}} \left( \frac{1}{2} h^{ij} \frac{\mathrm d}{{\mathrm d}t} h_{ij}
			+ J^{-1} \frac{\mathrm d}{{\mathrm d}t} J \right) \sqrt{h} \, {\mathrm d}^3 x \ . 
	\label{eq:vol_D_appendix_II}
\eeq
The trace of \eqref{eq:evol_h_lag} can then be used together with \eqref{eq:der_jacob} to obtain: 
\beq
	\frac{\mathrm d}{{\mathrm d}t} \CV_\CD 
		= \int_{\CD_{\bm x}} \left( - N \CK + N^i_{\ || i} + \frac{1}{2} h^{ij} V^k \partial_k h_{ij}
			+ \partial_k V^k \right) \sqrt{h} \, {\mathrm d}^3 x \ . 
		\label{eq:vol_D_appendix_III}
\eeq
This expression then allows to catch up with the end of the derivation given in section~\ref{subsubsec:vol_lagrange},
so that a similar use of relations \eqref{eq:vol_D_var_V} and \eqref{eq:relat_vel} again gives
the evolution of the volume \eqref{eq:vol_D_appendix}. 

\section{Extrinsic averaging operator in fluid-intrinsic variables}
\label{app:usual_aver_intrinsic}

The system of equations for extrinsic averages on $\CD$ derived in section \ref{sec:extrinsic} (\textit{Theorem}~\ref{ths:av_extrinsic}) is mostly expressed in terms of geometric
variables of the $\bm n$-orthogonal hypersurfaces, such as their extrinsic curvature. We here present an alternative formulation of the same
equations focusing instead on the intrinsic, rest-frame kinematic and dynamical quantities of the fluid (see section \ref{subsubsec:kin_fluid}) which do not depend on the foliation choice.

We can first rewrite the volume expansion rate \eqref{eq:vol_rate} and the commutation rule \eqref{eq:com_rule_final} in terms of the intrinsic local expansion rate of the fluid $\Theta$ by re-expressing the three-divergence of $N \bm v$ as 
\beq
	\big( N v^i \big)_{|| i}
		= N v^i_{\ || i} + v^i N_{|| i} 
		= N \nabla_\mu v^\mu \ , 
	\label{eq:relar_div_vel}
\eeq
where we have employed \eqref{eq:eulerian_acc} for the last equality. Noticing that $\CK = - \nabla_\mu n^\mu$ 
and making use of expression \eqref{eq:four_vel}, we get:
\begin{equation}
- N \CK + \big( N v^i \big)_{|| i} = \frac{N}{\gamma} \Theta - \frac{1}{\gamma} \ddt\gamma = \Thetat - \frac{1}{\gamma} \ddt\gamma := \Thetat^T  \ ,
\label{eq:tilted_exp_rate_app}
\end{equation}
where we have defined a \textit{tilted and scaled expansion rate} $\Thetat^T$ out of the scaled rate $\Thetat = (N/\gamma) \, \Theta$. The factor $N / \gamma$ in $\Thetat^T$ above adjusts the local clock rates between the proper time of the fluid and the coordinate time. This can also be seen upon writing: 
\beq
	\Thetat^T = \frac{N}{\gamma} \Theta - \frac{1}{\gamma} \frac{\rm d \gamma}{{\rm d}t}
		= \frac{N}{\gamma} \left( \Theta - \frac{1}{\gamma} \frac{\rm d \gamma}{{\rm d}\tau} \right) 
		= \frac{{\rm d}\tau}{{\rm d}t} \left( \Theta - \frac{1}{\gamma} \frac{\rm d \gamma}{{\rm d}\tau} \right) \ , 
\eeq
where we have used the relation \eqref{eq:der_rel_2} 
between $\mathrm{d}/\mathrm{d}t$ and $\mathrm{d}/\mathrm{d}\tau$.
The additional tilt term $- \gamma^{-1} \, \mathrm{d}\gamma / \mathrm{d}t$ can be understood as the effect of the evolving mutual tilt (Lorentz boost) between the hypersurfaces
in which $\CD$ is embedded, and the fluid flow. This affects the local volume measure so that the evolution of the volume is not only due to the fluid's intrinsic expansion.

The above rewriting \eqref{eq:tilted_exp_rate_app} allows us to recast the volume and scale factor evolution rates, and the commutation rule, respectively, into the following expressions:
\begin{align}
 \frac{3}{a_\CD} \ddt{a_\CD} & {} = \frac{1}{\VD} \ddt{\VD}
 		= \average{\Thetat^T} \; ;
 \label{eq:vol_rate_II} \\
 \frac{\mathrm d}{{\mathrm d}t} \average{\psi} 
		& {} = \average{ \frac{\mathrm d}{{\mathrm d}t} \psi } 
		- \average{ \Thetat^T } \average{\psi} 
		+ \average{ \Thetat^T \psi } \; .
 \label{eq:com_rule_final_II}
\end{align}
We notice that, even for the general configuration we are investigating 
(see figure~\ref{fig:schem_vect_3D}), the commutation rule, as well as the domain volume expansion rate, can be cast into a simple form
with respect to the fluid quantities, although this extrinsic averaging framework requires the explicit additional contribution from the evolving tilt.

The use of the Raychaudhuri equation \eqref{eq:raych_eq} and the energy constraint \eqref{eq:gauss_eq_intr} (instead of the scalar parts of the extrinsic
$3+1$ Einstein equations \eqref{eq:evol_K}, \eqref{eq:hamilt_const}), together with the above alternative form of the commutation rule,
allows for a rewriting of the evolution equations for the extrinsic effective scale factor $a_\CD$. This yields the following equivalent formulation
of \textit{Theorem}~\ref{ths:av_extrinsic}, in terms of rescaled fluid-intrinsic kinematic and dynamical variables,
$\sigmat^2 = (N^2/\gamma^2) \, \sigma^2$,
$\omegat^2 = (N^2/\gamma^2) \, \omega^2$, $\SRt = (N^2/\gamma^2) \, \SR$, $\epsilont = (N^2/\gamma^2) \, \epsilon$,
$\tilde p = (N^2/\gamma^2) \, p$, and $\tilde{\CA} = (N^2/\gamma^2) \, \CA$
(with $\CA = \nabla_\mu a^\mu$), as well as $\tilde \cc := (N^2/\gamma^2) \,\cc$:

\begin{corgroup}
\setcounter{corgroupcount}{0}
\refstepcounter{corgroupcount}
% Will set the counter to 1 and update this for correct referencing. Apparently you need to add one to the counter while updating it for some reason, so you have to set it to the value you want minus one, then simultaneously add one and update the referencing.
\label{cor:extrinsic_all}
\begin{corollary}[extrinsically averaged evolution equations in fluid variables]\\
\label{cor:extrinsic_evol}
\vspace{-7pt}

\small{The averaged evolution equations for the extrinsic effective scale factor $a_\CD$ can also be written under the following form:
\begin{eqnarray}
 \label{eq:AvgHamilton_2}
 3 \left(\frac{1}{a_\CD} \ddt{a_\CD} \right)^2 & = & 8 \pi G \average{\tilde\epsilon} + \average{\tilde\cc}
		- \frac{1}{2} \average{\tilde\SR} - \frac{1}{2} \tilde \CQ^{\rm T}_\CD \ ; \\
 \label{eq:AvgRaych_2}
 \frac{3}{a_\CD} \frac{{\rm d}^2 a_\CD}{{\rm d}t^2} & = & -4 \pi G \average{(\tilde\epsilon + 3 \tilde p)}
		+ \average{\tilde\cc} + \tilde \CQ^{\rm T}_\CD + \tilde \CP^{\rm T}_\CD \ ,
\end{eqnarray}
with alternative, `tilted' kinematical and dynamical backreactions, reading respectively:
\begin{align}
\tilde \CQ^{\rm T}_\CD & := \frac{2}{3} \left[ \average{\left(\Thetat^T \right)^2} \! - \average{\Thetat^T}^2 \right]
- 2 \average{\tilde\sigma^2} + 2 \average{\tilde\omega^2} 
+ \frac{2}{3} \average{ 2 \, \Thetat^T \frac{1}{\gamma} \ddt\gamma + \left(\frac{1}{\gamma} \ddt\gamma \right)^2} \; ; \\
\tilde \CP^{\rm T}_\CD & := \average{\tilde{\mathcal A}}
\! + \average{\frac{\gamma}{N} \ddt{}\!\! \left(\frac{N}{\gamma} \right) \Thetat^T }
\! - \average{ 2 \, \Thetat^T \! \frac{1}{\gamma} \ddt\gamma + \left(\frac{1}{\gamma}\ddt\gamma \right)^2
\!\! + \frac{N}{\gamma} \ddt{} \left(\frac{\gamma}{N} \frac{1}{\gamma} \ddt{\gamma} \right) \! } .
\end{align}}
\end{corollary}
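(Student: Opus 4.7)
The plan is to derive both equations directly from the fluid-intrinsic local equations, bypassing the geometric formulation used in \textit{Theorem}~\ref{ths:av_extrinsic}. The ingredients are the Raychaudhuri equation \eqref{eq:raych_eq}, the rest-frame Gauss-type constraint \eqref{eq:gauss_eq_intr}, and the key identity \eqref{eq:tilted_exp_rate_app}, which recasts the local volume-expansion rate as $-N\CK+(Nv^i)_{||i}=\Thetat^T=\Thetat-\gamma^{-1}\,\mathrm{d}\gamma/\mathrm{d}t$. This identity has already been used to produce the equivalent volume rate \eqref{eq:vol_rate_II} and commutation rule \eqref{eq:com_rule_final_II} in Appendix~\ref{app:usual_aver_intrinsic}, which I will use as proved starting points.

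For the energy-constraint equation \eqref{eq:AvgHamilton_2}, I would multiply \eqref{eq:gauss_eq_intr} by $N^2/\gamma^2$ (so $\Theta^2\mapsto\Thetat^2$, etc.) and apply the extrinsic averager $\average{\,\cdot\,}$. This yields an expression involving $\frac{2}{3}\average{\Thetat^2}$, which I split via $\Thetat=\Thetat^T+\gamma^{-1}\,\mathrm{d}\gamma/\mathrm{d}t$ as $\frac{2}{3}\average{(\Thetat^T)^2}$ plus the mixed/square terms $\frac{2}{3}\average{2\,\Thetat^T\gamma^{-1}\,\mathrm{d}\gamma/\mathrm{d}t+(\gamma^{-1}\,\mathrm{d}\gamma/\mathrm{d}t)^2}$. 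Recognising $\frac{1}{3}\average{\Thetat^T}^2=3(a_\CD^{-1}\,\mathrm{d}a_\CD/\mathrm{d}t)^2$ from \eqref{eq:vol_rate_II}, subtracting it from $\frac{2}{3}\average{(\Thetat^T)^2}$ to form a variance, and regrouping the shear, vorticity, and tilt-mixed pieces assembles $\tilde\CQ^T_\CD$ exactly as stated. For the Raychaudhuri-type equation \eqref{eq:AvgRaych_2}, I would take $\mathrm{d}/\mathrm{d}t$ of \eqref{eq:vol_rate_II}, giving $3a_\CD^{-1}\mathrm{d}^2a_\CD/\mathrm{d}t^2-\frac{1}{3}\average{\Thetat^T}^2=\mathrm{d}\average{\Thetat^T}/\mathrm{d}t$, and use \eqref{eq:com_rule_final_II} to rewrite the right-hand side as $\average{\mathrm{d}\Thetat^T/\mathrm{d}t}+\average{(\Thetat^T)^2}-\average{\Thetat^T}^2$. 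The local $\mathrm{d}\Thetat^T/\mathrm{d}t$ is obtained from \eqref{eq:raych_eq} by rescaling with $(N/\gamma)^2$, converting $(N/\gamma)^2\dot\Theta=(N/\gamma)\,\mathrm{d}\Theta/\mathrm{d}t$ via \eqref{eq:der_rel_2}, rewriting this as $\mathrm{d}\Thetat/\mathrm{d}t-(\gamma/N)\,\Thetat\,\mathrm{d}(N/\gamma)/\mathrm{d}t$, and then subtracting $\mathrm{d}(\gamma^{-1}\,\mathrm{d}\gamma/\mathrm{d}t)/\mathrm{d}t$. Averaging and collecting: the source term $-4\pi G\average{\tilde\epsilon+3\tilde p}+\average{\tilde\cc}$ and the divergence $\average{\tilde\CA}$ emerge directly, while the squared kinematic terms combine with the earlier variance into a common $\tilde\CQ^T_\CD$, and the remaining tilt-derivative pieces define $\tilde\CP^T_\CD$.

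The main obstacle is the meticulous bookkeeping of the tilt-derivative corrections. Expanding $\Thetat^2$ in the Hamiltonian piece produces cross-terms with $\gamma^{-1}\,\mathrm{d}\gamma/\mathrm{d}t$, and in the Raychaudhuri piece the time derivative $\mathrm{d}\Thetat/\mathrm{d}t$ converts to $\mathrm{d}\Thetat^T/\mathrm{d}t$ only at the cost of further $\mathrm{d}(\gamma^{-1}\,\mathrm{d}\gamma/\mathrm{d}t)/\mathrm{d}t$ and $(\gamma/N)\,\mathrm{d}(N/\gamma)/\mathrm{d}t$ insertions. One must verify these rearrange into precisely the asymmetric sign and coefficient pattern of $\tilde\CQ^T_\CD$ and $\tilde\CP^T_\CD$, so that the two averaged equations are simultaneously satisfied and consistent (up to this repackaging) with the original geometric formulation of \textit{Theorem}~\ref{ths:av_extrinsic}. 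I would cross-check by taking the vanishing-tilt limit $\gamma\to 1$, in which $\Thetat^T=\Thetat$, the tilt-correction terms drop out of both $\tilde\CQ^T_\CD$ and $\tilde\CP^T_\CD$, $\tilde\CP^T_\CD$ reduces to $\average{\tilde\CA}+\average{(\gamma/N)\,\mathrm{d}(N/\gamma)/\mathrm{d}t\,\Thetat^T}$, and one recovers the expected fluid-orthogonal expressions of Paper~II supplemented by the cosmological constant and the lapse-rescaling term.
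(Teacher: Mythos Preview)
Your proposal is correct and follows essentially the same route as the paper: the text introducing \textit{Corollary}~\ref{cor:extrinsic_evol} merely states that one averages the rescaled Raychaudhuri equation~\eqref{eq:raych_eq} and constraint~\eqref{eq:gauss_eq_intr} using the alternative commutation rule~\eqref{eq:com_rule_final_II} and volume rate~\eqref{eq:vol_rate_II}, which is precisely your plan. Your write-up simply supplies the explicit bookkeeping (the split $\Thetat=\Thetat^T+\gamma^{-1}\,\mathrm{d}\gamma/\mathrm{d}t$ and the conversion of $(N/\gamma)^2\dot\Theta$ into $\mathrm{d}\Thetat^T/\mathrm{d}t$ plus correction terms) that the paper leaves implicit.
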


We recall that, as in \textit{Theorem}~\ref{th:av_evol}, the left-hand sides in the above equations should be seen as derivatives with respect to the
chosen parameter $t$, and be interpreted according to the physical meaning of the latter. In particular,
the term $3 \, a_\CD^{-1} \, \mathrm{d}^2 a_\CD / \mathrm{d}t^2$ in equation \eqref{eq:AvgRaych_2} corresponds in a Lagrangian picture to the proper time scale factor acceleration, but not in general.

Under this form, only two backreaction terms appear, $\tilde \CQ^{\rm T}_\CD$ and $\tilde \CP^{\rm T}_\CD$, as the tilt only contributes under these combinations. These backreaction terms will not in general be directly related to the terms $\QD$ and $\PD$
appearing in \textit{Theorem}~\ref{th:av_evol}, as they do not collect the same local terms in their expressions. They do coincide, however, for a fluid-orthogonal foliation assuming an irrotational fluid
(with in this case $\tilde \CQ^{\rm T}_\CD = \QD$ and $\tilde \CP^{\rm T}_\CD = \PD$, while $\TD = 0$).

Note that there is no explicit non-perfect fluid contribution
to these effective evolution equations, although the non-perfect fluid components of the energy-momentum tensor
do have an influence on the dynamics \textit{via} the local (and average, see below) evolution of $\epsilon$ and $p$.

As before, this set of equations goes along with an integrability condition, and must be complemented by the evolution equation for the averaged energy density and pressure.

\begin{corollary}[integrability and energy balance conditions to \textit{Corollary}~\ref{cor:extrinsic_evol}]\\
\label{cor:extrinsic_intcond}
\vspace{-7pt}

\small{The integrability condition corresponding to \textit{Corollary}~\ref{cor:extrinsic_evol} reads:
\begin{align}
\lefteqn{\ddt{} \tilde \CQ^{\rm T}_\CD + \frac{6}{a_\CD} \ddt{a_\CD} \, \tilde \CQ^{\rm T}_\CD + \ddt{} \average{\tilde\SR}
		+ \frac{2}{a_\CD} \ddt{a_\CD} \average{\tilde\SR} + \frac{4}{a_\CD} \ddt{a_\CD} \, \tilde \CP^{\rm T}_\CD} \nonumber \\
\label{IntCond_1}
& & = 16 \pi G \left(\ddt{} \average{\tilde\epsilon}
		+ \frac{3}{a_\CD} \ddt{a_\CD} \average{\tilde\epsilon + \tilde p} \right) 
		+ 2 \, \ddt{} \average{\tilde\cc} \ ,
\end{align}
while the associated averaged conservation equation for the scaled energy density $\epsilont$ and pressure $\tilde p$ becomes:
\begin{align}
 \ddt{} \average{\tilde\epsilon} + \frac{3}{a_\CD} \ddt{a_\CD} \average{\tilde\epsilon + \tilde p} =
		\left(\average{\Thetat^T} \average{\tilde p}
		- \average{\Thetat^T \tilde p}\right)
		- \average{\frac{1}{\gamma} \ddt{\gamma} \, \left(\epsilont + \tilde p \right)} \nonumber \\
\label{EnergyMomentum_WithNGamma}
		- \average{\frac{N^3}{\gamma^3} \Big(a_\mu q^\mu + \nabla_\mu q^\mu + \pi^{\mu \nu} \sigma_{\mu \nu} \Big)}
		+ 2 \average{\tilde\epsilon \, \frac{\gamma}{N} \ddt{}\! \left(\frac{N}{\gamma} \right) }  . \quad \;\;
\end{align}}
\end{corollary}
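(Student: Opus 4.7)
The plan is to establish the two relations of \textit{Corollary}~\ref{cor:extrinsic_intcond} independently: the integrability condition \eqref{IntCond_1} follows by compatibility between the two averaged evolution equations of \textit{Corollary}~\ref{cor:extrinsic_evol}, while the averaged energy balance \eqref{EnergyMomentum_WithNGamma} follows from an appropriate rescaling of the local energy conservation law \eqref{eq:en_cl}, combined with the commutation rule \eqref{eq:com_rule_final_II}.

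For \eqref{IntCond_1}, I would differentiate the averaged energy constraint \eqref{eq:AvgHamilton_2} with respect to the coordinate time $t$. Writing $H_\CD := (1/a_\CD) \, \mathrm{d}a_\CD / \mathrm{d}t$, the left-hand side produces $6 H_\CD (\mathrm{d}/\mathrm{d}t)(H_\CD) = 2 H_\CD (3 \, \ddot a_\CD / a_\CD) - 2 H_\CD \cdot 3 H_\CD^2$, where $\ddot a_\CD$ stands for $\mathrm{d}^2 a_\CD/\mathrm{d}t^2$. Substituting $3 \, \ddot a_\CD / a_\CD$ from \eqref{eq:AvgRaych_2} and $3 H_\CD^2$ from \eqref{eq:AvgHamilton_2} in this combination, and rearranging terms, one reaches exactly the form \eqref{IntCond_1}: the backreaction terms $\tilde\CQ^T_\CD$, $\tilde\CP^T_\CD$ and the averaged curvature $\langle\tilde\SR\rangle$ acquire their expected weights in $H_\CD$, while the source contributions assemble into the right-hand side of \eqref{IntCond_1}. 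This reproduces, within the fluid-intrinsic-variables rewriting of the extrinsic scheme, the structure of the original \textit{Theorem}~\ref{th:av_cond}.

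For \eqref{EnergyMomentum_WithNGamma}, I would proceed in three steps. First, I rewrite \eqref{eq:en_cl} in coordinate-time form for scalars using $\dot\epsilon = (\gamma/N) \, \mathrm{d}\epsilon/\mathrm{d}t$ (from \eqref{eq:der_rel_2}) to obtain an equation of the type $(N/\gamma) \, \mathrm{d}\epsilon/\mathrm{d}t + \Theta(\epsilon + p) = -(a_\mu q^\mu + \nabla_\mu q^\mu + \pi^{\mu\nu}\sigma_{\mu\nu})$. Second, I multiply through by $(N/\gamma)^2$ and reorganize so that the derivative applies to $\tilde\epsilon = (N/\gamma)^2 \epsilon$; using the Leibniz rule, $\mathrm{d}\tilde\epsilon/\mathrm{d}t = (N/\gamma)^2 \, \mathrm{d}\epsilon/\mathrm{d}t + 2\tilde\epsilon (\gamma/N) \, \mathrm{d}(N/\gamma)/\mathrm{d}t$, and the identity $(N^3/\gamma^3)\Theta = \tilde\Theta^T (N/\gamma)^0 + (1/\gamma)\, \mathrm{d}\gamma/\mathrm{d}t$ coming from \eqref{eq:tilted_exp_rate_app}, this yields a local evolution equation for $\tilde\epsilon$ with source terms involving $\tilde\Theta^T(\tilde\epsilon + \tilde p)$, $(1/\gamma)\,\mathrm{d}\gamma/\mathrm{d}t \cdot (\tilde\epsilon + \tilde p)$, $2\tilde\epsilon (\gamma/N)\, \mathrm{d}(N/\gamma)/\mathrm{d}t$, and $(N^3/\gamma^3)(a_\mu q^\mu + \nabla_\mu q^\mu + \pi^{\mu\nu}\sigma_{\mu\nu})$. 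Third, I take the extrinsic average and apply the commutation rule \eqref{eq:com_rule_final_II} to $\langle \mathrm{d}\tilde\epsilon/\mathrm{d}t \rangle$, using $\langle\tilde\Theta^T\rangle = (3/a_\CD) \, \mathrm{d}a_\CD/\mathrm{d}t$ from \eqref{eq:vol_rate_II}. The $\langle\tilde\Theta^T \tilde\epsilon\rangle$ term arising from commutation combines with the local $\langle\tilde\Theta^T(\tilde\epsilon+\tilde p)\rangle$ term, and by adding and subtracting $\langle\tilde\Theta^T\rangle \langle\tilde p\rangle$ to reconstruct the Friedmannian source factor $(3/a_\CD) \, \mathrm{d}a_\CD/\mathrm{d}t \cdot \langle\tilde\epsilon + \tilde p\rangle$, one recovers \eqref{EnergyMomentum_WithNGamma} with the ``correlation'' term $\langle\tilde\Theta^T\rangle \langle\tilde p\rangle - \langle\tilde\Theta^T \tilde p\rangle$ and the tilt-rate correction terms as given.

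The hardest step, and the one requiring the most care, is the second stage of the energy-balance derivation: the systematic conversion of each local factor of $\Theta$, $\epsilon$, $p$ and their derivatives into the rescaled counterparts $\tilde\Theta^T$, $\tilde\epsilon$, $\tilde p$ generates many derivative-of-lapse/tilt contributions that must cancel or regroup exactly into the two ``tilt rate'' terms that appear on the right-hand side of \eqref{EnergyMomentum_WithNGamma}. The identity \eqref{eq:tilted_exp_rate_app} that separates $\tilde\Theta$ from $\tilde\Theta^T$ by exactly the factor $(1/\gamma)\,\mathrm{d}\gamma/\mathrm{d}t$ is the key algebraic tool ensuring that this regrouping works cleanly, so that no residual terms survive outside those displayed in \eqref{EnergyMomentum_WithNGamma}.
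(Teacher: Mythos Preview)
Your approach is correct and mirrors the paper's methodology: the integrability condition is obtained by differentiating the averaged energy constraint \eqref{eq:AvgHamilton_2} and eliminating $\ddot a_\CD/a_\CD$ and $H_\CD^2$ via \eqref{eq:AvgRaych_2}--\eqref{eq:AvgHamilton_2}, while the energy balance follows from rescaling the local conservation law \eqref{eq:en_cl} and averaging with the commutation rule \eqref{eq:com_rule_final_II}. This is exactly how the paper establishes the parent result (\textit{Theorem}~\ref{th:av_cond}) and its intrinsic analogue in section~\ref{subsubsec:av_cond_intr}; the corollary here is simply the same procedure carried out in the tilted-and-rescaled variables. A couple of minor slips to fix when you write it out in full: since $\dot\epsilon = (\gamma/N)\,\mathrm{d}\epsilon/\mathrm{d}t$, the local equation should be multiplied by $(N/\gamma)^3$ (not $(N/\gamma)^2$) to produce $\mathrm{d}\tilde\epsilon/\mathrm{d}t$ after Leibniz, and the identity you invoke from \eqref{eq:tilted_exp_rate_app} is $(N/\gamma)\,\Theta = \tilde\Theta = \tilde\Theta^T + \gamma^{-1}\,\mathrm{d}\gamma/\mathrm{d}t$, not the expression with $(N/\gamma)^3$ you wrote.
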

\end{corgroup}
(From the commutation rule, the expression $\langle{\Thetat^T}\rangle_\CD \average{\tilde p} - \langle{\Thetat^T \tilde p}\rangle_\CD$ can also be written as $\average{\mathrm{d} \tilde p / \mathrm{d}t} - \mathrm{d}\average{\tilde p}/\mathrm{d}t$.)

\section{Summarized literature comparison}
\label{app:lit_comp}

We present in Table \ref{table:lit_comp} a comparative overview of the various formalisms used in the existing generalization proposals
of the system of averaged scalar equations of Papers I and II to general foliations, discussed in sections~\ref{subsubsec:literature_global}--\ref{subsubsec:literature_general}, including the extrinsic averaging formalism of section~\ref{sec:extrinsic} of this work.

In this table we express all notations in terms of those
used in this work to make comparisons easier. In particular, when considering the $4-$scalar expressions of
\cite{marozzi:aver2,rasanen:lightpropagation,smirnov:aver}, we define the lapse $N$ as $(- \partial_\mu T \, \partial^\mu T)^{-1/2}$, where $T$
is the scalar function which labels the hypersurfaces. This quantity (noted $\rm \Gamma$ in \cite{rasanen:lightpropagation}) satisfies $n_\mu = - N \partial_\mu T$, thus playing an analogous role
to the $3+1$ lapse, and it indeed coincides with the usual lapse if the 
$4-$scalar formalism
is split into a $3+1$ description with the natural choice of $T$ as the time coordinate. Since the domain propagation varies between these papers, in terms of the present notations the averaging domain should be noted $\CE_{\bm \partial_t}$ or $\CE_{\bm n}$, and $\CD$ for the present paper. Rather than indicating the corresponding domain for each average and backreaction term, we remove any domain-labelling subscript for these objects in the table below for notational ease.

\clearpage

%\begin{sidewaystable}
%\begin{landscape}
%\rotatebox{90}{
%\begin{adjustbox}{angle=90}
%   \begin{tabular}{ll}
\ctable[
doinside= \renewcommand{\arraystretch}{1.6}, %{\tiny, \renewcommand{\arraystretch}{2.8}},  
caption= {Summary of the main differences between the various generalization proposals discussed in sections~\ref{subsubsec:literature_global}--\ref{subsubsec:literature_general} for the scalar averaging of the $3+1$ Einstein equations
in arbitrary foliations. This table is split into three parts, considering respectively the setup,
the equations presented (and the corresponding effective Hubble parameter), and the main variables used.
},
botcap,
label = table:lit_comp
]{>{\centering}m{0.15\linewidth} %|
>{\centering}m{0.21\linewidth}
>{\centering}m{0.168\linewidth}
>{\centering}m{0.15\linewidth}
>{\centering}m{0.15\linewidth}}{  % minimal reasonable sizes: 0.065 | 0.07 0.075 0.095 0.1(0.13) 0.085 0.065 0.065 0.08 0.065
%max total: 0.828 for 5 columns, 0.805 for 1+8 columns, 0.785 for 1+9...
\tnote[a]{In \cite{futamase:aver2}, boundary terms are removed by an (\textit{a priori} background-dependent) assumption of periodic boundary conditions on the large enough but still compact domain. As discussed in section \ref{subsubsec:literature_global}, this implies equivalent results to the more general case (not considered in \cite{futamase:aver2}) of an arbitrary compact domain propagating along $\bm n$, at the expense of rest mass preservation. As the shift vector is chosen to be zero in \cite{futamase:aver2}, this would also amount to a propagation along $\bm \partial_t$.}
\tnote[b]{Formally, the boundaries of the domain are assumed to be determined by some scalar function, in which case the averages and equations
are truly covariant. However, the authors mention the difficulty of finding such a scalar on physical grounds, which may
constrain one to choose a function of the coordinates instead of a scalar, hence inducing deviations from general covariance
in the averages. The follow-up paper \cite{marozzi:aver3} makes these deviations explicit at second order in perturbation theory; Smirnov \cite{smirnov:aver} suggests a general procedure to construct the function defining the boundary as a scalar.} 
\tnote[c]{The equations would formally still hold without change if a regional domain propagating along $\bm n$ were considered instead.
However, it would not be mass-preserving in this case.}
\tnote[d]{In both cases (Beltr\'an Jim\'enez \textit{et al.} \cite{dunsby:aver} and Smirnov \cite{smirnov:aver}) it is assumed
that there are `natural' observers corresponding to some irrotational dust as part of the fluid content of the model universe,
not interacting with the rest, and the corresponding geodesic irrotational normalized velocity field
is used as the normal vector $\bm n$ to build the hypersurfaces. In \cite{dunsby:aver} it is assumed to represent Dark Matter and baryonic matter
on large scales and hence it is a well-defined part of $T_{\mu \nu}$ (whereas the remaining parts can account for other fluids such as radiation or a
Dark Energy fluid, or for effective terms due to a departure from General Relativity). In \cite{smirnov:aver}, it can either be some component
intrinsically contained within $T_{\mu \nu}$, or some `test observers' that are added to the fluid content with an assumed negligible source
contribution.}
}{  \FL[0.05em] \toprule
Reference &
 Domain boundary flow (Mass-preserving ?) &
 Fluid content &
 Formulation &
 Foliation vector $\bm n$
\ML
Tanaka \& Futamase \cite{futamase:aver2} &
 Equivalent to having $\bm n$, and $\bm \partial_t$ (No)\tmark[a] &
 General $T_{\mu \nu}$ &
 $3+1$ with $\bm N = \bm 0$ &
 General
\NN %\addlinespace[2ex] 
Larena \cite{larena:aver} &
Implicitly $\bm{\partial}_t$ (No) &
 One perfect fluid &
 $3+1$ &
 General
\NN %\addlinespace[2ex] 
Brown \textit{et al.} \cite{brown:aver} &
Implicitly $\bm{\partial}_t$ (No) &
 Sum of general fluids &
 $3+1$ &
 General
\NN %\addlinespace[2ex]
Gasperini \textit{et al.} \cite{marozzi:aver2} &
 $\bm n$ (No) &
 One perfect fluid &
 Both (mostly)\tmark[b] $4-$scalar and $3+1$ &
 General
\NN %\addlinespace[2ex]
R\"as\"anen \cite{rasanen:lightpropagation} &
Global domain (Yes)\tmark[c] &
 General $T_{\mu \nu}$ &
 $4-$scalar &
 General
\NN %\addlinespace[2ex]
Beltr\'an Jim\'enez \textit{et al.} \cite{dunsby:aver} &
 $\bm n$, $ = \bm{\partial}_t$ (No) &
 General $T_{\mu \nu}$ with a dust part\tmark[d] &
 $3+1$ with $\bm N = \bm 0$ and $N = 1$ &
 Geodesic (dust velocity)\tmark[d]
\NN %\addlinespace[2ex]
Smirnov \cite{smirnov:aver} &
 $\bm n$ (No) &
 General $T_{\mu \nu}$ plus a dust part\tmark[d] &
 Both $4-$scalar and $3+1$ &
 Geodesic (dust velocity)\tmark[d]
\NN %\addlinespace[2ex]
Section \ref{sec:extrinsic} of this work &
 $\bm u$ (Yes) &
 One general fluid &
 $3+1$ &
 General
\LL[0.05em] \bottomrule
}

\ctable[
doinside=\renewcommand{\arraystretch}{2}, %{\tiny, \renewcommand{\arraystretch}{2.8}}, 
caption={Summary of the main differences between the various generalization proposals for the scalar averaging of the $3+1$ Einstein equations
in arbitrary foliations, discussed in sections~\ref{subsubsec:literature_global}--\ref{subsubsec:literature_general}.}, 
botcap,
continued
]{>{\centering}m{0.13\linewidth} %|
>{\centering}m{0.305\linewidth}
>{\centering}m{0.132\linewidth}
>{\centering}m{0.133\linewidth}
>{\centering}m{0.127\linewidth}}{
\tnote[e]{The application paper \cite{larena:aver_app} introduces instead five possible definitions of the effective Hubble parameter $H$
in order to compare them, and derives the averaged energy constraint for each of them. The first four of these definitions are respectively:
$3 H = \CV^{-1} \, \mathrm{d}\CV / \mathrm{d}t = \langle - N \CK + N^i_{\ ||i}\rangle$ (which becomes simply
 $\naverage{- N \CK}$ later in the paper as the shift is set to zero); $3 H = \naverage{- \CK}$;
$3 H = \naverage{N h^{\mu \nu} \nabla_\mu u_\nu}$; $3H = \naverage{h^{\mu \nu} \nabla_\mu u_\nu}$; where all averages
are taken over a domain on the $\bm n$-orthogonal hypersurfaces. The last proposal for $3 H$ consists in averaging simply
the intrinsic fluid expansion rate $\Theta$ (without any lapse factor) over a domain on $\bm u$-orthogonal hypersurfaces, in case $\bm u$ is
irrotational. In all of the $\bm n$-orthogonal cases, the domain still implicitly evolves along $\bm{\partial}_t$, whereas in the last case
the averaged (dust) constraint equation is recalled from Paper~I, meaning that in this case the domain must be assumed to be fluid-comoving. (Note the corrections in \cite[sec.~II.1]{larena:aver_app} of some erroneous decompositions of velocity gradients in \cite{larena:aver} that, however, had no impact on the effective Hubble rates.)}
\tnote[f]{In the first application \cite{brown:aver_app1}, the average of $h^{\mu \nu} \nabla_\mu u_\nu$ is also considered,
while the second application \cite{brown:aver_app2} focuses on the average of $\Theta$; however, in both cases, the corresponding
averaged equations are not made explicit.}
\tnote[] {}
}{  \FL[0.05em] \toprule
Reference &
Effective Hubble parameter ($\times 3$) &
 Integrability condition &
 Averaged energy conservation &
 Inclusion of $\cc$
\ML
Tanaka \& Futamase \cite{futamase:aver2} &
 $\frac{1}{\CV} \ddt{\CV}$ $= \naverage{-N \CK}$ &
 No &
 No &
 Yes
\NN %\addlinespace[1ex] 
Larena \cite{larena:aver} &
 $\naverage{N h^{\mu \nu} \nabla_{\!\mu} u_\nu}$ $\neq \frac{1}{\CV} \ddt{\CV}$ \tmark[e] &
 Yes &
 Yes &
 Yes
\NN %\addlinespace[1ex] 
Brown \textit{et al.} \cite{brown:aver} &
 $\frac{1}{\CV} \ddt{\CV}$ $= \left\langle -N \CK + N^i_{\ ||i} \right\rangle$ \tmark[f] &
 No &
 No &
 Yes
\NN %\addlinespace[1ex]
Gasperini \textit{et al.} \cite{marozzi:aver2} &
 $\frac{1}{\CV} \ddt{\CV}$ $= \naverage{- N \CK}$ &
 No &
 Only in case $\bm n = \bm u$ &
 No
\NN %\addlinespace[1ex]
R\"as\"anen \cite{rasanen:lightpropagation} &
 $\frac{1}{\CV} \ddt{\CV}$ $= \naverage{-N \CK}$ &
 No &
 Yes &
 Implicit (can be included in $T_{\mu \nu}$)
\NN %\addlinespace[1ex]
Beltr\'an Jim\'enez \textit{et al.} \cite{dunsby:aver} &
 $\frac{1}{\CV} \ddt{\CV}$ $= \naverage{-\CK}\;$ ($N=1$) &
 Yes &
 Yes &
 Implicit (can be included in $T_{\mu \nu}$)
\NN %\addlinespace[1ex]
Smirnov \cite{smirnov:aver} &
 $\frac{1}{\CV} \ddt{\CV}$ $= \naverage{- N \CK}$ &
 No &
 No &
 Implicit (can be included in $T_{\mu \nu}$)
\NN %\addlinespace[1ex]
Section \ref{sec:extrinsic} of this work &
 $\frac{1}{\VD} \ddt{\VD}$ $= \left\langle - N \CK \!+\! (N v^i)_{||i} \right\rangle_{\!\CD}$ &
 Yes &
 Yes &
 Yes
\LL[0.05em] \bottomrule
}

\ctable[
doinside= \renewcommand{\arraystretch}{1.4}, % 2    %{\tiny, \renewcommand{\arraystretch}{2}},  
caption={Summary of the main differences between the various generalization proposals for the scalar averaging of the $3+1$ Einstein equations
in arbitrary foliations, discussed in sections~\ref{subsubsec:literature_global}--\ref{subsubsec:literature_general}.},
botcap,
continued
]{>{\centering}m{0.130\linewidth} %|
>{\centering}m{0.153\linewidth}
>{\centering}m{0.187\linewidth}
>{\centering}m{0.174\linewidth}
>{\centering}m{0.184\linewidth}}{      % max total for 6 columns = 0.793
\tnote[g]{However, in contrast to other papers, the averages of the intrinsic dynamical quantities alone (multiplied by $N^2$)
do not appear explicitly: the dynamical variables appearing in the averaged equations are actually averages
of the local normal-frame variables as expressed in terms of the local intrinsic ones through the tilt. }
\tnote[h]{In the application paper \cite{larena:aver_app}, where the averaged energy constraint is derived for five proposals of effective
Hubble parameter choices (see footnote ${}^e$ above), the kinematic variables appearing in the equations are the best-suited for each case:
based on the normal frames in the first two cases, mixed in the third and fourth cases, and intrinsic in the last case. 
The backreaction terms introduced there also depend on the Hubble parameter choice and can be either
only $\CQ$, $\CQ$ and $\CP$, or $\CQ$ and another backreaction denoted $\CL$,
with a different expression for $\CQ$ in each case. }
\tnote[i]{ The equations in $4-$covariant form involve the normal-frame variables. In the $3+1$ form of the equations, the intrinsic dynamical variables $\epsilon$ and $p$ are used, which
allows for an explicit separation of the difference to the average of the normal-frame variables, corresponding to the `stress-energy backreaction' of the present work.}
\tnote[j]{Two forms of the equations are given, with an explicit separation of the contribution of the dynamical variables as seen
either in the normal frames, or in an independent, general frame.}
}{  \FL[0.05em] \toprule
Reference &
 Dynamical variables (from $T_{\mu \nu}$) &
 Kinematic variables and acceleration &
 Explicitly identified backreaction terms &
 Main local time derivative
\ML
Tanaka \& Futamase \cite{futamase:aver2} &
 Normal-frame &
 Normal-frame (\textit{i.e.}, built from the extrinsic curvature and $N$) &
 None &
 $\partial_t \big|_{x^i}$ ($= N n^\mu \partial_\mu$) %(noted $\dot{\ }$)
\NN %\addlinespace[1ex] 
Larena \cite{larena:aver} &
 Intrinsic\tmark[g] &
 Mixed (\textit{e.g.}, $h^{\mu \nu} \nabla_\mu u_\nu$)\tmark[h] &
 $\CQ$, $\CP$, and three more\tmark[h] &
 $\partial_t \big|_{x^i}$
\NN %\addlinespace[1ex] 
Brown \textit{et al.} \cite{brown:aver} &
 Intrinsic (for each fluid) &
 Normal-frame &
 $\CQ$, $\CP$, and one $\CT$ term per fluid &
 $\partial_t \big|_{x^i}$ %(noted $\dot{\ }$)
\NN %\addlinespace[1ex]
Gasperini \textit{et al.} \cite{marozzi:aver2} &
 Normal-frame /intrinsic\tmark[i] &
 Normal-frame &
 None &
 $N n^\mu \partial_\mu$
\NN %\addlinespace[1ex]
R\"as\"anen \cite{rasanen:lightpropagation} &
 Either normal-frame or in a general frame\tmark[j] &
 Normal-frame &
 Only $\CQ$ &
 $N n^\mu \partial_\mu$ %(noted $\partial_t$)
\NN %\addlinespace[1ex]
Beltr\'an Jim\'enez \textit{et al.} \cite{dunsby:aver} &
 Normal-frame &
 Normal-frame &
 Only $\CQ$; there, $\CP = 0$ &
 $\partial_t \big|_{x^i}$ ($= n^\mu \partial_\mu$) %(noted $\dot{\ }$)
\NN %\addlinespace[1ex]
Smirnov \cite{smirnov:aver} &
 Normal-frame (plus $T^\mu_{\ \mu}$) &
 Normal-frame &
 Only $\CQ$, not in all equations &
 $N n^\mu \partial_\mu$
\NN %\addlinespace[1ex]
Section \ref{sec:extrinsic} of this work &
 Intrinsic &
 Normal-frame &
 $\QD$, $\PD$, $\CT_\CD$ &
 $\ddt{} = $ $\frac{N}{\gamma} u^\mu \partial_\mu$
\LL[0.05em] \bottomrule
}

\clearpage
\noindent \textbf{Additional specificities of some of the papers compared here:\\}

\smallskip
\begin{itemize}
 \item[\textbullet] Tanaka \& Futamase \cite{futamase:aver2}: The commutation rule obtained in the corresponding framework is not explicitly provided or derived. It thus remains implicit as a necessary intermediate step for the derivation of the evolution equations for the effective scale factor from averages of the local dynamical equations.
We note, moreover, that the main aim of the paper, as for \cite{futamase:aver1}, is to set up a link to a 
background-dependent perturbation approach.\\
 \item[\textbullet] R\"as\"anen \cite{rasanen:lightpropagation}: The velocity field $\bm u$ that is introduced in addition to $\bm n$
is fully general and is not related either to $\bm n$ nor to the content of the model universe (it could be chosen to be the $4-$velocity of some fluid
as in the present approach, but this would be a restriction of generality). It is supposed to represent the $4-$velocity field
of the observers. In the application paper \cite{rasanen:lightpropagation_app}, this field is restricted to be everywhere very close to $\bm n$
(and so has a small vorticity), whereas $\bm n$ is assumed to be chosen such that it builds hypersurfaces of statistical homogeneity
and isotropy. These restrictions are already both suggested in the original paper \cite{rasanen:lightpropagation} but the equations are kept general.\\
 \item[\textbullet] Beltr\'an Jim\'enez \textit{et al.} \cite{dunsby:aver}: The main objective of using a general $T_{\mu \nu}$ in this paper
is to account for theories beyond General Relativity whose differences are transferred into $T_{\mu \nu}$ as effective terms. Note also that this paper
features an additional average equation giving the evolution of the average squared rate of shear $\partial_t \!\naverage{\sigma^2}$, as well as
the corresponding local equation; these equations are absent from the other papers quoted in this Appendix, including the present work
(the reason being that the resulting system is still not closed by adding this equation; work about looking deeper into the hierarchy of equations will be published in \cite{GBC}, where the next level of the hierarchy is explored through a topological approach).\\
 \item[\textbullet] Smirnov \cite{smirnov:aver}: Not only the choice of hypersurfaces (or of $\bm n$) and the choice of the time
that parametrizes them obey specific criteria in this paper, but this is also the case for the averaging domain, although this is not explicit in the averaged equations
and it could as well be any domain propagated along the chosen $\bm n$. Indeed, the domain is there chosen as a `sphere'
in some $\bm n$-comoving coordinates $Z^i$ on the hypersurfaces, as defined by $H_{i j} Z^i Z^j \leq r_0$ for some $r_0 > 0$
and with $H_{i j}$ the components of the spatial metric in these coordinates. This choice was a response to the series of papers
of Gasperini \textit{et al.} and Marozzi \cite{marozzi:aver1,marozzi:aver2,marozzi:aver3} to show how one may determine
the boundary of the domain \textit{via} a scalar function (here in the sense that the $Z^i$ are fixed \textit{a priori} without any link to the actual spacetime coordinates choices).\\
 \item[\textbullet] In the present work, we also introduce, in section~\ref{sec:intrinsic}, a different (intrinsic) averaging
formalism that measures scalar quantities and volume in the local rest frames of the fluid, even if they are integrated over a domain lying
in the not necessarily fluid-orthogonal hypersurfaces. We then obtain the corresponding commutation rule and averaged dynamics under rather
simple forms in terms of the \textit{intrinsic} kinematic and dynamical quantities of the fluid ---for instance, the effective
Hubble parameter, still defined as $1/3$ of the volume expansion rate, can be simply expressed as the average of $(N/\gamma) \, (\Theta/3)$---
and only two backreactions, kinematical and dynamical, distinct in general from the terms $\QD$, $\PD$ of section~\ref{sec:extrinsic}. This
formalism and this system of equations clearly differ from the literature compared in this Appendix (including our section~\ref{sec:extrinsic},
although it otherwise follows the same setup), due to the different volume and averaging operator definition.

We also give in the present work, in Appendix~\ref{app:usual_aver_intrinsic}, a re-expression of the averaged equations arising from the extrinsic averaging operator of section~\ref{sec:extrinsic}, in terms of averages of fluid-intrinsic kinematic and dynamical variables. This is obtained at the expense of the appearance of additional contributions from the evolution of the tilt factor.
\end{itemize}
\newpage
\section{Remarks on volume $3-$forms and manifestly covariant rewritings}
\label{app:volumes}
Let us consider the two different volume $3-$forms obtained as the respective Hodge duals $\star \underline{\bm n}$ , $\star \underline{\bm u}$ to the $1-$forms 
$\underline{\bm n} = n_\mu \, \mathbf{d}x^\mu$ (metric-dual to the normal vector to the slices $\bm n$) and $\underline{\bm u} = u_\mu \, \mathbf{d}x^\mu$ (metric-dual to the fluid $4-$velocity vector $\bm u$). We can decompose these $3-$forms as follows in the exact basis $(\mathbf{d}x^\mu) = (\mathbf{d}t,\mathbf{d}x^i)$, associated with the reference coordinates $(t,x^i)$:
\begin{align}
\star \underline{\bm n} &= \frac{1}{6} n^\mu \varepsilon_{\mu \nu \rho \sigma} \, \mathbf{d}x^\nu \wedge \mathbf{d}x^\rho \wedge \mathbf{d}x^\sigma \nonumber \\
					&= \frac{1}{6} \sqrt{h} \,\epsilon_{ijk} (\mathbf{d}x^i + N^i \mathbf{d}t) \wedge (\mathbf{d}x^j + N^j \mathbf{d}t) \wedge (\mathbf{d}x^k + N^k \mathbf{d}t)  \; ; \\
\star \underline{\bm u} &= \frac{1}{6} u^\mu \varepsilon_{\mu \nu \rho \sigma} \, \mathbf{d}x^\nu \wedge \mathbf{d}x^\rho \wedge \mathbf{d}x^\sigma \nonumber \\
					&= \frac{1}{6} \sqrt{b} \,\epsilon_{ijk} (\mathbf{d}x^i - V^i \mathbf{d}t) \wedge (\mathbf{d}x^j - V^j \mathbf{d}t) \wedge (\mathbf{d}x^k - V^k \mathbf{d}t) \; ,					
\end{align}
where $\varepsilon_{\mu \nu \rho \sigma}$ denote the components of the four-dimensional Levi-Civita tensor,
while $\epsilon_{ijk}$ is simply the Levi-Civita \textit{symbol} with $3$ (spatial) indices. We have used above the components expressions \eqref{eq:n_vec} and \eqref{eq:comp_u} of $\bm n$ and $\bm u$, respectively, and we have related the modulus $g$ of the $4-$metric determinant to the spatial determinants $h$ and $b$ \textit{via} $h = g/N^2$ (see, \textit{e.g.}, \cite[sec.~5.2.3]{gourg:foliation}, or \cite[sec. 3]{marozzi:aver2}) and $b = \gamma^2 h = (\gamma/N)^2 g$ (from Eq.~\eqref{eq:rel_volume_rates}).
Note that in the expression for $\star \underline{\bm n}$, $\mathbf{d}x^i + N^i \mathbf{d}t$ is the hypersurface-projected element $h^i_{\ \mu}\, \mathbf{d}x^\mu$, related to the rewriting of the four-dimensional line element \eqref{eq:line_elem} in all generality as $\mathrm{d}s^2 = - N^2 \mathrm{d}t^2 + h_{ij} (\mathrm{d}x^i + N^i \mathrm{d}t) (\mathrm{d}x^j + N^j \mathrm{d}t)$.

The restrictions of these $3-$forms to the tangent spaces to the spatial hypersurfaces thus read $( \star \underline{\bm n} )_{|\mathrm{\Sigma}} = \sqrt{h} \, \mathrm{d}^3 x$ and $( \star \underline{\bm u} )_{|\mathrm{\Sigma}} = \sqrt{b} \, \mathrm{d}^3 x$, respectively, since the integration element on the slices $\mathrm{d}^3 x$ previously introduced corresponds to $(1 / 6) \, \epsilon_{ijk} \, (\mathbf{d}x^i \wedge \mathbf{d}x^j \wedge \mathbf{d}x^k )_{|\mathrm{\Sigma}}$. We thus get, for any scalar $\psi$, using here again the label ${}^h$ for the extrinsic volume and averages for a clearer distinction:
\beq
\VD^h \left\langle \psi \right\rangle_\CD^h = \int_\CD \psi \, \sqrt{h} \,\mathrm{d}^3 x  = \int_\CD \psi \, \left(\star \underline{\bm n} \right)_{|\mathrm{\Sigma}}  \; , \;\; \text{and,} \;\; \VDb \baverage{\psi} = \int_\CD \psi \,\sqrt{b} \,\mathrm{d}^3 x = \int_\CD \psi \, \left(\star \underline{\bm u} \right)_{|\mathrm{\Sigma}} \; , 
\eeq
the $\psi = 1$ case giving the volumes $\VD^h$ and $\VDb$.
We thus recover under this form the extrinsic and intrinsic operators of sections~\ref{sec:extrinsic} and \ref{sec:intrinsic} as natural averages (and volumes) definitions based, respectively, on the hypersurface normal, or on the fluid $4-$velocity.

The similar roles played by $\bm n$ in the extrinsic averaging formalism and by $\bm u$ in the intrinsic formalism can alternatively be highlighted by rewriting the corresponding volumes and averages under a manifestly $4-$covariant form using a $4-$scalar window function \cite{marozzi:aver2,asta1}. The framework of Gasperini \textit{et al.} \cite{marozzi:aver2} indeed gives a manifestly covariant form of the extrinsic volume and of the associated averaging operator as (for any scalar $\psi$):
\begin{equation}
\VD^h = \int_{\CM^4} \mathfrak{W}_\CD^h \, \sqrt{g} \, \mathrm{d}^4 x \quad ; \quad \left\langle \psi \right\rangle_\CD^h = \frac{1}{\VD^h} \int_{\CM^4} \mathfrak{W}_\CD^h \, \psi \sqrt{g} \, \mathrm{d}^4 x \; ,
\label{eq:vol_avg_cov_form}
\end{equation}
respectively. Here, $\mathfrak{W}_\CD^h$ is the window function selecting the averaging domain:
\begin{equation}
\mathfrak{W}_\CD^h(x^\alpha) := n^\mu \nabla_\mu \big({\mathcal H}(A(x^\alpha) - A_0) \big) {\mathcal H}(r_0 - B(x^\alpha)) \; ,
\end{equation}
where $\mathcal{H}$ is the Heaviside step function. The scalar functions $A$, $B$ respectively define the foliation and the four-dimensional tube spanned by the domain, with the parameters $A_0$, $r_0$ selecting respectively a specific slice as $\{ x^\alpha \vert \; A(x^\alpha) = A_0 \}$ and the domain boundaries from the condition $B(x^\alpha) \leq r_0$. (In \cite{marozzi:aver2}, a hypersurface-orthogonal propagation is then set for the averaging domain; this is done by requiring $n^\mu \partial_\mu B = 0$.) As pointed out and analyzed in more detail in \cite{asta1}, the fluid volume \eqref{eq:vol_D_intrinsic} and the intrinsic averager \eqref{eq:spat_aver_intrinsic} can be written under the same form by simply replacing $\bm n$ by $\bm u$ in the window function, \textit{i.e.}, by replacing the window function $\mathfrak{W}_\CD^h$ by a window function $\mathfrak{W}_\CD^b$ in \eqref{eq:vol_avg_cov_form}, with
\begin{equation}
\mathfrak{W}_\CD^b(x^\alpha) := u^\mu \nabla_\mu \big({\mathcal H}(A(x^\alpha) - A_0) \big) {\mathcal H}(r_0 - B(x^\alpha)) \; .
\end{equation}
With the addition of the constraint $u^\mu \partial_\mu B = 0$, which defines a comoving domain propagation, this yields the same intrinsic averaging framework as that considered in our section~\ref{sec:intrinsic} above.

\section{Remark on the threading lapse}
\label{app:threading}

We here add a technical remark. In the fluid-intrinsic approach we can borrow one element from the $1+3$ formalism to foliate spacetime, the so-called \textit{spacetime threading}, although spatial volume averaging only makes sense on hypersurfaces. We recall that in the $1+3$ decomposition, and in comoving spatial coordinates, the four-dimensional line element \eqref{eq:line_elem_wLag} reads (see, \textit{e.g.}, \cite[sec. 10]{jantzen:GEM}):
\beq
{\mathrm d}s^2 = - \CM^2 \; {\mathrm d}t^2 + 2 \, \CM^2 \CM_i  \; {\mathrm d}t \, {\mathrm d}X^i + 
	( b_{ij} - \CM^2 \CM_i \CM_j) \; {\mathrm d}X^i {\mathrm d}X^j \ ,
\eeq
with $\CM$ the threading lapse and $\underline{\bm \CM}$ the threading shift, which relate to the lapse and to the Eulerian velocity dual $1-$form $\underline{\bm v}$ as follows:
\beq
\CM := \frac{N}{\gamma} \;\; ; \qquad 
\underline{\bm \CM} := \frac{\gamma^2}{N} \left( 
\underline{\bm v} + v^k v_k \, \underline{\bm n} \right) 
\; : \; 
\CM \CM_\mu = \gamma (0, v_i) = (0, u_i) \ .
\eeq
In the Lagrangian description we have: 
\beq
\CM = 1 \;\; ; \qquad 
\underline{\bm \CM} = \gamma \left( 
\underline{\bm v} + v^k v_k \, \underline{\bm n} \right) 
\; : \; \CM_\mu = \gamma (0, v_i) = (0, u_i) \ .
\eeq
Note that in the intrinsic form of the general averaged equations (see section~\ref{subsec:av_cosmo_intrinsic}), we only deal with occurrences of $N / \gamma=\CM$.

\section{Erratum}
\label{app:erratum}

We wish to point out a small mistake in Paper II \cite{buchert:av_pf} (repeated in the appendix of \cite{buchert:review} after equations (A23) and (A28) therein). For this we recall
the spatial components of the $4-$acceleration, $a_i = N_{|| i} / N$, its $4-$divergence
$\mathcal{A} := \nabla_{\mu}a^{\mu} = a^i_{|| i} + a^i a_i$, and the expression of the latter in terms of the lapse $N$ or the injection energy per fluid element $h$ (related to the relativistic enthalpy),
\[
\mathcal{A} = \left(\frac{N^{|| i}}{N}\right)_{|| i} + \frac{N^{||i} N_{||i}}{N^2} =  \frac{N^{|| i}_{\ \ || i}}{N} =  h \left( \frac{1}{h} \right)^{||i}_{\ \ ||i} = - \frac{h^{||i}_{\ \ ||i}}{h} + 2 \frac{h^{||i} h_{||i}}{h^2}\;\;,
\]
which are correctly written (for such a fluid-orthogonal framework). However, the first equality in equation (10a) of Paper II (and of its review in \cite{buchert:review}) is incorrect, ${\cal A} \ne \left(N^{|| i} /N \right)_{|| i}$, due to an omission of the $a^i a_i$ contribution to $\mathcal{A}$ here. 

There is also an imprecise statement: in Paper II, in footnote 3, it is stated that for scalars the operator $||$ amounts to a partial derivative. This statement is only true for spatial components (for a scalar, ${ }_{||i} = \partial_i$, but ${ }_{||0} {} = N^i \partial_i \neq \partial_t$; ${ }_{||0}$ was identically zero for scalars in Paper~II due to the vanishing shift). 

\clearpage
%%%%% Bibliography

% BibTeX users please use one of
%\bibliographystyle{spbasic}      % basic style, author-year citations
%\bibliographystyle{spmpsci}      % mathematics and physical sciences
%\bibliographystyle{spphys}       % APS-like style for physics
%\bibliography{References}   % name your BibTeX data base

% Non-BibTeX users please use

%%%%% END OF DOCUMENT
\end{document}